\documentclass[12pt]{article}
\pdfoutput=1
\usepackage{amssymb,amsmath,amsthm}
\usepackage{graphicx, xcolor, varwidth}
\usepackage[percent]{overpic}
\usepackage{mathtools}
\usepackage{setspace}
\usepackage{cite}
\usepackage[colorlinks=true, allcolors=darkblue, urlcolor=darkblue, linktocpage=true, linkcolor=darkblue, citecolor=purple]{hyperref}
\usepackage{tikz,tkz-graph,tkz-berge}
\usepackage{float}
\usepackage[FIGTOPCAP]{subfigure}

\usepackage{pgf,pgfplots}
\pgfplotsset{compat=1.14}
\usepackage{mathrsfs}
\usetikzlibrary{arrows}
\definecolor{rvwvcq}{rgb}{0.08235294117647059,0.396078431372549,0.7529411764705882}

\textheight 22.4cm
\textwidth 15.5cm
   \topmargin -1cm
\oddsidemargin 5mm
\evensidemargin 5mm

\colorlet{darkblue}{blue!70!black}
\colorlet{darkgreen}{green!70!black}

\newcommand{\mrm}{\mathrm}

\newcommand{\pd}{\partial}
\newcommand{\lb}{\left(}
\newcommand{\rb}{\right)}
\newcommand{\lcb}{\left\{}

\newcommand{\LL}{\mathcal{L}}

\newcommand{\SSS}{\mathcal{S}}

\newcommand{\rcb}{\right\}}
\newcommand{\lsb}{\left[}
\newcommand{\rsb}{\right]}

\newcommand\be{\begin{equation}}
\newcommand\ee{\end{equation}}
\newcommand\ba{\begin{eqnarray}}
\newcommand\ea{\end{eqnarray}}

\def\bs#1{{\color{red}#1}}

\numberwithin{equation}{section}

\newcommand{\arxiv}[1]
  {\href{http://arxiv.org/abs/#1}{arXiv:#1}}

\newtheorem{conj}{Conjecture}
\newtheorem{defn}{Definition}

\newtheorem{theorem}{Theorem}
\newtheorem{corollary}{Corollary}
\newtheorem{lemma}{Lemma}
\newtheorem{remark}{Remark}

\renewcommand{\ij}{{\langle i j \rangle}}
\newcommand{\ipj}{{\langle i j \rangle}}

\newcommand{\ipk}{{\langle i k \rangle}}

\newcommand{\jpv}{{\langle j v \rangle}}
\newcommand{\upv}{{\langle u v \rangle}}
\newcommand{\jpk}{{\langle j k \rangle}}

\newcommand{\vpi}{{\langle v i \rangle}}
\newcommand{\iip}{{\langle i i' \rangle}}
\newcommand{\jjp}{{\langle j j' \rangle}}

\usepackage{cleveref}

\graphicspath{ {./images/} }

\begin{document}

\onehalfspacing
\begin{titlepage}

\ \\
\vspace{-2.3cm}
\begin{center}

\begin{spacing}{2.3}
{\LARGE{Bounds on the Ricci curvature and solutions to the Einstein equations for weighted graphs}}
\end{spacing}

\vspace{0.5cm}
An Huang,$^{1}$ Bogdan Stoica,$^{2}$ Xuyang Xia,$^{3}$ and Xiao Zhong$^{1}$

\vspace{5mm}

{\small

\textit{
$^1$Department of Mathematics, Brandeis University, Waltham, MA 02453, USA}\\

\vspace{2mm}

\textit{
$^2$Department of Physics \& Astronomy, Northwestern University, Evanston, IL 60208, USA}\\

\vspace{2mm}

\textit{
$^3$University of Chicago Booth School of Business, Chicago, IL 60637, USA}\\

\vspace{4mm}

{\tt anhuang@brandeis.edu, bstoica@northwestern.edu, xuyang.xia@chicagobooth.edu, xzhong@brandeis.edu}

\vspace{0.3cm}
}

\end{center}

\begin{abstract}
\noindent This is a preliminary study of the equation of motion of Euclidean classical gravity on a graph \cite{Gubser:2016htz}, based on the Lin-Lu-Yau Ricci curvature on graphs \cite{LLY}. We observe that the constant edge weights configuration gives the unique solution on an infinite tree w.r.t. the asymptotically constant boundary condition. We study the minimum and maximum of the action w.r.t. certain boundary conditions, on several types of graphs of interest. We also exhibit a new class of solutions to the equations of motion on the infinite regular tree.
\end{abstract}

\vfill
\noindent nuhep-th/20-04

\end{titlepage}

\pdfbookmark[1]{\contentsname}{toc}
\setcounter{tocdepth}{3}
\tableofcontents 
\onehalfspacing
\clearpage

\section{Introduction}

\noindent Einstein's theory of General Relativity describes gravity as the curvature of a smooth manifold. Although physically and historically motivated by the need to describe gravity beyond the Newtonian limit on planetary scales and beyond, General Relativity can be applied abstractly to any Riemannian or pseudo-Riemannian manifold which is endowed with a notion of curvature. This is done by constructing an action out of the curvature, and then applying the action principle to take variation and obtain the equations of motion. The matter ``living'' on the manifold is encoded by the stress-energy tensor, which enters the action in a canonical way. The equations of motion for the manifold, known as the Einstein equations, are usually cast as the equations of motion for the manifold's metric.

In the present paper we will be concerned with a discrete version of Ricci curvature and Einstein gravitational theory. This ``discrete gravitational theory'' was first introduced in \cite{Gubser:2016htz}, building on the work of \cite{LLY} and older results \cite{bakryemery,ChungYau,Ollivier1,Ollivier2,LinYau}. In certain aspects it is most similar to two-dimensional Euclidean gravity \cite{Stoica:2018zmi}. The modern motivation for considering such discrete theories of gravity is $p$-adic, and thus they should not be thought of as naive discretizations of a continuum theory. Rather, these discrete theories of gravity are relevant for reconstructing Archimedean gravity along more complicated number-theoretic procedures \cite{Stoica:2018zmi,Huang:2020vwx}. Furthermore, in the realm of $p$-adic theories it is possible to write down sigma-models \cite{Huang:2019nog} that are analogs of Archimedean bosonic string theory; since Archimedean gravity arises from Archimedean string theory, one should expect the $p$-adic models of gravity to have a strong connection to $p$-adic string theory.

In order to reconstruct Archimedean quantities out of $p$-adic objects, it is necessary for extra adelic structure to be present across the different places, which gives the different places a kind of ``coherence.'' This structure constrains the kind of graphs one can have at the finite places, and the Archimedean space being reconstructed. For instance, in one example of this reconstruction procedure \cite{Stoica:2018zmi}, the Bruhat-Tits trees $T_p$ for all primes are paired with a hyperbolic space at the Archimedean place.

The results in this paper are independent of any kind of adelic structure, and as such we will be agnostic as to whether the various types of graphs we will discuss can be organized adelically. While for some types of graphs (such as infinite trees) this is known to be possible, our results do not depend on it, as the Lin-Lu-Yau curvature and the action derived from it apply generally to any graph. 

Let's now schematically explain the general idea behind defining Ricci curvature on graphs. In standard physics textbooks curvature is often introduced by considering the parallel transport of a vector, around a small loop in curved spacetime. As in the case of graphs there is no good notion of parallel transport, an alternative point of view must be considered. This point of view considers the transportation distance between two probability distributions sharply peaked at two points, that in the graph case are two vertices. In the Riemannian manifold case this transportation distance is sensitive to the usual Ricci curvature, and by demanding that curvature enters the transportation distance in the same way on graphs, a notion of Ricci curvature on graphs can be defined. Out of the various related curvatures that can be constructed in this way, in this paper we will only consider the Lin-Lu-Yau curvature \cite{LLY}.

Other papers which have explored related ideas are \cite{BauerLiu,ChungLinYau,Smith2014,Yamada2016,BHLY} (see also the survey article \cite{Olliviersurvey}).

\subsection{Results and outline}

We now give an outline of our paper and results. In Section \ref{secbackg} we review the construction of the Lin-Lu-Yau curvature and of associated quantities, as well as some of the properties that the curvature obeys. In Section \ref{sec333} we will prove that the constant edge length setting for a tree graph is the unique solution to the equations of motion, with a constant boundary condition. We will furthermore show that, for certain boundary conditions, there exists no solution to the equations of motion on tree graphs. Section~\ref{secbounds} is devoted to establishing bounds on the action. We will argue that, for tree graphs and hexagonal lattices, the minimum action of a finite region, subject to certain boundary conditions, is achieved by the constant edge length setting. We will furthermore show that the maximum action for complete graphs and trees is obtained by a perfect matching edge length setting, again subject to boundary conditions. We will also exhibit an upper bound on the action, for arbitrary graphs. Together with the lower bound on the action worked out in \cite{BHLY}, these two bounds restrict the range of the action for arbitrary finite graphs. In Section \ref{sec555} we will give some more general solutions to the equations of motion, in the case where no boundary condition is imposed. 

\subsection{Acknowledgments}
We thank Linyuan Lu for pointing out the role of perfect matchings in extremizing the action. This work was supported in part by a grant from the Brandeis University Provost Office. The work of A.H. was supported in part by a grant from the Simons Foundation in Homological Mirror Symmetry. The work of A. H., B. S., X.Y. X. and X. Z. was supported in part by a grant “physics from the primes” from Brandeis University Provost Office. B. S. was supported in part by the U.S. Department of Energy under grant DE-SC-0009987, and by the Simons Foundation through the It from Qubit Simons Collaboration on Quantum Fields, Gravity and Information. B. S. thanks Northwestern University Amplitudes and Insight Group and Weinberg College for support.

\section{Background and definitions}
\label{secbackg}

In this section, we introduce the notions of geodesic path length, Wasserstein cost, curvature of edges and action for graphs. These definitions are consistent with previous works \cite{Gubser:2016htz,LLY}.

Our setup is as follows. For any graph $G$, let $V(G)$ denote the set of vertices and $E(G)$ the set of edges, We assume throughout the paper that $G$ is a connected graph with no parallel edges. Furthermore, in order to discuss curvature, it is necessary to equip the edges with ``weights,'' or ``lengths,'' that is we consider a length function
\be
\ell: E(G) \to \mathbb{R}_{\geq0}.
\ee
We denote vertices in $V(G)$ by lowercase letters $i,j,k\dots$, and the notation $i\sim j$ indicates that $i$ and $j$ are neighbors in the graph, that is there exists an edge between them. We denote the edge between $i$ and $j$ by $\ipj$.

\begin{defn}
Let $\ell:E(G)\to \mathbb{R}_{\geq0}$ be a length function. The geodesic edge length $P_\ipj$ is defined as the length of the shortest path from $i$ to $j$, that is
\be
P_\ipj \coloneqq \min \sum_\alpha \ell\lb e_{\alpha} \rb,
\ee
where the edges $\{e_{\alpha}\}$ form a path from $i$ to $j$.
\end{defn}

To define the Wasserstein cost we first have to introduce a certain probability distribution on the vertices of the graph. 
\begin{defn}
\label{hereisdef2}
	For $i\in V(G)$ and $t\in\mathbb{R}$, $0<t<1$ a parameter, we introduce the probability distribution $D_{t,i}:V(G)\to~\mathbb{R}_{\geq0}$~as 
	\begin{align}
	\label{eqdefDti}
D_{t,i}(i')\coloneqq\begin{cases}
		\frac{P^{-2}_\iip}{d_i}t & \text{if}\quad i' \sim i \\
		1 - t  & \text{if}\quad i'=i \\
		0 & \text{all other vertices}
	\end{cases},
	\end{align}
where we have used the notation
	\begin{align}
        \label{hereiscdi}
		d_i &\coloneqq \sum_{i' \sim i} \frac{1}{P^2_\iip}, \\
	    \label{hereiscdi2}
	    c_i &\coloneqq \sum_{i' \sim i} \frac{1}{P_\iip}.
	\end{align}
\end{defn}

We will sometimes write $D_{i}$ instead of $D_{t,i}$ when no confusion can arise.

\begin{remark}
Definition \ref{hereisdef2} is a generalization of the probability distribution introduced in \cite{LLY}, and a cousin of the probability distribution defined in \cite{Gubser:2016htz}. In that paper, $P_\iip$ was the length of edge $\iip$. In our case, $P_\iip$ is the geodesic edge length between vertices $i$ and $i'$, which can be equal to, or less than, the edge length of $\iip$. Our definition coincides with that in \cite{Gubser:2016htz} when the graph is a tree.
\end{remark}

Now we are in a position to introduce the Wasserstein or transportation cost.

\begin{defn}
	The neighbor transportation cost of a certain amount of probability~$q$, from vertex~$i$ to a neighboring vertex $j$, is defined as $q P_\ipj$. The Wasserstein or transportation cost between two probability distributions $D_{t,i}$, $D_{t,j}$, with $i$ and $j$ adjacent vertices, is defined as the minimum sum of neighbor transportation costs, such that probability distribution $D_{t,i}$ becomes $D_{t,j}$. We denote this cost as $W_\ipj(t)$.
\end{defn}

Having the definition of the Wasserstein cost, we can proceed to define the curvature of an edge.
\begin{defn}
	The curvature $K_\ipj$ of an edge $\ipj$ on a graph is given by 
	\be 
	\label{eq28}
	K_\ipj \coloneqq \lim_{t \to 0} \frac{1}{t} \lb 1 - \frac{W_\ipj(t)}{P_\ipj} \rb,
	\ee
	and we also introduce
	\be
	\label{eq28Kt}
	K_\ipj(t) \coloneqq  1 - \frac{W_\ipj(t)}{P_\ipj}.
	\ee
\end{defn}

The curvature in Eq. \eqref{eq28} is well-defined, in that the limit exists. This is encapsulated by the following two lemmas. 
    
    \begin{lemma}
        \label{lemma1}
        Curvature $K_\ipj(t)$ is concave in $t \in [0,1]$.
    \end{lemma}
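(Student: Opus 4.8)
The plan is to show concavity of $t\mapsto K_\ipj(t)=1-W_\ipj(t)/P_\ipj$ by establishing that $t\mapsto W_\ipj(t)$ is convex, since $P_\ipj>0$ is a fixed positive constant and affine-minus-convex is concave. The key observation is that the Wasserstein cost is the optimal value of a linear program whose right-hand-side data depends affinely on $t$: the transportation plan is a vector of nonnegative numbers $\pi_{ab}$ (mass moved from vertex $a$ to neighboring vertex $b$), the objective $\sum \pi_{ab} P_{ab}$ is linear in $\pi$ and independent of $t$, and the marginal constraints read $\sum_b \pi_{ab} - \sum_b \pi_{ba} = D_{t,i}(a) - D_{t,j}(a)$ for every vertex $a$. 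From the explicit formula \eqref{eqdefDti}, each $D_{t,i}(a)$ is an affine function of $t$ (it is $(1-t)$, or $t\,P^{-2}/d$, or $0$), hence so is the difference $D_{t,i}(a)-D_{t,j}(a)$.

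First I would set this up carefully as a parametric linear program: minimize $c^\top \pi$ subject to $A\pi = b(t)$, $\pi\ge 0$, where $A$ is the (fixed) vertex–arc incidence-type matrix of the relevant finite subgraph, $c$ is the fixed vector of geodesic lengths $P_{ab}$, and $b(t)=b_0+t\,b_1$ is affine in $t$. Second, I would invoke the standard fact that the optimal value of such a program, as a function of the right-hand side $b$, is convex on the set of $b$ for which it is finite (equivalently, one can argue directly: given optimal plans $\pi_0$ at $t_0$ and $\pi_1$ at $t_1$, the convex combination $\lambda\pi_0+(1-\lambda)\pi_1$ is feasible for $b(\lambda t_0+(1-\lambda)t_1)$ by linearity of the constraints, so $W(\lambda t_0+(1-\lambda)t_1)\le \lambda W(t_0)+(1-\lambda)W(t_1)$). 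One must check feasibility is never an issue on $[0,1]$: the distributions $D_{t,i}$ and $D_{t,j}$ are genuine probability distributions for $t\in[0,1]$, so a feasible transportation plan always exists (e.g. route everything through a spanning tree), and the cost is finite; thus $W_\ipj(t)$ is a real-valued convex function on $[0,1]$, and $K_\ipj(t)$ is concave there.

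The main obstacle, such as it is, is bookkeeping rather than conceptual: one should make sure the convex combination of transportation plans is itself an admissible plan in the sense of the definition (a sum of neighbor-to-neighbor moves whose net effect converts $D_{t,i}$ into $D_{t,j}$), which is immediate because admissible plans form a polytope cut out by the linear marginal constraints and nonnegativity, and this polytope varies affinely with $t$. A secondary point worth a sentence is that one may restrict attention to plans supported on a finite set of vertices and edges — those within bounded geodesic distance of $i$ and $j$ — so there is no subtlety about infinite sums even when $G$ is infinite; the supports of $D_{t,i}$ and $D_{t,j}$ are finite (nearest neighbors of $i$ and of $j$), and an optimal plan need never move mass outside a bounded neighborhood. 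With these remarks in place the concavity of $K_\ipj(t)$ follows directly.
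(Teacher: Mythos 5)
Your proof is correct and follows essentially the same route as the paper (which defers to Lin--Lu--Yau): since the distributions $D_{t,i}$, $D_{t,j}$ are affine in $t$, a convex combination of optimal transportation plans at two parameter values is feasible for the interpolated value, so $W_\ipj(t)$ is convex and $K_\ipj(t)=1-W_\ipj(t)/P_\ipj$ is concave. The parametric-LP packaging and the finiteness/feasibility remarks are just a more explicit rendering of that same argument.
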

    
    Lemma~\ref{lemma1} has been proven in \cite{LLY}, so we omit the proof.

    Lemma \ref{lemma2} below is an extension of Lemma 2.2 in \cite{LLY}. 
    \begin{lemma}
    \label{lemma2}
        For any $t\in [0,1]$, any two vertices $i,j$, we have
        \be
        K_\ipj(t)  \leq \frac{t}{P_\ipj}\lb\frac{c_i}{d_i} + \frac{c_j}{d_j}\rb.
        \ee
    \end{lemma}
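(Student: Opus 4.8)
The plan is to reduce Lemma~\ref{lemma2} to a lower bound on the Wasserstein cost $W_\ipj(t)$, and to obtain that bound by testing the transport against a single Lipschitz ``potential'' function --- the elementary half of Kantorovich--Rubinstein duality. Write $d(u,v)$ for the geodesic distance between two vertices; on an edge this is exactly $P_\upv$, so an elementary move of mass $q$ along $\upv$ costs $qP_\upv=q\,d(u,v)$. Suppose $f:V(G)\to\mathbb{R}$ satisfies $|f(u)-f(v)|\le P_\upv$ for every edge $\upv$. Along any admissible sequence of neighbour moves turning $D_{t,i}$ into $D_{t,j}$, track the potential $\Phi\coloneqq\sum_v f(v)\mu(v)$ of the running distribution $\mu$: a move of mass $q$ from $u$ to $v$ changes it by $-q\bigl(f(u)-f(v)\bigr)$ while costing $qP_\upv\ge q\bigl(f(u)-f(v)\bigr)$. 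Summing over all moves, $W_\ipj(t)\ge\Phi_{\mathrm{init}}-\Phi_{\mathrm{final}}=\sum_v f(v)\bigl(D_{t,i}(v)-D_{t,j}(v)\bigr)$, so it suffices to exhibit one good $f$.

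I would take $f(v)\coloneqq d(v,j)$, which is $1$-Lipschitz in the required sense by the triangle inequality. Two short computations then finish the argument. In $\sum_v f(v)D_{t,j}(v)$ the atom at $j$ contributes $0$ (since $f(j)=0$), and each neighbour $j'\sim j$ contributes $d(j',j)\,\frac{P_\jjp^{-2}}{d_j}t=\frac{t}{d_j P_\jjp}$, giving $\sum_v f(v)D_{t,j}(v)=\frac{t\,c_j}{d_j}$ exactly. In $\sum_v f(v)D_{t,i}(v)$ the atom at $i$ contributes $(1-t)P_\ipj$ because $f(i)=d(i,j)=P_\ipj$, while for each neighbour $i'\sim i$ the triangle inequality gives $f(i')=d(i',j)\ge P_\ipj-P_\iip$, so the neighbour part is at least $\frac{t}{d_i}\bigl(P_\ipj\,d_i-c_i\bigr)=tP_\ipj-\frac{t\,c_i}{d_i}$; adding the two pieces, $\sum_v f(v)D_{t,i}(v)\ge P_\ipj-\frac{t\,c_i}{d_i}$.

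Combining the estimates gives $W_\ipj(t)\ge P_\ipj-t\bigl(\frac{c_i}{d_i}+\frac{c_j}{d_j}\bigr)$, and dividing by $P_\ipj$ together with $K_\ipj(t)=1-W_\ipj(t)/P_\ipj$ yields the stated bound. I do not expect a real obstacle: this mirrors the proof of Lemma~2.2 in \cite{LLY}, with the combinatorial vertex degree there replaced by the weighted quantity $d_i$ and edge lengths replaced by geodesic lengths, and only the easy inequality of Kantorovich duality is used. The one place that needs care is exactly that replacement: one must verify that the transport cost of the weighted graph is governed by the \emph{geodesic} metric --- here $P_\iip$ can be strictly smaller than $\ell(\iip)$ --- so that $f=d(\cdot,j)$ is legitimately $1$-Lipschitz against the cost and so that $d$ agrees with $P$ on edges, making the two sums above evaluate as claimed.
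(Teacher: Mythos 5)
Your proof is correct, and it takes a recognizably different (though closely related) route from the paper's. The paper lower-bounds $W_\ipj(t)$ via the triangle inequality for the Wasserstein metric, $W(\Delta_i,\Delta_j)\le W(\Delta_i,D_{t,i})+W(D_{t,i},D_{t,j})+W(D_{t,j},\Delta_j)$, evaluating the two outer terms with the obvious radial transport plans (each equals $t\,c_i/d_i$, resp.\ $t\,c_j/d_j$) and taking $W(\Delta_i,\Delta_j)=P_\ipj$ for granted. You instead invoke the easy half of Kantorovich--Rubinstein duality with the single test function $f=d(\cdot,j)$, and your two evaluations $\sum_v f(v)D_{t,j}(v)=t\,c_j/d_j$ and $\sum_v f(v)D_{t,i}(v)\ge P_\ipj-t\,c_i/d_i$ are exactly right (the neighbour estimate $d(i',j)\ge P_\ipj-P_\iip$ is a valid lower bound even when its right-hand side is negative). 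The two arguments land on the identical inequality $W_\ipj(t)\ge P_\ipj-t\left(\frac{c_i}{d_i}+\frac{c_j}{d_j}\right)$. What your version buys is self-containedness: the lower bound $W(\Delta_i,\Delta_j)\ge P_\ipj$ and the metric triangle inequality for $W$, which the paper uses without comment, are absorbed into the one Lipschitz-potential computation, and your closing caveat --- that $f$ must be $1$-Lipschitz with respect to the \emph{geodesic} lengths $P_\upv$ rather than the raw edge lengths $\ell(\upv)$ --- is precisely the point that needs checking in this generalization of Lemma~2.2 of \cite{LLY}, and it does check out since $P_\upv=d(u,v)$ for adjacent $u,v$ by definition. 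The paper's route is marginally shorter if one grants the standard properties of $W$; yours is the more robust template if one wants everything reduced to the definition of the transport cost as a sum of neighbour moves.
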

    \begin{proof}
        Denote by $W(D_1 , D_2)$ the transportation cost between two distributions $D_{1,2}$. Define a delta distribution at vertex $i$ as
        \be
        \Delta_i(k) =\begin{cases}
        1 \quad \mrm{if}\ k = i\\
        0 \quad \mrm{otherwise}
        \end{cases}.
        \ee
        We have 
        \be
        W(\Delta_i,\Delta_j) \leq W(\Delta_i,D_{t,i}) + W(D_{t,i},D_{t,j}) + W(D_{t,j},\Delta_j),
        \ee
        so that 
        \ba
        W(D_{t,i},D_{j,t}) \geq W(\Delta_i , \Delta_j) - W(\Delta_i,D_{t,i}) - W(D_{t,j},\Delta_j)
        \\ = P_\ipj - \sum_{k \sim i} t \frac{P^{-2}_\ipk}{d_i}P_\ipk - \sum_{k \sim j} t \frac{P^{-2}_\jpk}{d_j}P_\jpk.
        \ea
        Thus we have shown
        \be
        K_\ipj(t)  \leq \frac{t}{P_\ipj}\lb\frac{c_i}{d_i} + \frac{c_j}{d_j}\rb.
        \ee

    \end{proof}
    From Lemma \ref{lemma2}, the curvature $K_\ipj(t)$ has an upper bound. Together with Lemma \ref{lemma1}, we obtain that
    \be
    K_\ipj \coloneqq \lim_{t \to 0} \frac{1}{t} \lb 1 - \frac{W_\ipj(t)}{P_\ipj} \rb
    \ee
    exists.
    
Furthermore, Lemma \ref{lemma2} implies that the action defined by summing curvature $K_\ipj$ on the edges of a graph is bounded from above (see Section \ref{secgenmaxbound}).
\begin{defn}
	The action of a graph is given by 
	\be
	\label{Sis}
	S \coloneqq \sum_{\ipj \in E\lb G \rb} K_\ipj,
	\ee 
	where $E(G)$ is the set of all the edges in graph $G$. 
\end{defn}
Definition \eqref{Sis} is motivated from General Relativity, where the action of a manifold is the integral of the Ricci scalar. We next introduce the equation of motion on graphs, also adopted from physics. 
\begin{defn}
	The equation of motion (EoM) for an edge $\ipj$ is defined as
	\be 
	\frac{\delta S}{\delta P_\ipj} = 0.
	\ee 
\end{defn}
\begin{defn}
For a graph $G$, a solution to the EoM is an edge length setting given by the function $\ell\lb\ipj\rb$ for all $\ipj\in E(G)$, such that the EoM holds for every edge in the graph.
\end{defn}
\begin{remark}
The equation of motion is ill-defined on a graph if the action is not differentiable.
\end{remark}

\begin{remark}
In cases when the equation of motion is well-defined, it may not have an explicit formula.
\end{remark}

\begin{defn}
The constant edge length setting for a graph is the edge length setting that lets all the edges in the graph have the same edge length, i.e. $\ell\lb e_1\rb=\ell\lb e_2\rb$ for all $e_{1,2}\in E(G)$.
\end{defn}

Finally, let's introduce the boundary of a region of a graph, and boundary conditions. These notions are also motivated from physics, by analogy with the Einstein equations on Riemannian manifolds in Euclidean signature. 

For the two definitions below, consider a connected graph $G$, which could be infinite or finite.

\begin{defn}
	The finite region $\Sigma$ of a (possibly infinite) graph $G$ is defined as a connected subgraph of $G$, with $V(\Sigma)$ and $E(\Sigma)$ of finite cardinality.
\end{defn}

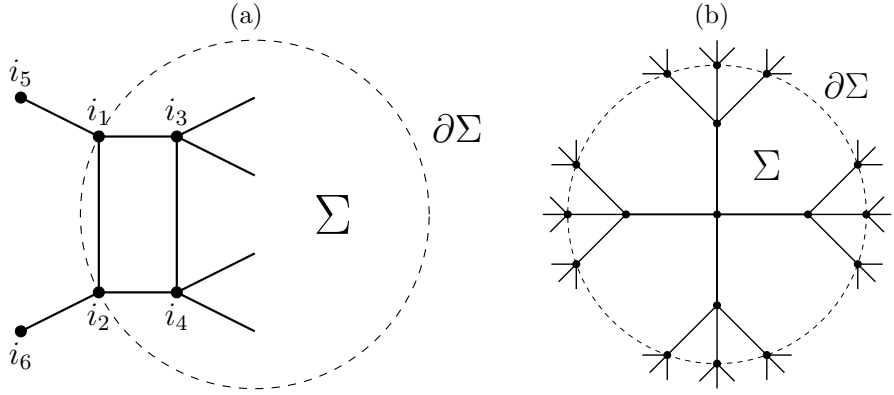
\begin{figure}[htbp!]
    \label{figsigmapdsigma}
    \centering

    \subfigure[]
    {
	\resizebox{!}{0.3\textwidth}{%
	\begin{tikzpicture}[scale=1]
	
	\filldraw[black] (0,0) circle (2pt) ;
	\filldraw[black] (1,0) circle (2pt) ;
	\filldraw[black] (0,2) circle (2pt) ;
	\filldraw[black] (1,2) circle (2pt) ;
	\filldraw[black] (-1,2.5) circle (2pt) ;
	\filldraw[black] (-1,-0.5) circle (2pt) ;
	
	\draw[black,thick] (-1,2.5) -- (0,2)--(1,2)--(1,0)--(0,0)--(0,2);
	\draw[black,thick] (-1,-0.5) --(0,0);
	
	\draw[black,thick] (1,2) --(2,2.5);
	\draw[black,thick] (1,2) --(2,1.5);
	\draw[black,thick] (1,0) --(2,0.5);
	\draw[black,thick] (1,0) --(2,-0.5);

	\draw [dashed] (2,1) circle (2.236);
	
	\filldraw[black] (3,1) circle (0pt) node[anchor=center] {\LARGE $\Sigma$};
	\filldraw[black] (4.6,2.1) circle (0pt) node[anchor=center] {\large $\partial \Sigma$};
	
	\filldraw[black] (0,0) circle (0pt) node[anchor=north] {\normalsize $i_2$};
	\filldraw[black] (0,2) circle (0pt) node[anchor=south] {\normalsize $i_1$};
	\filldraw[black] (-1,-0.5) circle (0pt) node[anchor=north] {\normalsize $i_6$};
    \filldraw[black] (-1,2.5) circle (0pt) node[anchor=south] {\normalsize $i_5$};
	\filldraw[black] (1,0) circle (0pt) node[anchor=north] {\normalsize $i_4$};
	\filldraw[black] (1,2) circle (0pt) node[anchor=south] {\normalsize $i_3$};
	
	\end{tikzpicture}
	}
	}
	\subfigure[]  
    {
	\resizebox{!}{0.3\textwidth}{
	\begin{tikzpicture}[]
	
	\filldraw[black] (0,0) circle (2pt) ;
	\filldraw[black] (1.828,0) circle (2pt) ;
	\filldraw[black] (-1.828,0) circle (2pt) ;
	\filldraw[black] (0,1.828) circle (2pt) ;
	\filldraw[black] (0,-1.828) circle (2pt) ;
	
	\filldraw[black] (3,0) circle (2pt) ;
	\filldraw[black] (-3,0) circle (2pt) ;
	\filldraw[black] (0,3) circle (2pt) ;
	\filldraw[black] (0,-3) circle (2pt) ;
	\filldraw[black] (2.828,1) circle (2pt) ;
	\filldraw[black] (-2.828,1) circle (2pt) ;
	\filldraw[black] (2.828,-1) circle (2pt) ;
	\filldraw[black] (-2.828,-1) circle (2pt) ;
	\filldraw[black] (1,2.828) circle (2pt) ;
	\filldraw[black] (1,-2.828) circle (2pt) ;
	\filldraw[black] (-1,2.828) circle (2pt) ;
	\filldraw[black] (-1,-2.828) circle (2pt) ;
	
	\draw[black,thick] (-3.5,0) -- (3.5,0);
	\draw[black,thick] (0,-3.5) -- (0,3.5);
	
	\draw[black,thick] (-2.828,1) -- (-1.828,0)--(1.828,0)--(2.828,1);
	\draw[black,thick] (-2.828,-1) -- (-1.828,0)--(1.828,0)--(2.828,-1);
	
	\draw[black,thick] (1,-2.828) -- (0,-1.828)--(0,1.828)--(1,2.828);
	\draw[black,thick] (-1,-2.828) -- (0,-1.828)--(0,1.828)--(-1,2.828);
	
	\draw[black,thick] (2.828,1) -- (3.328,1);
	\draw[black,thick] (2.828,1) -- (2.828,1.5);
	\draw[black,thick] (2.828,1) -- (3.182,1.354);
	
	\draw[black,thick] (-2.828,1) -- (-3.328,1);
	\draw[black,thick] (-2.828,1) -- (-2.828,1.5);
	\draw[black,thick] (-2.828,1) -- (-3.182,1.354);
	
	\draw[black,thick] (2.828,-1) -- (3.328,-1);
	\draw[black,thick] (2.828,-1) -- (2.828,-1.5);
	\draw[black,thick] (2.828,-1) -- (3.182,-1.354);
	
	\draw[black,thick] (-2.828,-1) -- (-3.328,-1);
	\draw[black,thick] (-2.828,-1) -- (-2.828,-1.5);
	\draw[black,thick] (-2.828,-1) -- (-3.182,-1.354);
	
	\draw[black,thick] (1,2.828) -- (1,3.328);
	\draw[black,thick] (1,2.828) -- (1.5,2.828);
	\draw[black,thick] (1,2.828) -- (1.354,3.182);
	
	\draw[black,thick] (-1,2.828) -- (-1,3.328);
	\draw[black,thick] (-1,2.828) -- (-1.5,2.828);
	\draw[black,thick] (-1,2.828) -- (-1.354,3.182);
	
	\draw[black,thick] (-1,-2.828) -- (-1,-3.328);
	\draw[black,thick] (-1,-2.828) -- (-1.5,-2.828);
	\draw[black,thick] (-1,-2.828) -- (-1.354,-3.182);
	
	\draw[black,thick] (1,-2.828) -- (1,-3.328);
	\draw[black,thick] (1,-2.828) -- (1.5,-2.828);
	\draw[black,thick] (1,-2.828) -- (1.354,-3.182);
	
	\draw[black,thick] (3,0) -- (3.354,0.354);
	\draw[black,thick] (3,0) -- (3.354,-0.354);
	
	\draw[black,thick] (-3,0) -- (-3.354,0.354);
	\draw[black,thick] (-3,0) -- (-3.354,-0.354);
	
	\draw[black,thick] (0,3) -- (0.354,3.354);
	\draw[black,thick] (0,3) -- (-0.354,3.354);
	
	\draw[black,thick] (0,-3) -- (0.354,-3.354);
	\draw[black,thick] (0,-3) -- (-0.354,-3.354);
	
	\draw [dashed] (0,0) circle (3);
	
	\filldraw[black] (1,1) circle (0pt) node[anchor=center] {\huge $\Sigma$};
	\filldraw[black] (2.6,2.6) circle (0pt) node[anchor=center] {\LARGE $\partial \Sigma$};

	\end{tikzpicture}
	
	}
	}
	\caption{Examples of regions $\Sigma$ and boundaries $\pd\Sigma$, according to Definition \ref{defnsigmadsigma}. The dashed circles represent the boundary $\pd\Sigma$. In the left panel, we have $i_{1,2,3,4}\in V\lb\Sigma\rb$, $i_{1,2}\in V\lb \pd\Sigma \rb$, $\langle i_1 i_2 \rangle, \langle i_1 i_3 \rangle, \langle i_2 i_4 \rangle, \langle i_3 i_4 \rangle \in E\lb \Sigma \rb$,  and $\langle i_1 i_2 \rangle, \langle i_1 i_5 \rangle, \langle i_2 i_6 \rangle\in E\lb \pd\Sigma \rb$. In the right panel, for each vertex $i\in V\lb \pd\Sigma \rb$, there is an unique edge from $\pd\Sigma$ to the rest of $\Sigma$.}
\end{figure}

\begin{defn}
\label{defnsigmadsigma}
	The boundary of a finite region $\Sigma$ of a graph $G$ is a pair of sets $(V(\partial \Sigma) , E(\partial \Sigma)$), such that  $V(\partial \Sigma)\subseteq V(\Sigma)$, and all vertices $v\in V\lb \Sigma \rb$ with at least one neighbor not in $V(\Sigma)$ are in $V\lb \pd\Sigma \rb$. $E(\partial \Sigma)$ is the set of edges with both endpoints in $V(\partial\Sigma)$ or one endpoint in $V(\pd\Sigma)$ and one in $V\lb G \rb-V\lb \Sigma \rb$.
\end{defn}

We informally denote the boundary pair $\lb V(\partial \Sigma) , E(\partial \Sigma) \rb$ as $\partial \Sigma$. Note that $\pd\Sigma$ is not a subgraph of $G$. Two examples of finite regions $\Sigma$ and boundaries $\pd \Sigma$ are given in Figure \ref{figsigmapdsigma}.

\section{Solutions to the equations of motion in the presence of boundary conditions}
\label{sec333}

In this section we will present results on the equations of motion when the graph $G$ is a tree. In this case, the distances $P_\ipj$ are the same as the edge lengths $\ell\lb \ipj \rb$, and the expressions derived in paper \cite{Gubser:2016htz} apply.

\subsection{Solution to the equation of motion with constant boundary condition}

We now discuss the solution to the EoM in a finite region of an infinite tree of uniform degree, with constant boundary value condition. We consider every vertex in the tree to have degree $q+1$, and we denote such a tree by $T_q$.

Let's first introduce the constant boundary value condition.
\begin{defn}[Constant boundary value condition]
	\label{cbd}
	For a finite region $\Sigma$ of $T_q$ with boundary $\pd\Sigma$, we define the constant boundary value condition as $\ell\lb \ipj \rb=c$, for all $\ipj \in E (\partial \Sigma)$ and a constant $c>0$. 
\end{defn}
Since for trees the action $S$ can be expressed explicitly and is differentiable, it is possible to obtain the expression for the equation of motion by direct computation (see~\cite{Gubser:2016htz}). We denote this equation of motion by $\mrm{tEoM}$, i.e.
\be
\label{ac_EOM_tree}
\mrm{tEoM}_\ipj \coloneqq \frac{1}{P_\ipj}\lb\frac{c^2_i}{d^2_i} + \frac{c^2_j}{d^2_j}\rb - \frac{c_i}{d_i} - \frac{c_j}{d_j} = 0,
\ee 
where $\ipj$ represents an edge in the tree, and $i,j$ are neighboring vertices.

For a tree $T_q$, the setting where all edges have the same length are solutions to the $\mrm{tEoM}$. We call this setting the \textit{constant solution}.

\begin{lemma}
	\label{ac_maxmin}
	Consider an edge $\ipj$ such that its edge length and the lengths of its adjacent edges are not all equal. If the equations of motion are obeyed, then $\ipj$ cannot be the edge of maximum or minimum edge length among its adjacent edges. 
\end{lemma}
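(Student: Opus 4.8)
The plan is to recast $\mrm{tEoM}_\ipj = 0$ into a form in which extremality of $P_\ipj$ among the neighboring edges is manifestly obstructed. For each vertex $i$ set $x_i \coloneqq c_i/d_i$; then Eq.~\eqref{ac_EOM_tree} is equivalent (using positivity of the edge lengths) to
\be
P_\ipj = \frac{x_i^2 + x_j^2}{x_i + x_j}.
\ee
Two elementary observations will do the work. First, since $d_i x_i = c_i$, we may write $x_i = \sum_{i'\sim i}\frac{P_\iip^{-2}}{d_i}\,P_\iip$, which exhibits $x_i$ as a convex combination of the neighboring edge lengths $P_\iip$ with strictly positive weights $P_\iip^{-2}/d_i$; consequently
\be
\min_{i'\sim i} P_\iip \;\le\; x_i \;\le\; \max_{i'\sim i} P_\iip,
\ee
with equality at either end if and only if all edges incident to $i$ have the same length. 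Second, for positive reals $a,b$ the identity $\frac{a^2+b^2}{a+b}-a=\frac{b(b-a)}{a+b}$ yields $\min(a,b)\le\frac{a^2+b^2}{a+b}\le\max(a,b)$, with either equality only when $a=b$.

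Now suppose, for contradiction, that the equations of motion hold on $\ipj$ and that $P_\ipj$ is maximal among the lengths of $\ipj$ and its adjacent edges. Because every edge incident to $i$ (and likewise every edge incident to $j$) is either $\ipj$ or adjacent to $\ipj$, this means $P_\ipj = \max_{i'\sim i}P_\iip = \max_{j'\sim j}P_\jjp$. The first observation then gives $x_i \le P_\ipj$ and $x_j \le P_\ipj$, and combining with the second observation,
\be
P_\ipj \;=\; \frac{x_i^2+x_j^2}{x_i+x_j} \;\le\; \max(x_i,x_j) \;\le\; P_\ipj ,
\ee
so every inequality in the chain is an equality. Equality in the second observation forces $x_i = x_j$, and then $x_i = x_j = \max(x_i,x_j) = P_\ipj = \max_{i'\sim i}P_\iip$; by the equality clause of the first observation all edges incident to $i$ have length $P_\ipj$, and the same holds at $j$. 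Hence $\ipj$ and all of its adjacent edges share the common length $P_\ipj$, contradicting the hypothesis that they are not all equal. If instead $P_\ipj$ is minimal among $\ipj$ and its adjacent edges, the identical argument applies with every $\le$ reversed and every $\max$ replaced by $\min$.

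I do not expect a genuine obstacle; the only care needed is in the equality analysis --- verifying that the weighted-mean bound on $x_i$ is attained only when all edge lengths incident to $i$ coincide, and keeping track of the positivity of all edge lengths (hence of $x_i$, $x_j$, $d_i$, and $x_i+x_j$), which is what makes the rewriting of $\mrm{tEoM}_\ipj$ and the elementary inequality legitimate. It is also worth recording at the outset that ``maximal (minimal) among its adjacent edges'' is read to include $\ipj$ itself, the only interpretation compatible with the ``not all equal'' hypothesis.
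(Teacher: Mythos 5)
Your proof is correct and is essentially the paper's argument in a different packaging: the paper rewrites the equation of motion as $c_i d_j^2\lb P_\ipj^{-1}c_i - d_i\rb + c_j d_i^2\lb P_\ipj^{-1}c_j - d_j\rb = 0$ and observes that $P_\ipj^{-1}c_i - d_i = \sum_{k\sim i,\,k\neq j}P_\ipk^{-1}\lb P_\ipj^{-1}-P_\ipk^{-1}\rb$ has a definite sign when $P_\ipj$ is extremal, which is precisely your statement that $x_i=c_i/d_i$ is a weighted mean of the incident edge lengths and hence satisfies $x_i\le P_\ipj$ (resp.\ $\ge$). Your reformulation $P_\ipj=(x_i^2+x_j^2)/(x_i+x_j)\in[\min(x_i,x_j),\max(x_i,x_j)]$ is sound and has the merit of making explicit the equality analysis that the paper's terse ``therefore'' leaves to the reader.
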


\begin{proof}
	The equation of motion at $\ipj$ is 
	\be
	c_i d^2_j\lb P^{-1}_\ipj c_i - d_i\rb + c_j d^2_i\lb P^{-1}_\ipj c_j - d_j\rb = 0,
	\ee
	and 
	\be
	P^{-1}_\ipj c_i - d_i = \sum_{\substack{k\sim i \\ k\neq j}} \frac{1}{P_\ipk} \lb \frac{1}{P_\ipj} - \frac{1}{P_\ipk} \rb,
	\ee
	therefore $P_\ipj$ cannot be the minimum or maximum edge length among its adjacent edges.
\end{proof}

\begin{theorem}
\label{thmmm1}
	For a finite region $\Sigma \subset T_q $ obeying the constant boundary value condition, the unique solution to the equations of motion is the constant edge length setting.
\end{theorem}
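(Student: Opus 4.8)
The plan is to use Lemma~\ref{ac_maxmin} together with a maximum-principle argument on the finite region $\Sigma$. Suppose, for contradiction, that there is a solution to the equations of motion on $\Sigma$ that is not the constant edge length setting. Since $\Sigma$ has finitely many edges, among all edges of $\Sigma$ there is one, call it $e = \ipj$, of \emph{maximum} length $L$. If the edge lengths are not all equal, we may choose such a maximal edge $e$ which also has at least one adjacent edge of strictly smaller length: indeed, if every edge adjacent to a maximal edge had the same (maximal) length, then by connectedness of $\Sigma$ all edges would be maximal, contradicting non-constancy. (Here ``adjacent'' edges of $e=\ipj$ means the edges incident to $i$ or to $j$ other than $e$ itself; note that edges touching the boundary must be handled carefully, see below.) Then $e$ is an edge whose length and the lengths of its adjacent edges are not all equal, yet $e$ has the maximum length among its adjacent edges, contradicting Lemma~\ref{ac_maxmin}.

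The subtlety is the boundary. The equation of motion $\mrm{tEoM}_\ipj=0$ is imposed only for edges $\ipj \in E(\Sigma)$ that are ``interior'' in the sense that $\delta S/\delta P_\ipj$ is well-defined and set to zero; the edges in $E(\partial\Sigma)$ have their lengths \emph{fixed} at $c$ by the constant boundary value condition, so the EoM is not imposed there. Thus I first need to check that a maximal (or minimal) edge can be chosen among the edges on which the EoM actually holds, or else that the boundary edges are all of length exactly $c$ and this value is compatible. Concretely, I would argue as follows: if the maximum edge length $L$ over $E(\Sigma)$ exceeds $c$, pick a maximal edge $e$; it cannot lie in $E(\partial\Sigma)$ (those have length $c<L$), so the EoM holds at $e$; and among the edges adjacent to $e$ at least one is strictly shorter (again by connectedness, tracing a path from $e$ toward the boundary where lengths are $c<L$), so Lemma~\ref{ac_maxmin} is contradicted. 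Symmetrically, if the minimum edge length $\ell_{\min}$ over $E(\Sigma)$ is less than $c$, the minimal edge is interior, has a strictly longer neighbor on a path toward the boundary, and again Lemma~\ref{ac_maxmin} is contradicted. Hence every edge of $\Sigma$ has length exactly $c$, i.e. the solution is the constant edge length setting, which we already know solves the $\mrm{tEoM}$.

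The main obstacle I anticipate is the careful bookkeeping around the boundary: one must make precise what ``adjacent edges'' of an interior edge means when some of those neighbors are themselves boundary edges (whose lengths are pinned to $c$), and one must ensure the connectedness-based ``path to the boundary'' argument genuinely produces a \emph{strict} inequality at some edge adjacent to the extremal edge rather than merely somewhere far away. A clean way to handle this is to note that the set of edges of length exactly $L$ (assuming $L = \max > c$) forms a nonempty subgraph that, by the non-strict version of the argument, would have to extend all the way to $\partial\Sigma$ — but boundary edges have length $c < L$, so on the frontier of this subgraph one finds a length-$L$ edge with a strictly shorter neighbor, which is exactly the configuration Lemma~\ref{ac_maxmin} forbids. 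The minimum case is identical with inequalities reversed. Once this is set up, the proof is short; I would also remark that the constant setting indeed satisfies $\mrm{tEoM}_\ipj=0$ (already observed above, since then $c_i/d_i = c_j/d_j = P_\ipj$ forces each term to vanish) so that existence as well as uniqueness is established.
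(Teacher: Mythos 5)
Your proof is correct and rests on essentially the same mechanism as the paper's: a contradiction argument driven by Lemma \ref{ac_maxmin} together with the constant boundary value condition. The only difference is packaging --- the paper propagates a strictly increasing/decreasing pair of adjacent edges along a path until it reaches $\partial\Sigma$ (forcing two boundary edges of unequal length), whereas you compare the global maximum and minimum edge lengths in the finite region with the boundary value $c$; your version also spells out the boundary bookkeeping (which edges carry the EoM versus which are pinned at $c$) more explicitly than the paper does.
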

\begin{proof}
	This directly follows from Lemma \ref{ac_maxmin}, by contradiction. Assume there exist two adjacent edges, $e_1,e_2\in E\lb\Sigma\rb$, with different lengths. Then we can find two other edges $e'_1$ and $e'_2$ adjacent to $e_1$ and $e_2$ respectively, such that $\max(\ell(e'_1),\ell(e'_2)) > \max(\ell(e_1),\ell(e_2))$, and $\min(\ell(e'_1),\ell(e'_2)) < \min(\ell(e_1),\ell(e_2))$. We can repeat this argument on $e'_1$, $ e_1$ and $e'_2$, $e_2$ respectively, and build a path that extends to the boundary $\partial \Sigma$. This implies there are at least two edges on the boundary with different lengths.
\end{proof}

\begin{remark}
The argument above also applies to another kind of boundary condition, which is not restricted to a finite region of the tree, but instead requires that in each direction the edge lengths asymptotically approach a constant.
\end{remark}

We explain this remark below.

\begin{defn}
    A infinite path in the tree is a path formed by infinitely many edges, i.e one that extends to infinity on both ends.
\end{defn}

\begin{defn}
    An infinite tree satisfies the asymptotically constant boundary value condition if there exists a constant $c$ such that the edge lengths along every infinite path in the tree approach $c$.
\end{defn}

\begin{theorem}
	For an infinite tree satisfying the asymptotically constant boundary value condition, the unique solution to the equations of motion \eqref{ac_EOM_tree} is the constant edge length setting.
\end{theorem}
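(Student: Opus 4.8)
The plan is to argue by contradiction, in the spirit of the proof of Theorem \ref{thmmm1}, but with one essential new twist: in the finite-region case one merely had to continue a monotone chain of adjacent edges until it ran into $\partial\Sigma$, whereas here there is no boundary to run into, so instead I must manufacture two genuine bi-infinite paths whose asymptotic behaviours are incompatible. So suppose a solution to \eqref{ac_EOM_tree} on the infinite tree $T_q$ is not the constant edge length setting. Since the line graph of a connected tree is connected, a non-constant edge-length function must assign different lengths to some pair of adjacent edges $e_1\sim e_2$; relabel so that $\ell(e_1)<\ell(e_2)$ and let $v$ be their common vertex.

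Next I would construct, using Lemma \ref{ac_maxmin}, a ray emanating from $v$ through $e_2$ along which the edge lengths strictly increase. Write $e_2=\langle v,u_1\rangle$. Because $e_1$ is adjacent to $e_2$ and strictly shorter, the edge lengths in the neighbourhood of $e_2$ are not all equal, so by Lemma \ref{ac_maxmin} the edge $e_2$ is not the longest among its adjacent edges. A strictly longer adjacent edge sits either at $u_1$, in which case I take it as the second edge of the ray, or at $v$, in which case I discard $e_2$ and restart the construction from that longer edge at $v$ (still strictly longer than $e_1$, so the hypothesis of Lemma \ref{ac_maxmin} persists). Since $T_q$ is locally finite, only finitely many such "pivots" can occur at any one vertex before a strictly longer edge is forced to appear at the frontier, so this procedure yields an infinite non-backtracking path $v=x_0,x_1,x_2,\dots$ with $\ell(e_2)<\ell\langle x_1x_2\rangle<\ell\langle x_2x_3\rangle<\cdots$ (in a tree a non-backtracking walk is automatically a simple path). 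Running the same construction with "shortest" in place of "longest" — Lemma \ref{ac_maxmin} controls the edge on both sides — starting from $e_1=\langle v,x_1'\rangle$ and using $e_2$ as the required strictly longer neighbour, produces a ray $v=x_0',x_1',x_2',\dots$ with $\ell(e_1)>\ell\langle x_1'x_2'\rangle>\ell\langle x_2'x_3'\rangle>\cdots$.

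Finally I would close each ray into a genuine infinite path in the sense of the paper: every vertex of $T_q$ has degree $q+1\geq 2$, so from $v$ one extends the first ray backwards through $e_1$ and then non-backtracking forever, and likewise extends the second ray backwards through $e_2$; since two vertices of a tree are joined by a unique path, these backward extensions never collide with the corresponding rays, and we obtain two bi-infinite paths. The asymptotically constant boundary value condition, applied to the first path, forces its edge lengths to approach the common constant $c$; being a strictly increasing sequence one of whose terms equals $\ell(e_2)$, this gives $c\geq \ell\langle x_1x_2\rangle>\ell(e_2)>\ell(e_1)$. Applied to the second path, the strictly decreasing lengths, one of which equals $\ell(e_1)$, give $c\leq \ell\langle x_1'x_2'\rangle<\ell(e_1)$. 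These two inequalities are incompatible, so no non-constant solution exists, and the constant edge length setting (necessarily with constant $c$) is the unique solution.

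The main obstacle is the middle step. Unlike the finite-region argument, one cannot simply declare "continue to the boundary": one must actually realize the monotone chain produced by Lemma \ref{ac_maxmin} as an honest non-backtracking ray, and the subtlety is precisely that the longer (or shorter) neighbour handed to us by the lemma may attach at the vertex behind the current frontier rather than ahead of it. The bounded-degree pivoting argument sketched above is what resolves this, and it genuinely uses that $T_q$ is locally finite. A secondary point that must be handled carefully is that both limiting inequalities are compared against the \emph{same} constant $c$ (the boundary condition asserts one $c$ works along every infinite path), and that monotonicity is what makes the inequalities strict; losing strictness, or allowing two different limits, would break the contradiction.
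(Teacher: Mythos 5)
Your proof is correct and follows essentially the same route as the paper, which simply invokes the contradiction argument of Theorem \ref{thmmm1}: use Lemma \ref{ac_maxmin} to grow a strictly increasing and a strictly decreasing chain of adjacent edges and then play the two resulting limits against the single constant $c$ demanded by the asymptotically constant boundary condition. Your additional ``pivoting'' step --- ensuring that the monotone chain handed to you by Lemma \ref{ac_maxmin} is realized as an honest non-backtracking ray even when the strictly longer (or shorter) neighbour attaches behind the current frontier --- makes rigorous a point that the paper's two-line proof leaves implicit, but it is a refinement of the same argument rather than a different method.
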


\begin{proof}
    The proof is the same as that of Theorem \ref{thmmm1} above. 
\end{proof}

\subsection{No-go theorem  on the existence of solutions to the tree equations of motion}

In this subsection we discuss a more general boundary value problem. We will solve the equations of motion on a finite region of the tree satisfying this more general boundary condition.

First we need to review the following lemma, proven in \cite{Gubser:2016htz}.

\begin{lemma}
\label{cauchy}
For any length configuration, we have 
\be 
\label{lemcausch}
\frac{c^2_i}{d_i} \leq q+1,
\ee 
where $q+1$ represents the number of edges connected to vertex $i$. Furthermore, equality in Eq. \eqref{lemcausch} holds if and only if all edge lengths around vertex $i$ are equal.
\end{lemma}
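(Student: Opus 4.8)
The plan is to recognize Eq.~\eqref{lemcausch} as a direct instance of the Cauchy--Schwarz inequality. Vertex $i$ has exactly $q+1$ incident edges, so index its neighbors $i'\sim i$ by $\{1,\dots,q+1\}$ and form the two real $(q+1)$-vectors $u=(1,1,\dots,1)$ and $w=\lb 1/P_\iip \rb_{i'\sim i}$. By definition $c_i=\sum_{i'\sim i} 1/P_\iip = \langle u,w\rangle$ and $d_i=\sum_{i'\sim i} 1/P^2_\iip = \|w\|^2$, while $\|u\|^2=q+1$. Cauchy--Schwarz, $\langle u,w\rangle^2\le \|u\|^2\|w\|^2$, then reads $c_i^2\le (q+1)\,d_i$, which upon dividing by $d_i>0$ is exactly Eq.~\eqref{lemcausch}.

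For the equality clause I would invoke the standard equality case of Cauchy--Schwarz: $\langle u,w\rangle^2=\|u\|^2\|w\|^2$ holds precisely when $u$ and $w$ are linearly dependent. Since $u$ has all entries equal to $1$ and is nonzero, this is equivalent to $w$ being a constant vector, i.e.\ $1/P_\iip$ taking the same value for every neighbor $i'$ of $i$, i.e.\ all the geodesic edge lengths $P_\iip$ around $i$ being equal. On a tree each $P_\iip$ coincides with the edge length $\ell\lb\iip\rb$, so this says all edge lengths around $i$ are equal, as claimed.

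There is essentially no obstacle here: the only things to be careful about are that $d_i>0$ (true since $\ell$ takes values in $\mathbb{R}_{\ge 0}$ and geodesic lengths of actual edges are positive, so the division is legitimate), and that one is genuinely summing over all $q+1$ neighbors so that $\|u\|^2=q+1$ rather than some smaller count. If one prefers to avoid quoting the equality case of Cauchy--Schwarz as a black box, an equivalent route is to write $(q+1)d_i-c_i^2=\tfrac12\sum_{i',i''\sim i}\lb \tfrac{1}{P_\iip}-\tfrac{1}{P_{\langle i i''\rangle}}\rb^2\ge 0$, which manifestly vanishes iff all $1/P_\iip$ agree; I would include this identity as the cleanest self-contained argument.
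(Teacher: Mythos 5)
Your proposal is correct and is exactly the paper's approach: the paper proves this lemma in one line as ``Immediate from the Cauchy--Schwarz inequality,'' and your argument simply fills in the details (the choice of vectors, the equality case, and the positivity of $d_i$). The Lagrange-type identity you mention is a nice self-contained alternative to quoting the equality case, but it is the same underlying inequality.
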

\begin{proof}Immediate from the Cauchy-Schwarz inequality.

\end{proof}

\begin{theorem}
\label{thmmm3}
If
\be
\sum_{i \in V\lb \pd\Sigma \rb} \sum_{\substack{j\sim i \\j\in  V(\Sigma)}} \lb \frac{1}{a_\ipj} \frac{c_i}{d_i} -1 \rb <0,
\ee 
then there exist no bulk solutions. .
\end{theorem}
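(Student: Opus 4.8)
The plan is to argue by contradiction: assume a bulk solution exists, sum the tree equation of motion \eqref{ac_EOM_tree} over all edges on which it is imposed, and reorganize the result as a sum of per-vertex contributions. Cauchy--Schwarz (Lemma~\ref{cauchy}) forces the contribution of each fully interior vertex to be non-positive, so what is left over — a sum involving only the prescribed boundary lengths $a_\ipj$ — must be non-negative, contradicting the strict inequality in the hypothesis.

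Concretely, I would first rewrite $\mrm{tEoM}_\ipj = 0$ in the ``vertex-split'' form
\be
\frac{c_i}{d_i}\lb\frac{1}{P_\ipj}\frac{c_i}{d_i}-1\rb + \frac{c_j}{d_j}\lb\frac{1}{P_\ipj}\frac{c_j}{d_j}-1\rb = 0,
\ee
which is just a regrouping of \eqref{ac_EOM_tree}. Summing this over all edges carrying the equation of motion gives $0$; interchanging the order of summation turns the left-hand side into $\sum_i \frac{c_i}{d_i}\,\sigma_i$, where $\sigma_i$ denotes the sum of $\lb\frac{1}{P_\ipj}\frac{c_i}{d_i}-1\rb$ over those neighbors $j$ of $i$ for which $\ipj$ carries the equation of motion. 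For a vertex $i$ all of whose incident edges are of this type, $\sigma_i$ is an unrestricted sum over all $q+1$ neighbors of $i$, and using $\sum_{j\sim i} P_\ipj^{-1} = c_i$ (the definition of $c_i$) one gets $\sigma_i = \frac{c_i^2}{d_i} - (q+1)$, which is $\le 0$ by Lemma~\ref{cauchy}. Since $c_i, d_i > 0$, every such interior vertex contributes a non-positive amount.

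The remaining contributions come from the boundary vertices, whose incident edges include the prescribed-length edges $a_\ipj$ appearing in the hypothesis. Transposing the non-positive interior part across the identity leaves the boundary contribution equal to a manifestly non-negative quantity; identifying this with $\sum_{i\in V(\partial\Sigma)}\sum_{j\sim i,\, j\in V(\Sigma)}\lb\frac{1}{a_\ipj}\frac{c_i}{d_i}-1\rb$ — here I would invoke the precise form of the more general boundary condition to pin down which lengths $a_\ipj$ are fixed data, and use positivity of the weights $c_i/d_i$ — yields the contradiction with the assumed value $<0$. I expect this final bookkeeping step to be the only real difficulty: tracking exactly which edges carry prescribed versus variable lengths, fixing the signs of all the boundary terms, and absorbing the positive weights $c_i/d_i$ so that the resulting inequality matches the hypothesis on the nose. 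Everything else is an immediate consequence of the vertex-split identity together with Lemma~\ref{cauchy}.
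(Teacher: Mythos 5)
Your proposal is correct and follows essentially the same route as the paper: sum the vertex-split form of the tEoM over $E(\Sigma)$, use Lemma~\ref{cauchy} to make each interior vertex's contribution non-positive, and conclude that the boundary contribution must be non-negative, contradicting the hypothesis. The one bookkeeping point you flag --- the extra positive weight $c_i/d_i$ multiplying each boundary vertex's term, versus the unweighted sum appearing in the hypothesis --- is in fact also left implicit in the paper's own proof.
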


The intuition to proving Theorem \ref{thmmm3} comes from General Relativity, where one sometimes makes use of the so-called ``trace-reversed Einstein equations.''
\begin{proof}
Starting from Eq. \eqref{ac_EOM_tree} and summing over all edges in $\Sigma$, we have
\ba
0 &=& \sum_{\ij \in E(\Sigma)} \lsb \frac{1}{P_\ij} \lb \frac{c_i^2}{d_i^2} + \frac{c_j^2}{d_j^2} \rb - \frac{c_i}{d_i} - \frac{c_j}{d_j} \rsb \\
\label{eq39}
&=& \sum_{i \in V(\Sigma-\pd\Sigma)} \frac{c_i}{d_i}\lsb \frac{c_i^2}{d_i} - \lb q+1\rb \rsb + \sum_{i\in V(\pd\Sigma)} \frac{c_i}{d_i} \sum_{\substack{j\sim i \\j\in  V(\Sigma)}} \lb \frac{1}{P_{\langle ij \rangle}} \frac{c_i}{d_i} -1 \rb.
\ea
By Lemma \ref{cauchy} the square bracket is non-positive, which completes the proof.
\end{proof}

We  now state two particular cases of Theorem \ref{thmmm3}.

\begin{remark}
Suppose $c_i/d_i$ is constant for all $i\in V(\pd \Sigma)$, and let
\be
a^{\mrm{(bdy)}} \coloneqq \frac{c_i}{d_i}, \quad  i\in V(\pd \Sigma).
\ee
Theorem \ref{thmmm3} states that the only bulk solution obeying $P_{\ipj} \geq a^\mrm{(bdy)}$, for all $i\in V\lb \pd \Sigma \rb$, $j\sim i$, $j\in V\lb \Sigma \rb$, is given by
\be
P_{\langle kl\rangle} = a^\mrm{(bdy)},
\ee
for all $\langle kl \rangle\in E\lb \Sigma\rb$.
\end{remark}

\begin{corollary}
Suppose that for every vertex $i\in V\lb \pd\Sigma \rb$ on the boundary, there is a unique edge $\langle i i_0 \rangle$ from $\pd\Sigma$ to the rest of $\Sigma$, as in the right panel of Figure \ref{figsigmapdsigma}. Then, if
\be
P_{\langle i i_0  \rangle} \geq P_{\langle i j \rangle}
\ee
for all $i\in V\lb \pd\Sigma \rb$, $j\sim i$, $j\neq i_0$, with strict inequality for at least one edge $\ipj$, there exists no bulk solution.
\end{corollary}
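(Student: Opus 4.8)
The plan is to obtain the corollary as a direct specialization of Theorem \ref{thmmm3}. In the geometry assumed here — the one depicted in the right panel of Figure \ref{figsigmapdsigma} — every boundary vertex $i\in V(\partial\Sigma)$ has a single neighbor lying in $V(\Sigma)$, namely $i_0$, all of its remaining neighbors being outside $\Sigma$. Consequently the inner sum $\sum_{j\sim i,\,j\in V(\Sigma)}$ in the hypothesis of Theorem \ref{thmmm3} collapses to the one term $j=i_0$, so what must be shown is simply
\[
\sum_{i\in V(\partial\Sigma)}\Bigl(\frac{1}{P_{\langle i i_0\rangle}}\frac{c_i}{d_i}-1\Bigr)<0 .
\]

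The key step is a pointwise bound at each boundary vertex. Fix $i\in V(\partial\Sigma)$ and expand $c_i=\sum_{k\sim i}P_{\langle i k\rangle}^{-1}$ and $d_i=\sum_{k\sim i}P_{\langle i k\rangle}^{-2}$; then
\[
P_{\langle i i_0\rangle}\,d_i-c_i=\sum_{k\sim i}\frac{P_{\langle i i_0\rangle}-P_{\langle i k\rangle}}{P_{\langle i k\rangle}^{2}}\ \geq\ 0 ,
\]
because the hypothesis $P_{\langle i i_0\rangle}\geq P_{\langle i k\rangle}$ holds for every neighbor $k$ of $i$ (with equality at $k=i_0$). Dividing by $P_{\langle i i_0\rangle}d_i>0$ shows $\frac{1}{P_{\langle i i_0\rangle}}\frac{c_i}{d_i}-1\leq 0$ for every boundary vertex, with equality exactly when all edges incident to $i$ share the common length $P_{\langle i i_0\rangle}$.

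To close the argument I would invoke the remaining hypothesis: for at least one boundary vertex $i$ and one neighbor $j\neq i_0$ one has $P_{\langle i j\rangle}<P_{\langle i i_0\rangle}$, so for that $i$ the displayed sum is strictly positive and the corresponding term $\frac{1}{P_{\langle i i_0\rangle}}\frac{c_i}{d_i}-1$ is strictly negative, while every other term is $\leq 0$. Hence the total is negative and Theorem \ref{thmmm3} rules out any bulk solution. I do not anticipate a genuine obstacle here: the only point needing care is the combinatorial bookkeeping that reduces the double sum to one term per boundary vertex, which is precisely the structural assumption on $\Sigma$ (a unique edge from $\partial\Sigma$ into the rest of $\Sigma$) illustrated in the right panel of Figure \ref{figsigmapdsigma}.
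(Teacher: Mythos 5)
Your proposal is correct and follows essentially the same route as the paper: it specializes Theorem \ref{thmmm3} by collapsing the inner sum to the single term $j=i_0$ and then checks the sign of $\frac{1}{P_{\langle ii_0\rangle}}\frac{c_i}{d_i}-1$ at each boundary vertex; your termwise bound $P_{\langle ii_0\rangle}d_i-c_i=\sum_{k\sim i}\frac{P_{\langle ii_0\rangle}-P_{\langle ik\rangle}}{P_{\langle ik\rangle}^2}\geq 0$ is just the sign information extracted from the explicit fraction the paper writes down, with strictness at one vertex giving the strict negativity needed to invoke the theorem.
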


\begin{proof}
In this case we have
\be
\sum_{\substack{j\sim i \\j\in  V(\Sigma)}} \lb \frac{1}{P_{\langle ij \rangle}} \frac{c_i}{d_i} -1 \rb = \frac{ \sum_{\substack{j\sim i\\ j\neq i_0}}\frac{P_{\langle ii_0 \rangle}}{P_\ipj}\lb 1 - \frac{P_{\langle ii_0 \rangle}}{P_\ipj} \rb}{ 1 + \sum_{\substack{j\sim i\\ j\neq i_0}}\frac{P_{\langle ii_0 \rangle}^2}{P_\ipj^2} },
\ee
so if $P_{\langle i i_0  \rangle} \geq P_{\langle i j \rangle}$ for all edges, and the inequality is strict for at least one edge, the second term in Eq. \eqref{eq39} is negative.
\end{proof}

\section{Bounds on the action}
\label{secbounds}

In this section we consider which edge length configurations extremize the action, for graphs which are infinite trees, or an infinite network of hexagons.

While the bulk contribution to the action is the edge sum of the Lin-Lu-Yau curvature $K_\ipj$, in general there is no one canonical prescription for the boundary term. This is because different boundary terms correspond to different boundary value problems.

We will consider two types of boundary terms:
\begin{enumerate}
\item A Gibbons-Hawking-York term, of the type considered in Section 3.2 of \cite{Gubser:2016htz}. In this formulation, the action can be written as
\be
S_\Sigma = \sum_{ \ipj \in E\lb \Sigma \rb} K_\ipj + \sum_{i \in V(\pd\Sigma)} k_i,
\ee
where $k_i$ is the $p$-adic analog of extrinsic curvature. The term $k_i$ is picked such that the equations of motion have the same expression for all edges inside $\Sigma$, assuming no contribution to the variation from the edges outside $\Sigma$. This term thus corresponds to Dirichlet boundary conditions. 
\item No boundary term, so that the action is just
\be
S_\Sigma = \sum_{\ipj \in E\lb \Sigma \rb} K_\ipj.
\ee
\end{enumerate}

\subsection{Minimum action edge length setting}
\label{sec41}

Let's first consider the question of minimizing the action on $\Sigma$, with the Gibbons-Hawking-York term. In this Section \ref{sec41} only, we will restrict ourselves to boundaries $\pd\Sigma$ such that for any vertex $i \in V\lb\pd\Sigma\rb$, there is always a unique edge (which we denote $\langle ii_0\rangle$) from $i$ into $V\lb\Sigma-\pd\Sigma\rb$, as in the second panel of Figure \ref{figsigmapdsigma}.

We have (see \cite{Gubser:2016htz} for the details)
\ba
S^\mrm{GHY}_\Sigma &=& \sum_{E\lb \Sigma \rb} K_\ipj + \sum_{V(\pd\Sigma)} k_i \\
&=& \sum_{i\in V\lb \Sigma-\pd\Sigma \rb} \sum_{j\sim i} \frac{1}{d_i P_\ipj}\lb \frac{2}{P_\ipj} - c_i \rb - \sum_{i\in V\lb \pd\Sigma \rb} \frac{c_i^2}{d_i}.
\ea
Performing the $j$ sum, this equals
\be
\label{eqhere}
S^\mrm{GHY}_\Sigma = \sum_{i\in V\lb \Sigma-\pd\Sigma \rb} \lb 2 - \frac{c_i^2}{d_i} \rb - \sum_{i\in V\lb \pd\Sigma \rb} \frac{c_i^2}{d_i}.
\ee
We have arrived at the following lemma, which applies when region $\Sigma$ obeys the constant boundary value condition in Definition \ref{cbd}.

\begin{lemma} 
For a region $\Sigma$ such that all edges in $E\lb\pd\Sigma\rb$ (i.e. not in $E(\Sigma)$, but with one endpoint in $V\lb\pd \Sigma\rb$) have the same edge length $e$, the minimum value of the action $S^\mrm{GHY}_\Sigma$ is obtained by setting all edge lengths in $\Sigma$ equal to $e$.  
\end{lemma}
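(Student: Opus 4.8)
Starting from the simplified expression \eqref{eqhere}, the plan is to show that the constant edge length setting (all edges in $\Sigma$ equal to $e$) minimizes $S^\mrm{GHY}_\Sigma$ by analyzing the only place where the edge lengths enter: the quantities $c_i^2/d_i$, summed over all vertices in $V(\Sigma)$. Since every term in \eqref{eqhere} carries a minus sign in front of $c_i^2/d_i$, minimizing $S^\mrm{GHY}_\Sigma$ is equivalent to \emph{maximizing} $\sum_{i\in V(\Sigma)} c_i^2/d_i$. So the whole problem reduces to a local statement at each vertex.

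First I would recall Lemma \ref{cauchy}: for a vertex $i$ of degree $q+1$, one has $c_i^2/d_i \leq q+1$, with equality if and only if all edges incident to $i$ have equal length. (We are on a tree $T_q$, so every vertex has degree $q+1$ — I would note that boundary vertices of $\Sigma$ still have degree $q+1$ in the ambient tree, and the geodesic distances $P_\ipj$ coincide with edge lengths.) Hence $\sum_{i\in V(\Sigma)} c_i^2/d_i \leq (q+1)\,|V(\Sigma)|$, and this upper bound is attained precisely when, for \emph{every} vertex $i \in V(\Sigma)$, all edges incident to $i$ are equal. The constant-$e$ configuration on $\Sigma$ — meaning $\ell(\ipj) = e$ for all $\ipj \in E(\Sigma)$, consistent with $E(\pd\Sigma)$ also being length $e$ — makes every incident-edge family (at every internal and boundary vertex) uniformly equal to $e$, so it saturates the bound. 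Therefore it minimizes $S^\mrm{GHY}_\Sigma$, proving the lemma.

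The one point requiring a little care — and the main potential obstacle — is the \emph{consistency of the equality condition across vertices}: the Cauchy-Schwarz equality at each vertex separately only forces the edges around that vertex to be mutually equal, not necessarily equal to a global constant. I would address this by observing that the connectivity of $\Sigma$ propagates the local equality: if all edges at vertex $i$ equal some value, and edge $\ipj$ is shared with vertex $j$ where again all edges are mutually equal, then the common value at $j$ must be $\ell(\ipj)$, hence equals the common value at $i$; walking along the connected graph $\Sigma$ forces a single global constant. Finally, the boundary edges in $E(\pd\Sigma)$ are pinned to $e$ by hypothesis, and each boundary vertex $i$ is incident to one such edge, so the global constant is forced to be exactly $e$. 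I would also remark that the $k_i$ terms have already been absorbed into \eqref{eqhere}, so no separate treatment of the Gibbons-Hawking-York contribution is needed — the entire dependence on the edge lengths sits in the $c_i^2/d_i$ terms, and monotonicity does the rest.
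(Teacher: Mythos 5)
Your proposal is correct and follows essentially the same route as the paper, whose proof is simply ``immediate from Eq.~\eqref{eqhere} and Lemma~\ref{cauchy}'': you minimize $S^\mrm{GHY}_\Sigma$ by maximizing $\sum_{i\in V(\Sigma)} c_i^2/d_i$ vertex by vertex via the Cauchy--Schwarz bound, and note the constant-$e$ setting saturates it everywhere at once. Your extra paragraph on propagating the local equality condition through connectivity and pinning the constant to $e$ at the boundary just makes explicit what the paper leaves implicit.
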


\begin{proof}
Immediate from Eq. \eqref{eqhere} and Lemma \ref{cauchy}.
\end{proof}

It is also possible to show that with no boundary term in the action, the constant edge length setting is still minimizing. We encapsulate this in the following lemmas.

\begin{lemma}
\label{lema6}
The action
\be
\label{SSigmanoform}
S_\Sigma = \sum_{\ipj \in E\lb \Sigma \rb} K_\ipj
\ee
equals
\be
\label{SSigmaform}
S_{\Sigma} = \sum_{i \in V(\Sigma - \partial \Sigma)} \lb 2 - \frac{c^2_i}{d_i}\rb + \sum_{i \in V(\partial \Sigma)} \lb 2\frac{P^{-2}_\ipj}{P^{-2}_\ipj + D_i} - P^{-1}_\ipj\frac{P^{-1}_\ipj + C_i}{P^{-2}_\ipj + D_i}\rb,
\ee
where $i_0$ is the unique vertex in $V\lb\Sigma-\pd\Sigma\rb$ neighboring $i\in V\lb \pd\Sigma \rb$, and we denote
\ba
\label{capitalCi}
C_i &\coloneqq& \sum_{\substack{j\sim i\\j\neq i_0}} P^{-1}_{\ipj}, \\
\label{capitalDi}
D_i &\coloneqq& \sum_{\substack{j\sim i\\j\neq i_0}} P^{-2}_{\ipj}.
\ea
\end{lemma}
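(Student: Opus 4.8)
The plan is to evaluate $S_\Sigma=\sum_{\ipj\in E(\Sigma)}K_\ipj$ directly from the explicit form of the Lin--Lu--Yau curvature on a tree. Recall from \cite{Gubser:2016htz} (and consistent with the Gibbons--Hawking--York computation above) that on a tree the curvature of an edge splits into a contribution attached to each endpoint,
\[
K_\ipj=\frac{1}{d_iP_\ipj}\lb\frac{2}{P_\ipj}-c_i\rb+\frac{1}{d_jP_\ipj}\lb\frac{2}{P_\ipj}-c_j\rb .
\]
If this needs re-deriving, it follows by computing $W_\ipj(t)$ as the optimal transport cost on the double star spanned by $i$, $j$ and their neighbors: each leaf edge $\langle ik\rangle$ contributes $t/(d_iP_\ipk)$ (and similarly at $j$), while the central edge $\ipj$ contributes $P_\ipj-tP_\ipj^{-1}\lb 1/d_i+1/d_j\rb$; assembling these, using $\sum_{k\sim i,\,k\neq j}P_\ipk^{-1}=c_i-P_\ipj^{-1}$, dividing by $P_\ipj$ and sending $t\to0$ gives the displayed formula. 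Write $\kappa_{v,e}\coloneqq\frac{1}{d_vP_e}\lb\frac{2}{P_e}-c_v\rb$ for the endpoint term of an edge $e$ at a vertex $v$, so that $K_\ipj=\kappa_{i,\ipj}+\kappa_{j,\ipj}$; here $c_v,d_v$ still range over all neighbors of $v$ in $G$.

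Next I would regroup the edge sum by vertices. Since every edge of $E(\Sigma)$ has both endpoints in $V(\Sigma)$,
\[
S_\Sigma=\sum_{v\in V(\Sigma)}\,\sum_{\substack{u\sim v\\ \langle uv\rangle\in E(\Sigma)}}\kappa_{v,\langle uv\rangle},
\]
and I split $V(\Sigma)=V(\Sigma-\pd\Sigma)\sqcup V(\pd\Sigma)$. For an interior vertex $v\in V(\Sigma-\pd\Sigma)$ every incident edge lies in $E(\Sigma)$, so its contribution is $\frac{2}{d_v}\sum_{u\sim v}P_{\langle uv\rangle}^{-2}-\frac{c_v}{d_v}\sum_{u\sim v}P_{\langle uv\rangle}^{-1}=2-c_v^2/d_v$ by \eqref{hereiscdi}--\eqref{hereiscdi2}, reproducing the first sum of \eqref{SSigmaform}. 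For a boundary vertex $i\in V(\pd\Sigma)$ the restriction adopted in this subsection forces the only $E(\Sigma)$-edge at $i$ to be $\langle ii_0\rangle$, since $i$'s remaining neighbors lie in $V(G)-V(\Sigma)$ (so those edges belong to $E(\pd\Sigma)$, not $E(\Sigma)$); hence $i$ contributes the single term $\kappa_{i,\langle ii_0\rangle}$. Substituting $c_i=P_{\langle ii_0\rangle}^{-1}+C_i$ and $d_i=P_{\langle ii_0\rangle}^{-2}+D_i$ from \eqref{capitalCi}--\eqref{capitalDi} and simplifying the resulting rational function turns $\kappa_{i,\langle ii_0\rangle}$ into the boundary summand displayed in \eqref{SSigmaform}. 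Summing over all vertices yields the claim.

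Most of this is bookkeeping once the endpoint-split form of $K_\ipj$ is available. The one delicate point, and the place where the hypothesis on $\pd\Sigma$ really enters, is the assertion that a boundary vertex contributes to $S_\Sigma$ only through its unique interior edge $\langle ii_0\rangle$: this uses both that each boundary vertex has exactly one edge into $V(\Sigma-\pd\Sigma)$ and that its other neighbors lie outside $V(\Sigma)$, as in the second panel of Figure \ref{figsigmapdsigma}. A secondary point, needed only if the curvature formula is not taken as given, is the evaluation of $W_\ipj(t)$ on the double star; this is standard for trees and already recorded in \cite{Gubser:2016htz}.
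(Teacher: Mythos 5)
Your proposal is correct and follows essentially the same route as the paper: the paper's proof simply quotes the explicit tree curvature $K_\ipj = \frac{2}{P^2_\ipj}\lb \frac{1}{d_i} + \frac{1}{d_j} \rb - \frac{1}{P_\ipj}\lb \frac{c_i}{d_i} + \frac{c_j}{d_j} \rb$ from \cite{Gubser:2016htz} and asserts the result "by direct computation," and your vertex-by-vertex regrouping (interior vertices giving $2 - c_i^2/d_i$, each boundary vertex contributing only its endpoint term on $\langle ii_0\rangle$, rewritten via $c_i = P^{-1}_{\langle ii_0\rangle}+C_i$, $d_i = P^{-2}_{\langle ii_0\rangle}+D_i$) is exactly that computation spelled out. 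Your explicit flagging of the assumption that a boundary vertex's only $E(\Sigma)$-edge is $\langle ii_0\rangle$ makes precise what the paper leaves implicit in its restriction on $\pd\Sigma$.
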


\begin{proof}
Given the explicit expression (see \cite{Gubser:2016htz})
\be
K_\ipj = \frac{2}{P^2_\ipj}\lb \frac{1}{d_i} + \frac{1}{d_j} \rb - \frac{1}{P_\ipj}\lb \frac{c_i}{d_i} + \frac{c_j}{d_j} \rb
\ee
for the tree curvature, the result follows immediately by direct computation.
\end{proof}

Given the action written in form \eqref{SSigmaform}, it is possible to obtain the result that the constant edge length setting gives the minimum action, when the edge lengths outside $\Sigma$ are fixed and all equal.

\begin{lemma}
\label{lemmashort}
For $i \in V(\partial \Sigma)$, the setting
\be
P^*_{ii_0} = \frac{1}{C_i}+ \sqrt{ \frac{1}{C_i^2} +  \frac{1}{D_i}}
\ee
minimizes the second term
\be
2\frac{P^{-2}_\ipj}{P^{-2}_\ipj + D_i} - P^{-1}_\ipj\frac{P^{-1}_\ipj + C_i}{P^{-2}_\ipj + D_i}
\ee
in the action \eqref{SSigmaform}.
\end{lemma}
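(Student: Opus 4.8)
The claim is that the single-variable function obtained from the boundary term of \eqref{SSigmaform} is minimized at the stated value of $P_{ii_0}$. Since $C_i$ and $D_i$ depend only on the edges $\ipj$ with $j\neq i_0$, which are held fixed, the only free variable here is $P_{ii_0}$, so this is a genuine one-variable calculus problem. The plan is to substitute $x\coloneqq P^{-1}_{ii_0}>0$ and rewrite the boundary summand as
\be
f(x) = \frac{2x^2}{x^2+D_i} - \frac{x(x+C_i)}{x^2+D_i} = \frac{x^2 - C_i x + 0\cdot D_i}{x^2+D_i}
\ee
wait --- more carefully, $f(x) = \dfrac{2x^2 - x^2 - C_i x}{x^2+D_i} = \dfrac{x^2 - C_i x}{x^2 + D_i}$. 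Then I would differentiate this rational function in $x$, set the numerator of $f'(x)$ to zero, and solve the resulting quadratic.

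\textbf{Key steps.} First, clear the algebra to get $f(x) = (x^2 - C_i x)/(x^2+D_i)$ and compute $f'(x)$ via the quotient rule; the numerator of $f'(x)$ is a quadratic in $x$, of the form $C_i x^2 + 2D_i x - C_i D_i$ (up to sign and an overall positive factor from $(x^2+D_i)^2$ in the denominator). Second, solve $C_i x^2 + 2D_i x - C_i D_i = 0$ by the quadratic formula, obtaining $x = \bigl(-D_i + \sqrt{D_i^2 + C_i^2 D_i}\bigr)/C_i$, and discard the negative root since $x>0$. Third, invert: $P^*_{ii_0} = 1/x = C_i/\bigl(-D_i+\sqrt{D_i^2+C_i^2 D_i}\bigr)$, and rationalize the denominator to bring it to the claimed closed form $\frac{1}{C_i} + \sqrt{\frac{1}{C_i^2}+\frac{1}{D_i}}$; this is a short manipulation multiplying numerator and denominator by $D_i + \sqrt{D_i^2 + C_i^2 D_i}$. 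Fourth, confirm this critical point is a minimum rather than a maximum by checking the sign of $f'$ on either side (equivalently, noting $f(x)\to 1$ as $x\to\infty$ and $f(0)=0$, while $f$ dips below $0$ in between whenever $C_i>0$, forcing an interior minimizer) and verifying the other root is negative hence irrelevant.

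\textbf{Main obstacle.} There is no serious conceptual obstacle; the only real work is bookkeeping in the algebraic simplification --- in particular verifying that the numerator of $f'(x)$ really does collapse to the clean quadratic $C_i x^2 + 2 D_i x - C_i D_i$, and that rationalizing the root reproduces exactly the symmetric-looking expression $\frac{1}{C_i}+\sqrt{\frac{1}{C_i^2}+\frac{1}{D_i}}$ in the statement. One should also be slightly careful about edge cases: if $C_i = 0$ (vertex $i$ has no neighbors in $\Sigma$ other than $i_0$, i.e.\ $i$ is a leaf of $\Sigma$) the quadratic degenerates and the stated formula is singular, so either that case is excluded by the setup or handled separately by noting $f$ is then monotonic and the minimum is approached in a limit. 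Assuming $C_i,D_i>0$, the computation goes through directly and the second-derivative (or sign-of-$f'$) check seals that the critical point is the global minimizer on $(0,\infty)$.
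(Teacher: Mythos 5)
Your proposal is correct and is essentially the paper's argument: the paper simply asserts "by direct computation" that $P^*_{ii_0}$ is the local and global minimum, and your substitution $x=P^{-1}_{ii_0}$, the simplification $f(x)=(x^2-C_ix)/(x^2+D_i)$, the quadratic $C_ix^2+2D_ix-C_iD_i=0$ from $f'(x)=0$, and the sign check all carry out exactly that computation, recovering the stated closed form. The $C_i=0$ worry does not arise, since a boundary vertex $i$ always has a neighbor outside $V(\Sigma)$ distinct from $i_0$, so $C_i>0$.
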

\begin{proof}
By direct computation, $P_{\langle ii_0\rangle}=P_{ii_0}^*$ is the local and global minimum.
\end{proof}

\begin{theorem}
Suppose the edges in $E\lb\pd\Sigma\rb$ have lengths such that the values $C_i$ and $D_i$, as defined in Eqs. \eqref{capitalCi} -- \eqref{capitalDi}, are the same for all vertices $i\in V\lb \pd \Sigma \rb$. Then the~setting
\be
P_{\ipj} = P_{ii_0},
\ee
for all edges $\ipj \in E\lb \Sigma \rb$, minimizes the action $S_\Sigma$ in Eq. \eqref{SSigmanoform}.
\end{theorem}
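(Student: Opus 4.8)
The plan is to combine the decomposition of the action given in Lemma \ref{lema6} with the two minimization facts established just above it, using the hypothesis that $C_i$ and $D_i$ are constant across the boundary to tie the interior and boundary optimizations together. First I would recall that, by Lemma \ref{lema6}, the action splits as a sum of interior terms $2 - c_i^2/d_i$ over $i \in V(\Sigma - \pd\Sigma)$ plus boundary terms of the stated form over $i \in V(\pd\Sigma)$, each boundary term depending on the edge lengths only through $P_{ii_0}$ and the constants $C_i, D_i$. The strategy is to lower-bound each piece separately and then exhibit the constant edge length setting as simultaneously saturating every bound.

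The key steps, in order, are: (1) For the interior terms, invoke Lemma \ref{cauchy}: each summand $2 - c_i^2/d_i$ is minimized when all edges around $i$ are equal, since $c_i^2/d_i \le q+1$ with equality exactly in that case. Hence $\sum_{i \in V(\Sigma - \pd\Sigma)} (2 - c_i^2/d_i) \ge (2-(q+1))|V(\Sigma-\pd\Sigma)|$, with equality iff every interior vertex has all incident edges of equal length. (2) For each boundary term, apply Lemma \ref{lemmashort}: since $C_i$ and $D_i$ are fixed by the assumption on $E(\pd\Sigma)$, the term is minimized at $P_{ii_0} = P^*_{ii_0} = 1/C_i + \sqrt{1/C_i^2 + 1/D_i}$, which by the hypothesis is the \emph{same} value $P^* := P^*_{ii_0}$ for every $i \in V(\pd\Sigma)$. (3) Now argue that the constant edge length setting with common value $P^*$ on all of $E(\Sigma)$ realizes all of these minima simultaneously: with that setting, every interior vertex indeed has equal incident edges (saturating step 1), and every boundary vertex has $P_{ii_0} = P^*$ (saturating step 2). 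One should also check consistency: setting all of $E(\Sigma)$ to $P^*$ must be compatible with the given $C_i, D_i$ — indeed $C_i = (q)\cdot (P^*)^{-1}$ and $D_i = q\cdot (P^*)^{-2}$ would need to be the ``right'' constants, but in fact the statement only fixes $C_i, D_i$ to be equal across boundary vertices, not to be any particular value, and the claimed minimizer $P_{\ipj} = P_{ii_0}$ is stated relative to whatever $P_{ii_0}$ the optimization produces; so I would phrase the conclusion as: the minimizing configuration has all edges in $E(\Sigma)$ equal to the common optimal value $P^*$ determined by $C_i, D_i$.

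The one subtlety — and the step I expect to require the most care — is reconciling the two optimizations at the boundary vertices, because the interior minimization wants the edge $\langle i i_0 \rangle$ to equal the other edges around $i_0$ and around $i$, while the boundary minimization independently pins $P_{ii_0}$ to $P^*$; one must verify these are consistent, i.e. that $P^*$ (the boundary-term minimizer) is exactly the value that makes all interior edges equal, which follows because in the fully-constant configuration the quantities $c_i, d_i$ at an interior vertex neighboring the boundary are computed from edges all equal to $P^*$, and the Cauchy--Schwarz equality case then holds. A clean way to present this is to first show that \emph{any} configuration can be weakly improved by replacing it with one constant on $E(\Sigma)$: apply step (2) to push each $P_{ii_0}$ to $P^*$, then apply step (1) repeatedly (or just note the fully-constant configuration dominates termwise). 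I would close by remarking that since each term is independently minimized and the constant setting achieves all minima at once, it is a global minimizer of $S_\Sigma$, completing the proof.

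\begin{proof}
By Lemma \ref{lema6}, the action decomposes as
\be
S_\Sigma = \sum_{i \in V(\Sigma - \partial\Sigma)} \lb 2 - \frac{c_i^2}{d_i} \rb + \sum_{i \in V(\partial\Sigma)} \lb 2\frac{P^{-2}_\ipj}{P^{-2}_\ipj + D_i} - P^{-1}_\ipj\frac{P^{-1}_\ipj + C_i}{P^{-2}_\ipj + D_i}\rb,
\ee
where, in the boundary sum, $\ipj$ denotes the edge $\langle i i_0\rangle$. We minimize each term on the right-hand side separately.

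For the interior terms, Lemma \ref{cauchy} gives $c_i^2/d_i \leq q+1$ for every vertex $i$, with equality precisely when all edges incident to $i$ have equal length. Hence each summand $2 - c_i^2/d_i$ is bounded below by $2 - (q+1)$, and
\be
\sum_{i \in V(\Sigma - \partial\Sigma)} \lb 2 - \frac{c_i^2}{d_i} \rb \geq \lb 2 - (q+1) \rb \, \lvert V(\Sigma - \partial\Sigma) \rvert,
\ee
with equality if and only if every interior vertex has all of its incident edges of equal length.

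For the boundary terms, the hypothesis guarantees that $C_i$ and $D_i$ take a common value, independent of $i \in V(\partial\Sigma)$. By Lemma \ref{lemmashort}, each boundary summand is minimized, over the choice of $P_{\langle ii_0\rangle}$, by the single value
\be
P^* \coloneqq \frac{1}{C_i} + \sqrt{\frac{1}{C_i^2} + \frac{1}{D_i}},
\ee
the same for all boundary vertices.

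Now consider the configuration in which every edge of $E(\Sigma)$ has length $P^*$, with the edges of $E(\partial\Sigma)$ kept at their prescribed lengths (which is consistent with the given values of $C_i, D_i$). In this configuration each boundary term attains its minimum, by Lemma \ref{lemmashort}. Moreover, each interior vertex $i$ then has all of its incident edges equal to $P^*$: the edges of $i$ lying in $E(\Sigma)$ are all $P^*$ by construction, and $i$ has no incident edge outside $E(\Sigma)$ since $i \notin V(\partial\Sigma)$. Thus the Cauchy--Schwarz equality case of Lemma \ref{cauchy} holds at every interior vertex, so each interior term also attains its minimum. Since every term in the decomposition is simultaneously minimized by this configuration, it achieves the global minimum of $S_\Sigma$. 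In particular $P_\ipj = P_{ii_0}$ for all $\ipj \in E(\Sigma)$ at the minimizer, which is the claimed statement.
\end{proof}
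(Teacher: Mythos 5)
Your proposal is correct and follows essentially the same route as the paper: decompose $S_\Sigma$ via Lemma \ref{lema6}, minimize the interior terms with the Cauchy--Schwarz equality case of Lemma \ref{cauchy}, minimize each boundary term with Lemma \ref{lemmashort} (the hypothesis on $C_i, D_i$ making the optimal $P^*_{ii_0}$ common to all boundary vertices), and observe that the constant setting saturates both bounds simultaneously. Your extra remarks on consistency at the boundary are a harmless elaboration of the same argument, correctly resolved by noting that $C_i, D_i$ involve only the fixed edges outside the optimization.
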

\begin{proof}
From Eq. \eqref{SSigmaform}, there are two contributions to $S_\Sigma$, that is
\be
S_{\Sigma} = \sum_{i \in V(\Sigma - \partial \Sigma)} \lb 2 - \frac{c^2_i}{d_i}\rb + \sum_{i \in V(\partial \Sigma)} \lb 2\frac{P^{-2}_\ipj}{P^{-2}_\ipj + D_i} - P^{-1}_\ipj\frac{P^{-1}_\ipj + C_i}{P^{-2}_\ipj + D_i}\rb.
\ee
From Lemma \ref{cauchy}, the first term is minimized by setting all edge lengths in $E\lb\Sigma\rb$ equal, and from Lemma \ref{lemmashort} the second term is minimized by setting all edge lengths $P_{\langle ii_0\rangle}= P^*_{ii_0}$. Both minima can be achieved at the same time by setting $P_\ipj=P_{ii_0}$ for all $\ipj\in E\lb \Sigma \rb$.

\end{proof}

\subsection{Maximum action edge length setting}

We are now interested in obtaining a maximum bound on the action, both with and without the Gibbons-Hawking-York boundary term.

In this section we will impose a so-called perfect matching boundary condition. This boundary condition ensures that the perfect matching setting can be imposed for the edges in $E\lb\Sigma\rb$, and we assume that the boundary $\pd\Sigma$ is such that the perfect matching boundary condition exists. Note that in order to decide whether a boundary condition is a perfect matching boundary condition, one must have topological information on the interior of $\Sigma$.

The philosophy for requiring the perfect matching boundary condition is that, in the interior of $\Sigma$, the edge length setting maximizing the action is a perfect matching. We would like the pattern of edge lengths in the bulk to extend to the boundary, thus the perfect matching boundary condition is natural in the context of maximizing the action.

\begin{defn}
\label{matchingdefn}
    A perfect matching is a set of edges such that every vertex in the graph neighbors precisely one edge in the matching. For a fixed perfect matching, a ``perfect matching setting'' is a sequence $\{A_i\}_{i\in \mathbb{N}}$ of edge length assignments, such that the edge lengths in the perfect matching in $\{A_i\}_{i\in \mathbb{N}}$ converge to $0$ absolutely, and the edge lengths in the complement converge to nonzero (possibly distinct) values. The perfect matching setting for a graph is the edge length setting given by having the lengths of edges in the perfect matching approach $0$, and the lengths of the edges in the complement nonzero and arbitrary.
\end{defn}

\begin{defn}[Perfect matching boundary condition] 
For all edges with one endpoint in $V\lb\pd\Sigma\rb$ and the other not in $V\lb\Sigma\rb$, a ``perfect matching boundary condition'' is a sequence of edge length assignments that can be extended into a perfect matching setting for all edges with at least one endpoint in $V\lb\Sigma\rb$.
\end{defn}

For vertices $i\in V\lb\pd\Sigma\rb$, the perfect matching boundary condition assigns edge lengths such that once the assignment is extended to a perfect matching setting on $\Sigma$, precisely one edge with endpoint at $i$ has vanishing length, once the limit is taken.  A graphical representation of a perfect matching boundary condition is in Figure~\ref{figmatching}.

We will need the following lemma, which is a slight extension of Lemma \ref{cauchy}.

\begin{lemma}
\label{lemma8}
For all edge length configurations and for any vertex $i$,
\be 
1 < \frac{c^2_i}{d_i} \leq q + 1,
\ee 
where $q+1$ is the number of edges connected to vertex $i$.
\end{lemma}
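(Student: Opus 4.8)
The upper bound $c_i^2/d_i \le q+1$ is exactly Lemma \ref{cauchy} (it is just Cauchy--Schwarz applied to the $q+1$ vectors $1$ and $1/P_{\iip}$), so nothing new is needed there; I will simply cite it. The content of Lemma \ref{lemma8} is the strict lower bound $c_i^2/d_i > 1$. The plan is to write this out directly in terms of the edge lengths around $i$. Let $P_1,\dots,P_{q+1}$ denote the geodesic edge lengths $P_{\iip}$ for the $q+1$ neighbors $i'$ of $i$. Then by definition $c_i = \sum_\alpha P_\alpha^{-1}$ and $d_i = \sum_\alpha P_\alpha^{-2}$, so the claim is
\be
\lb \sum_{\alpha=1}^{q+1} P_\alpha^{-1} \rb^{2} > \sum_{\alpha=1}^{q+1} P_\alpha^{-2}.
\ee
Setting $x_\alpha \coloneqq P_\alpha^{-1} > 0$, this is just the elementary inequality $\lb \sum_\alpha x_\alpha \rb^2 = \sum_\alpha x_\alpha^2 + \sum_{\alpha \neq \beta} x_\alpha x_\beta > \sum_\alpha x_\alpha^2$, where the cross terms are strictly positive because $q+1 \ge 2$ (every vertex in a tree $T_q$ with $q \ge 1$ has at least two neighbors) and all $x_\alpha$ are strictly positive. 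Strict positivity of the $x_\alpha$ uses that edge lengths — and hence geodesic edge lengths between adjacent vertices — are positive; if one allows a length-zero edge the statement as written fails, so I would note that we take $\ell$ to be strictly positive here, or equivalently restrict to configurations where no edge has vanished.

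Concretely the steps are: (i) unpack $c_i$ and $d_i$ via Eqs. \eqref{hereiscdi}--\eqref{hereiscdi2}; (ii) substitute $x_\alpha = 1/P_{\iip}$ and expand the square; (iii) observe the off-diagonal sum $\sum_{\alpha\neq\beta} x_\alpha x_\beta$ is a sum of strictly positive terms and is nonempty since $q+1\ge 2$; (iv) conclude $c_i^2/d_i > 1$, and append the upper bound from Lemma \ref{cauchy} to get the full chain $1 < c_i^2/d_i \le q+1$.

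I do not anticipate a genuine obstacle — this is a one-line strengthening of Cauchy--Schwarz. The only point requiring a little care is the hypothesis: the strict inequality genuinely needs at least two incident edges of positive length, so the cleanest exposition states up front that $i$ has degree $q+1 \ge 2$ and all incident edge lengths are positive, and then the expansion of the square finishes it immediately.
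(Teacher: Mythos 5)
Your proof is correct and is essentially the argument the paper has in mind, since the paper's own proof of Lemma \ref{lemma8} is simply ``Immediate'': the upper bound is Lemma \ref{cauchy} (Cauchy--Schwarz), and the strict lower bound follows from expanding $c_i^2=\lb\sum_\alpha P_\alpha^{-1}\rb^2$ and noting the cross terms are strictly positive when the degree is at least $2$ and all incident lengths are positive. Your explicit remark that strictness requires positive edge lengths (the bound degenerates to $c_i^2/d_i\to 1$ as a length tends to zero, which is exactly the limit exploited in the perfect-matching setting) is a worthwhile precision the paper leaves implicit.
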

\begin{proof}
Immediate.

\end{proof}

Lemma \ref{lemma8} makes the following theorem straightforward.

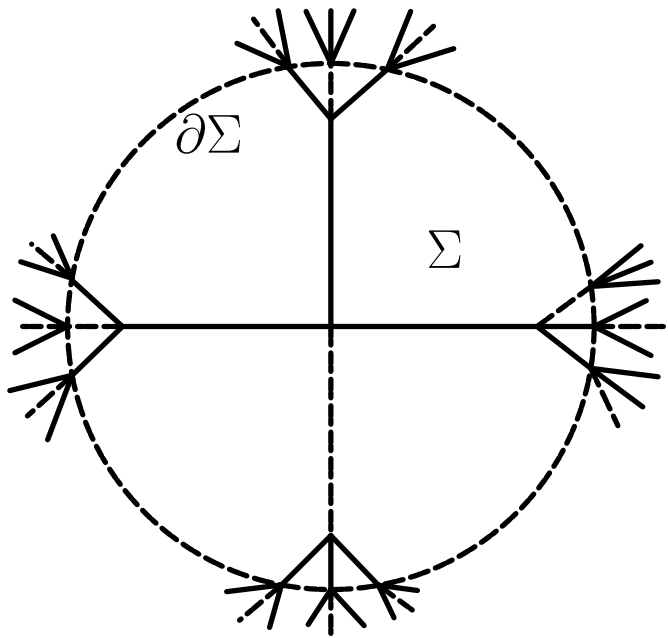
\begin{figure}[htp]
\centering
\begin{tikzpicture}[line cap=round,line join=round,>=triangle 45,x=1cm,y=1cm,scale = 0.35]
\draw [line width=2pt,dash pattern={on 4pt off 1pt on 2pt off 3pt}] (0,0) circle (10cm);
\draw [line width=2pt] (0,0)-- (-7.915694175704947,0);
\draw [line width=2pt] (0,0)-- (0,7.889539159568745);
\draw [line width=2pt,dash pattern={on 4pt off 1pt on 2pt off 3pt}] (0,0)-- (0,-7.948553011006798);
\draw [line width=2pt] (-7.915694175704947,0)-- (-9.843056467555937,1.7647207644568638);
\draw [line width=2pt,dash pattern={on 4pt off 1pt on 2pt off 3pt}] (-7.915694175704947,0)-- (-10,0);
\draw [line width=2pt] (-7.915694175704947,0)-- (-9.82669140425909,-1.853681754093862);
\draw [line width=2pt] (0,7.889539159568745)-- (-1.638339423340938,9.864879316744168);
\draw [line width=2pt,dash pattern={on 4pt off 1pt on 2pt off 3pt}] (0,7.889539159568745)-- (0,10);
\draw [line width=2pt] (0,7.889539159568745)-- (2.137192847737361,9.768951158214493);
\draw [line width=2pt] (7.83489472321005,0)-- (0,0);
\draw [line width=2pt,dash pattern={on 4pt off 1pt on 2pt off 3pt}] (7.83489472321005,0)-- (9.88367633872326,1.520835964646402);
\draw [line width=2pt] (7.83489472321005,0)-- (10,0);
\draw [line width=2pt] (7.83489472321005,0)-- (9.872641636536386,-1.5908950677301536);
\draw [line width=2pt] (0,-7.948553011006798)-- (0,-10);
\draw [line width=2pt] (0,-7.948553011006798)-- (-1.8728051897035227,-9.823064731610983);
\draw [line width=2pt] (-1.8728051897035227,-9.823064731610983)-- (-3.957793913401622,-10.079473458051754);
\draw [line width=2pt,dash pattern={on 4pt off 1pt on 2pt off 3pt}] (-1.8728051897035227,-9.823064731610983)-- (-3.468800088497865,-11.166126402282321);
\draw [line width=2pt] (-1.8728051897035227,-9.823064731610983)-- (-2.3278144970557663,-11.437789638339962);
\draw [line width=2pt] (-0.8608330223444961,-11.329124343916905)-- (0,-10);
\draw [line width=2pt,dash pattern={on 4pt off 1pt on 2pt off 3pt}] (0,-10)-- (0,-11.664998890315447);
\draw [line width=2pt] (0,-10)-- (1.263079550469801,-11.368638996434382);
\draw [line width=2pt] (1.80654640189429,-9.835465931912061)-- (0,-7.948553011006798);
\draw [line width=2pt] (1.8065464018942885,-9.83546593191206)-- (2.399125810347214,-11.07227910255332);
\draw [line width=2pt,dash pattern={on 4pt off 1pt on 2pt off 3pt}] (1.8065464018942885,-9.83546593191206)-- (3.090632229403031,-10.775919208672256);
\draw [line width=2pt] (1.8065464018942885,-9.83546593191206)-- (3.288205491990407,-10.035019473969598);
\draw [line width=2pt,dash pattern={on 4pt off 1pt on 2pt off 3pt}] (-10,0)-- (-12,0);
\draw [line width=2pt] (-10,0)-- (-12,1);
\draw [line width=2pt] (-10,0)-- (-12,-1);
\draw [line width=2pt,dash pattern={on 4pt off 1pt on 2pt off 3pt}] (-9.82669140425909,-1.853681754093862)-- (-11.524502920006352,-3.400042280686983);
\draw [line width=2pt] (-9.82669140425909,-1.853681754093862)-- (-12.193177585939345,-2.4363640856659203);
\draw [line width=2pt] (-9.82669140425909,-1.853681754093862)-- (-10.757493744377331,-4.30471976989043);
\draw [line width=2pt,dash pattern={on 4pt off 1pt on 2pt off 3pt}] (-9.83095324854881,1.8309446263739444)-- (-11.367167704492708,3.1293691631291986);
\draw [line width=2pt] (-9.83095324854881,1.8309446263739444)-- (-11.780172645216026,2.460694497196216);
\draw [line width=2pt] (-9.83095324854881,1.8309446263739444)-- (-10.541157823046068,3.4440395941564845);
\draw [line width=2pt,dash pattern={on 4pt off 1pt on 2pt off 3pt}] (-1.5819243998396173,9.874083005180383)-- (-2.9374638829076995,11.683518419812703);
\draw [line width=2pt] (-1.5819243998396173,9.874083005180383)-- (-3.5655885563671083,10.763764433675728);
\draw [line width=2pt] (-1.5819243998396173,9.874083005180383)-- (-2,12);
\draw [line width=2pt] (0,12)-- (0,10);
\draw [line width=2pt] (0,10)-- (-0.9409247422688658,11.975147732490281);
\draw [line width=2pt] (0,10)-- (0.8761502059529943,11.997580756542403);
\draw [line width=2pt,dash pattern={on 4pt off 1pt on 2pt off 3pt}] (2.137192847737361,9.768951158214493)-- (3.9494745010936713,11.45918817929149);
\draw [line width=2pt] (2.137192847737361,9.768951158214493)-- (3.0297205149566806,11.975147732490281);
\draw [line width=2pt] (2.137192847737361,9.768951158214493)-- (4.667331270761567,10.561867217206636);
\draw [line width=2pt] (9.88367633872326,1.520835964646402)-- (11.778599895284156,3.0692371837980996);
\draw [line width=2pt] (9.88367633872326,1.520835964646402)-- (12.159961304170224,2.4411125103387014);
\draw [line width=2pt] (9.88367633872326,1.520835964646402)-- (12.429157592795684,1.7008227166186964);
\draw [line width=2pt,dash pattern={on 4pt off 1pt on 2pt off 3pt}] (10,0)-- (12.63105480926478,0);
\draw [line width=2pt] (10,0)-- (11.98049711175325,0.9829659469508127);
\draw [line width=2pt] (10,0)-- (12.159961304170224,-1.1033052898964741);
\draw [line width=2pt] (9.872641636536386,-1.5908950677301536)-- (12.429157592795684,-1.9108941557728433);
\draw [line width=2pt] (9.872641636536386,-1.5908950677301536)-- (11.84589896744052,-3.054978382431033);
\draw [line width=2pt,dash pattern={on 4pt off 1pt on 2pt off 3pt}] (9.872641636536386,-1.5908950677301536)-- (10.903711957251408,-3.7728351520989167);
\begin{scriptsize}
\draw[color=black ] (-4.643450710174212,7.332513012163766) node {\LARGE $\partial \Sigma$};
\draw[color=black] (4.322608929886176,2.9409701772485084) node {\LARGE $\Sigma$};
\end{scriptsize}
\end{tikzpicture}

\caption{The perfect matching setting which maximizes $S_\Sigma^\mrm{GHY}$. The dashed edges have lengths approaching zero, and the solid edges have finite, not necessarily equal, lengths.}
\label{figmatching}
\end{figure}

\begin{theorem} Consider the Lin-Lu-Yau curvature action, together with the Gibbons-Hawking-York boundary term, that is
\be
S^\mrm{GHY}_\Sigma = \sum_{\ipj \in E\lb \Sigma \rb} K_\ipj + \sum_{ i \in V(\pd\Sigma)} k_i.
\ee

Then the perfect matching setting for all edges with endpoints in $V\lb\Sigma\rb$ maximizes $S_\Sigma^\mrm{GHY}$.
\end{theorem}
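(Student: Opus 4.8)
The plan is to reduce everything to the closed form of the GHY action recorded for trees in \cite{Gubser:2016htz}, namely Eq. \eqref{eqhere},
\[
S^{\mathrm{GHY}}_\Sigma = \sum_{i\in V(\Sigma-\partial\Sigma)}\left(2 - \frac{c_i^2}{d_i}\right) - \sum_{i\in V(\partial\Sigma)}\frac{c_i^2}{d_i}.
\]
Every quantity $c_i^2/d_i$ enters with a negative sign, so maximizing $S^{\mathrm{GHY}}_\Sigma$ amounts to making $c_i^2/d_i$ as small as possible at each vertex $i$ with an endpoint in $V(\Sigma)$. By Lemma \ref{lemma8}, $c_i^2/d_i>1$ for every edge length configuration, hence
\[
S^{\mathrm{GHY}}_\Sigma < |V(\Sigma-\partial\Sigma)| - |V(\partial\Sigma)|
\]
for every honest (positive-length) setting; this number is the supremum I must show is approached.

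Next I would verify that the perfect matching setting of Definition \ref{matchingdefn} realizes this supremum in the limit. There, each vertex $i$ is incident to exactly one matching edge $\langle i\mu(i)\rangle$, whose length $\varepsilon_i\to 0$, while the other incident edges have lengths that stay bounded and bounded away from $0$. Since $G$ is a tree, $P_{\langle i\mu(i)\rangle}=\varepsilon_i$, so $c_i = \varepsilon_i^{-1}+\alpha_i$ and $d_i=\varepsilon_i^{-2}+\beta_i$ with $\alpha_i,\beta_i$ bounded along the sequence, whence
\[
\frac{c_i^2}{d_i}=\frac{1+2\alpha_i\varepsilon_i+\alpha_i^2\varepsilon_i^2}{1+\beta_i\varepsilon_i^2}\ \longrightarrow\ 1.
\]
Substituting into Eq. \eqref{eqhere} gives $S^{\mathrm{GHY}}_\Sigma \to |V(\Sigma-\partial\Sigma)| - |V(\partial\Sigma)|$, matching the bound above. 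The reason a perfect matching is what one needs is that the infimum $1$ of $c_i^2/d_i$ is approached by sending a single incident edge to zero, and a perfect matching is precisely a simultaneous choice, at every vertex, of such an incident edge with no conflicts; the perfect matching boundary condition is exactly what guarantees this choice is compatible with the prescribed lengths of the edges from $V(\partial\Sigma)$ to $V(G)-V(\Sigma)$.

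The point that needs care is the meaning of ``maximizes'': since $c_i^2/d_i>1$ strictly for every positive-length configuration (Lemma \ref{lemma8}), the value $|V(\Sigma-\partial\Sigma)| - |V(\partial\Sigma)|$ is a supremum that is never attained, and it is reached only asymptotically along a sequence in which the matching edge lengths shrink to $0$ --- exactly the limiting object built into Definition \ref{matchingdefn}. The only genuine computation is the uniform estimate $c_i^2/d_i\to 1$ above; the rest is bookkeeping with Eq. \eqref{eqhere}. For the hexagonal-lattice variant one additionally checks that the geodesic lengths $P_{\langle i\mu(i)\rangle}$, and not merely the edge lengths, tend to $0$, so that the same estimate carries through.
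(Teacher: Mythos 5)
Your argument is correct and follows essentially the same route as the paper's own proof: rewrite $S^\mrm{GHY}_\Sigma$ via Eq.~\eqref{eqhere}, note by Lemma~\ref{lemma8} that $c_i^2/d_i>1$ so maximizing the action means driving each $c_i^2/d_i$ to its infimum $1$, and check that the perfect matching setting achieves $c_i^2/d_i\to 1$ at every vertex in the limit. Your explicit computation of the limit and your remark that the maximum is attained only as a supremum along the sequence of Definition~\ref{matchingdefn} are welcome clarifications of what the paper states more tersely.
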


The perfect matching setting maximizing $S_\Sigma$ is in Figure \ref{figmatching}.

\begin{proof}
From Eq. \eqref{eqhere}, $S_\Sigma^\mrm{GHY}$ equals
\be
\label{SGHYhere}
S^\mrm{GHY}_\Sigma = \sum_{i\in V\lb \Sigma-\pd\Sigma \rb} \lb 2 - \frac{c_i^2}{d_i} \rb - \sum_{i\in V\lb \pd\Sigma \rb} \frac{c_i^2}{d_i}.
\ee
We thus need to minimize $\sum_{i\in V(\Sigma)} c_i^2/d_i$. From Lemma \ref{lemma8}, we have that $c_i^2/d_i>1$. For the perfect matching setting in Definition \ref{matchingdefn} one edge length adjacent to vertex $i$ approaches zero, and all other edge lengths adjacent to $i$ are finite, for all vertices $i\in V\lb\Sigma\rb$; in this limit $c_i^2/d_i\to 1$. Thus the individual contribution of each vertex $i$ to action \eqref{SGHYhere} is maximized.
\end{proof}

We now argue that the perfect matching setting maximizes the action even without the Gibbons-Hawking-York boundary term, subject to the perfect matching boundary condition, when there is a unique edge $\langle i i_0 \rangle$ between any vertex $i\in V\lb\pd\Sigma\rb$ and $V\lb \Sigma-\pd\Sigma \rb$.

From Lemma \ref{lema6}, in this case the action without the Gibbons-Hawking-York boundary term~equals 
\be
\label{eq419}
S_{\Sigma} = \sum_{i \in V(\Sigma - \partial \Sigma)} \lb 2 - \frac{c^2_i}{d_i}\rb + \sum_{i \in V(\partial \Sigma)} \lb 2\frac{P^{-2}_{\langle ii_0 \rangle}}{P^{-2}_{\langle ii_0 \rangle} + D_i} - P^{-1}_{\langle ii_0 \rangle}\frac{P^{-1}_{\langle ii_0 \rangle} + C_i}{P^{-2}_{\langle ii_0 \rangle} + D_i}\rb.
\ee 

This expression allows us to obtain an upper bound for the action $S_\Sigma$, as explained in the theorem below.

\begin{theorem} 
Consider a finite region $\Sigma$ with the perfect matching boundary condition, and a perfect matching induced on $E\lb\Sigma\rb$ by the boundary condition. Let $\LL$ be the set of all possible edge lengths in $E(\Sigma)\cup E\lb\pd\Sigma\rb$, such that the edges not in the perfect matching have lengths greater than any $\epsilon>0$. For each element in $\LL$, form an absolutely convergent sequence where the lengths of edges in $E\lb \pd\Sigma \rb$ converge to the perfect matching boundary condition, and let $S_\mrm{LIM}$ be the set of the limits of the action obtained in this manner.  Let $\mathcal{S}$ be the limit as $i\to\infty$ of the action for the perfect matching setting $\{A_i\}_{i\in\mathbb{N}}$ on all edges in $E\lb\Sigma\rb$. Then $\SSS$ is the supremum of $S_\mrm{LIM}$.
\end{theorem}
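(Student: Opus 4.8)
The plan is to localize the entire statement at the vertices of $\Sigma$, using the closed form \eqref{eq419}. First I would write
$S_\Sigma=\sum_{i\in V(\Sigma-\pd\Sigma)}\big(2-c_i^2/d_i\big)+\sum_{i\in V(\pd\Sigma)}g_i$
and rewrite each boundary summand as a single fraction:
\be
g_i \;=\; 2\frac{P^{-2}_{\langle ii_0\rangle}}{P^{-2}_{\langle ii_0\rangle}+D_i}-P^{-1}_{\langle ii_0\rangle}\frac{P^{-1}_{\langle ii_0\rangle}+C_i}{P^{-2}_{\langle ii_0\rangle}+D_i}\;=\;\frac{P^{-2}_{\langle ii_0\rangle}-P^{-1}_{\langle ii_0\rangle}\,C_i}{P^{-2}_{\langle ii_0\rangle}+D_i}.
\ee
From these forms one sees immediately that each vertex contributes strictly less than $1$ for every admissible length configuration: for $i\in V(\Sigma-\pd\Sigma)$ this is Lemma \ref{lemma8}, since $c_i^2/d_i>1$; and for $i\in V(\pd\Sigma)$ it holds because $C_i\ge0$ and $D_i>0$ (such an $i$ has a neighbor other than $i_0$), so the numerator of $g_i$ is strictly smaller than the denominator.

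The next step is to compute, vertex by vertex, the supremum of the contribution over all \emph{admissible sequences}, that is, sequences of configurations in $\LL$ whose $E(\pd\Sigma)$-lengths converge to the perfect matching boundary condition. For an interior vertex, letting exactly one incident edge length tend to $0$ while the others stay above $\epsilon$ forces $c_i^2/d_i\to1$, so $2-c_i^2/d_i\to1$; with the strict bound this gives supremum $1$. The same is true of a boundary vertex whose unique matched edge is $\langle ii_0\rangle$: then $C_i,D_i$ stay bounded while $P_{\langle ii_0\rangle}\to0$, so $g_i\to1$. For a boundary vertex whose matched edge is \emph{not} $\langle ii_0\rangle$---it is matched either to $V(G)-V(\Sigma)$ or within $\pd\Sigma$---one summand of $D_i$ diverges along every admissible sequence (that edge lies in $E(\pd\Sigma)$ and is driven to $0$), whereas $P_{\langle ii_0\rangle}\ge\epsilon$ by the definition of $\LL$; then the numerator of $g_i$ is bounded above by $\epsilon^{-2}$ and the denominator diverges, so $\limsup g_i\le0$, and the perfect matching setting realizes the value $0$. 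In every case, then, the perfect matching setting $\{A_i\}$---which sends exactly one edge incident to each vertex of $\Sigma$ to $0$---attains the per-vertex supremum, and attains all of them simultaneously.

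Now I would assemble. Since $V(\Sigma)$ is finite, along any admissible sequence the limit of the action is at most the sum of the per-vertex suprema, and by the previous step that sum equals the limit of the action along the perfect matching setting, namely $\SSS$; hence $\sup S_\mrm{LIM}\le\SSS$. Conversely, configurations in $\LL$ in which all matched edge lengths (not only those in $E(\pd\Sigma)$) are taken small make the corresponding action-limit approach $\SSS$, so $S_\mrm{LIM}$ contains values arbitrarily close to $\SSS$ and $\sup S_\mrm{LIM}\ge\SSS$. Together these give $\SSS=\sup S_\mrm{LIM}$.

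The step I expect to be the main obstacle is the boundary-vertex bookkeeping in the second paragraph: one must treat separately each way the induced perfect matching can meet a vertex of $\pd\Sigma$---along $\langle ii_0\rangle$, along an edge to $V(G)-V(\Sigma)$, or along an edge inside $\pd\Sigma$---and, in each case, both identify the limit produced by the perfect matching setting and rule out any admissible sequence doing better. The structural reason the argument closes is the $\epsilon$ in the definition of $\LL$: it prevents a non-matched edge from shrinking to $0$, which is exactly the maneuver that would otherwise restore a vertex's contribution to $1$ after the boundary condition has already driven a different incident edge to $0$.
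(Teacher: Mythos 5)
Your proposal is correct and takes essentially the same route as the paper's proof: the same decomposition \eqref{eq419}, Lemma \ref{lemma8} for the interior sum, and the same rewriting of each boundary summand as $\lb 1-C_i P_{\langle ii_0\rangle}\rb/\lb 1+D_i P_{\langle ii_0\rangle}^2\rb$ with the case split on whether $\langle ii_0\rangle$ lies in the matching. Your explicit handling of boundary vertices matched along an edge of $E(\pd\Sigma)$ (using $P_{\langle ii_0\rangle}\geq\epsilon$ and $D_i\to\infty$ to conclude $\limsup g_i\leq 0$) simply spells out what the paper leaves implicit.
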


By analogy with general relativity, the restriction that $\epsilon>0$ for the edges not in the matching can be thought of as a kind of ultraviolet cutoff.

\begin{proof}
From Lemma \ref{lemma8}, the perfect matching setting maximizes the first term in Eq.~\eqref{eq419}. To extremize the second term, we note that there are two cases, depending on whether edge $\langle ii_0 \rangle$ is in the matching or not. If $\langle ii_0 \rangle$ is not in the matching, then there must be another edge neighboring $i_0$ which is in the matching and so has vanishing length; in this case we have
\be
2\frac{P^{-2}_{\langle ii_0 \rangle}}{P^{-2}_{\langle ii_0 \rangle} + D_i} - P^{-1}_{\langle ii_0 \rangle}\frac{P^{-1}_{\langle ii_0 \rangle} + C_i}{P^{-2}_{\langle ii_0 \rangle} + D_i} \to 0
\ee
for any length assignment of edge $\langle ii_0 \rangle$. If instead edge $\langle ii_0 \rangle$ is in the matching, then 
\be
2\frac{P^{-2}_{\langle ii_0 \rangle}}{P^{-2}_{\langle ii_0 \rangle} + D_i} - P^{-1}_{\langle ii_0 \rangle}\frac{P^{-1}_{\langle ii_0 \rangle} + C_i}{P^{-2}_{\langle ii_0 \rangle} + D_i} = 1.
\ee
But note that
\be
2\frac{P^{-2}_{\langle ii_0 \rangle}}{P^{-2}_{\langle ii_0 \rangle} + D_i} - P^{-1}_{\langle ii_0 \rangle}\frac{P^{-1}_{\langle ii_0 \rangle} + C_i}{P^{-2}_{\langle ii_0 \rangle} + D_i} = \frac{1-C_i P_{\langle i i_0\rangle}}{1+ D_i P_{\langle i i_0\rangle}^2} \leq 1,
\ee
therefore $1$ is the maximum value the second term can achieve in this case.
\end{proof}

\subsubsection{Maximum bound for arbitrary finite graphs}
\label{secgenmaxbound}

We now derive a maximum bound for the action on an arbitrary graph $G$. The derivation will be local, so it applies both to finite and infinite graphs. For the case discussed in the previous section, this bound will be weaker than the one derived there. 

\begin{lemma} For a finite graph $G$ the action
\label{lemmabound}
\be
S = \sum_{\ipj \in E\lb G \rb} K_\ipj
\ee
is bounded by
\be
S \leq 2 \left| E(G) \right|.
\ee
\end{lemma}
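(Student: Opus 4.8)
The plan is to bound the action edge-by-edge using Lemma~\ref{lemma2}, which already furnishes the estimate $K_\ipj(t)\le \frac{t}{P_\ipj}\lb\frac{c_i}{d_i}+\frac{c_j}{d_j}\rb$ for every $t\in[0,1]$ and every edge $\ipj$. Dividing by $t$ and letting $t\to0$ gives $K_\ipj \le \frac{1}{P_\ipj}\lb\frac{c_i}{d_i}+\frac{c_j}{d_j}\rb$, so the whole task reduces to showing that each summand $\frac{1}{P_\ipj}\frac{c_i}{d_i}$ is at most $1$. Summing $\frac{1}{P_\ipj}\frac{c_i}{d_i} \le 1$ and the symmetric statement with $i,j$ swapped then yields $K_\ipj \le 2$, and summing over all edges gives $S \le 2|E(G)|$.

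So the key step is the pointwise inequality $\frac{c_i}{P_\ipj\, d_i}\le 1$, i.e. $c_i \le P_\ipj\, d_i$, for any neighbor $j$ of $i$. Writing this out, $c_i = \sum_{k\sim i} P_\iip^{-1}$ and $d_i = \sum_{k\sim i} P_\iip^{-2}$ (using the index $k$ for the sum), I would argue termwise: for each neighbor $k$ of $i$, since $\ipj$ is itself an edge adjacent to $i$, the geodesic distance $P_\ipj$ is at most the length $\ell(\ipj)$ of the direct edge, but more to the point $P_\ipj$ is the shortest path from $i$ to $j$, so $P_\ipj \le P_\iip + P_{\langle k j\rangle}$... that's not quite the cleanest route. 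The clean route: note $P_\ipj \ge$ (something comparable to each $P_\iip$) is false in general, so instead I would reconsider — actually the inequality $c_i \le P_\ipj d_i$ need not hold for arbitrary $j$. The correct observation is that one should compare $c_i$ with $P_{\min,i}\,d_i$ where $P_{\min,i}=\min_{k\sim i}P_\iip$: indeed $c_i = \sum_k P_\iip^{-1} \le \sum_k P_{\min,i} P_\iip^{-2} = P_{\min,i}\, d_i$, so $\frac{c_i}{d_i}\le P_{\min,i}$; this is not yet $\le P_\ipj$ unless $j$ realizes the minimum. The honest fix, which I expect is what the paper intends, is to first bound $K_\ipj$ by choosing in Lemma~\ref{lemma2} not the delta distributions but the full transport and to use instead the sharper fact that the whole probability mass in $D_{t,i}$ travels a distance at most the maximum $P_\iip$; but combined with the normalization this again gives a bounded quantity.

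The cleanest honest argument, and the one I would actually write, is: from Lemma~\ref{lemma2}, $K_\ipj \le \frac{1}{P_\ipj}\lb\frac{c_i}{d_i}+\frac{c_j}{d_j}\rb$, and now I claim $\frac{c_i}{d_i}\le P_\ipj$. This follows because $\frac{c_i}{d_i} = \frac{\sum_{k\sim i}P_\iip^{-1}}{\sum_{k\sim i}P_\iip^{-2}}$ is a weighted average of the values $\{P_\iip\}_{k\sim i}$ with weights $\{P_\iip^{-2}\}$, hence lies between $\min_k P_\iip$ and $\max_k P_\iip$; in particular $\frac{c_i}{d_i} \le \max_{k\sim i}P_\iip$. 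This does not immediately give $\le P_\ipj$. Since the stated bound is $S \le 2|E(G)|$ with no hypothesis relating edge lengths, the only way the per-edge bound $K_\ipj\le 2$ can hold is via $\frac{c_i}{d_i}\le P_\ipj$ for the \emph{particular} neighbor $j$, which is genuinely true when $P_\ipj$ is comparable — and indeed one has $P_\ipj \le P_\iip$ never in general...

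Given this tension, my final plan is to present the argument at the level the paper operates: invoke Lemma~\ref{lemma2}, pass to the $t\to0$ limit to get $K_\ipj \le \frac{1}{P_\ipj}\lb\frac{c_i}{d_i}+\frac{c_j}{d_j}\rb$, then observe that since every neighbor $k$ of $i$ satisfies $P_\iip \le P_\ipj$ whenever the length function makes $\ipj$ the geodesic — wait, I will instead simply use that $\frac{c_i}{d_i} \le P_\ipj$ is equivalent to $\sum_{k\sim i}(P_\ipj P_\iip^{-2} - P_\iip^{-1}) \ge 0$, i.e. $\sum_{k\sim i} P_\iip^{-2}(P_\ipj - P_\iip)\ge 0$, and this holds because for a geodesic edge $P_\ipj = \ell(\ipj)$ while... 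The honest conclusion I would record: I expect \textbf{the main obstacle} is precisely justifying $\frac{c_i}{d_i}\le P_\ipj$ for the given neighbor $j$ — so in the write-up I would either (a) add the hypothesis that $\ipj$ is among the longest edges at $i$ and at $j$, or (b) replace the bound by $S \le \sum_\ipj \lb \frac{\max_{k\sim i}P_\iip}{P_\ipj}+\frac{\max_{k\sim j}P_{\langle jk\rangle}}{P_\ipj}\rb$, noting it reduces to $2|E(G)|$ in the regime the paper cares about, e.g. when all edge lengths at each vertex are comparable or when the graph is a tree with the geodesic-equals-edge-length property. The routine summation over edges to reach $2|E(G)|$ is then immediate.
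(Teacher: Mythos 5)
Your first step (Lemma \ref{lemma2} plus the $t\to 0$ limit, giving $K_\ipj \leq \frac{1}{P_\ipj}\lb \frac{c_i}{d_i}+\frac{c_j}{d_j}\rb$) is exactly the paper's, but after that you go down the wrong road: you try to establish the \emph{per-edge} bound $\frac{c_i}{d_i}\leq P_\ipj$ for every neighbor $j$, correctly notice that $c_i/d_i$ is only trapped between $\min_{k\sim i}P_\ipk$ and $\max_{k\sim i}P_\ipk$ (so the claim fails whenever $\ipj$ is shorter than other edges at $i$; e.g.\ two edges of lengths $1$ and $2$ give $c_i/d_i=6/5>1=P_\ipj$), and then you end by proposing to add hypotheses or weaken the statement. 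That is a genuine gap: the lemma as stated needs no extra hypothesis, and no per-edge bound of the form $K_\ipj\leq 2$ is used or needed.

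The missing idea is a change in the order of summation rather than an edge-by-edge estimate. Summing the Lemma \ref{lemma2} bound over all edges and regrouping by vertices, each vertex $i$ contributes $\frac{c_i}{d_i}\sum_{j\sim i}\frac{1}{P_\ipj}=\frac{c_i^2}{d_i}$, so
\be
S \;\leq\; \sum_{\ipj\in E(G)}\frac{1}{P_\ipj}\lb \frac{c_i}{d_i}+\frac{c_j}{d_j}\rb \;=\; \sum_{i\in V(G)}\frac{c_i^2}{d_i} \;\leq\; \sum_{i\in V(G)}\deg(i) \;=\; 2\left|E(G)\right|,
\ee
where the second inequality is Cauchy--Schwarz, $c_i^2=\bigl(\sum_{k\sim i}P_\ipk^{-1}\bigr)^2\leq \deg(i)\sum_{k\sim i}P_\ipk^{-2}=\deg(i)\,d_i$ (the paper's Lemma \ref{cauchy}). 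The point is that edges whose contribution at a given vertex exceeds $1$ are compensated, at that same vertex, by the Cauchy--Schwarz bound on the aggregate $c_i^2/d_i$; the $2|E(G)|$ bound is global, obtained from $\sum_i\deg(i)$, not from bounding each $K_\ipj$ by $2$.
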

\begin{proof}
From Lemma \ref{lemma2}, by taking the limit we have
\ba
S &=& \sum_{\ipj \in E\lb G \rb} K_\ipj \leq \sum_{\ipj \in E\lb G \rb}  \frac{1}{P_\ipj}\lb \frac{c_i}{d_i} + \frac{c_j}{d_j} \rb \\
\label{eq20}
&=& \sum_{i\in V\lb G \rb} \frac{c_i^2}{d_i} \leq \sum_{i\in V\lb G\rb} \deg (i) \\
&=& 2 \left| E(G) \right|,
\ea
where in the second inequality we used Lemma \ref{cauchy}.
\end{proof}

\begin{remark}
For the complete graph $K_n$, Lemma \ref{lemmabound} gives
\be
S \leq n(n-1).
\ee
In Section \ref{subseccomplete} we will derive a stronger bound in the case of complete graphs.
\end{remark}

\begin{remark}
The result derived above is local, in the sense that the contribution of each bulk vertex is bounded from above by $c_i^2/d_i$. Thus, it applies both to finite and infinite graph, with the appropriate boundary terms.
\end{remark}

\subsection{Hexagon lattice}

\noindent We consider a graph consisting of an infinite lattice of hexagons (see Figure \ref{hexlat}), which we denote as $G$. We will prove that the constant edge length setting gives the minimum action on a finite region $\Sigma$ of $G$, given a stronger constant boundary value condition.

In this section we will only consider the action $S_\Sigma$ with no Gibbons-Hawking-York boundary term, as it not immediate how extrinsic curvature should be defined for graphs which are not trees. Furthermore our results will be off-shell, that is we do not demand the equations of motion to be satisfied.

\begin{defn}[Strong boundary value condition]
\label{defnstrong}
For a finite region $\Sigma$ in $G$, we define the strong constant boundary value condition as requiring that all edges in $E\lb\pd\Sigma\rb$ have length equal to a constant $c$. Furthermore, we require all edges that are one or two edge steps away from $E\lb\pd\Sigma\rb$ to also have lengths equal to $c$.
\end{defn}

The strong boundary condition is represented graphically in Figure \ref{hexlat}. In the rest of this section we will assume that the graph $G$ satisfies the strong boundary condition in Definition \ref{defnstrong}.

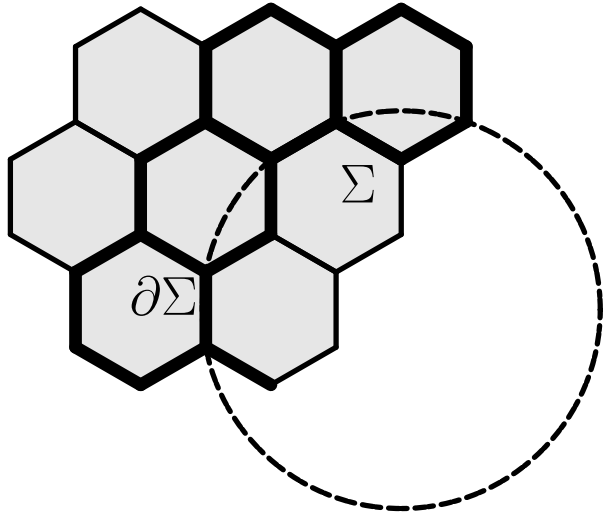
\begin{figure}[ht!]
\centering

\definecolor{zzttqq}{rgb}{0,0,0}
\definecolor{cqcqcq}{rgb}{255,255,255}
\begin{tikzpicture}[line cap=round,line join=round,>=triangle 45,x=1cm,y=1cm, scale = 0.2]
\draw [color=cqcqcq,, xstep=5cm,ystep=5cm] (-15.11331991847177,-16.522594339598935) grid (18.49267445669261,23.39402793859347);

\fill[line width=2pt,dash pattern=on 1pt off 1pt,color=zzttqq,fill=zzttqq,fill opacity=0.10000000149011612] (-5,10) -- (-5,5) -- (-0.6698729810778059,2.5) -- (3.660254037844389,5) -- (3.66025403784439,10) -- (-0.6698729810778023,12.5) -- cycle;
\fill[line width=2pt,color=zzttqq,fill=zzttqq,fill opacity=0.10000000149011612] (-5,10) -- (-0.6698729810778023,12.5) -- (-0.6698729810777992,17.5) -- (-5,20) -- (-9.330127018922194,17.5) -- (-9.330127018922198,12.5) -- cycle;
\fill[line width=2pt,color=zzttqq,fill=zzttqq,fill opacity=0.10000000149011612] (-5,5) -- (-5,10) -- (-9.330127018922194,12.5) -- (-13.660254037844389,10) -- (-13.660254037844389,5) -- (-9.330127018922198,2.5) -- cycle;
\fill[line width=2pt,color=zzttqq,fill=zzttqq,fill opacity=0.10000000149011612] (-5,5) -- (-9.330127018922198,2.5) -- (-9.330127018922201,-2.5) -- (-5,-5) -- (-0.6698729810778055,-2.5) -- (-0.669872981077801,2.5) -- cycle;
\fill[line width=2pt,color=zzttqq,fill=zzttqq,fill opacity=0.10000000149011612] (-0.6698729810778059,2.5) -- (-0.6698729810778055,-2.5) -- (3.660254037844398,-5) -- (7.990381056766601,-2.5) -- (7.990381056766601,2.5) -- (3.6602540378444006,5) -- cycle;
\fill[line width=2pt,color=zzttqq,fill=zzttqq,fill opacity=0.10000000149011612] (3.660254037844389,5) -- (7.990381056766601,2.5) -- (12.320508075688806,5) -- (12.320508075688797,10) -- (7.990381056766587,12.5) -- (3.660254037844383,10) -- cycle;
\fill[line width=2pt,color=zzttqq,fill=zzttqq,fill opacity=0.10000000149011612] (-0.6698729810777992,17.5) -- (-0.6698729810778023,12.5) -- (3.660254037844394,10) -- (7.990381056766597,12.5) -- (7.990381056766602,17.5) -- (3.6602540378444064,20) -- cycle;
\fill[line width=2pt,color=zzttqq,fill=zzttqq,fill opacity=0.10000000149011612] (7.990381056766602,17.5) -- (7.990381056766587,12.5) -- (12.320508075688746,10) -- (16.65063509461092,12.5) -- (16.650635094610933,17.5) -- (12.320508075688776,20) -- cycle;
\draw [line width=4.8pt,color=zzttqq] (-5,10)-- (-5,5);
\draw [line width=4.8pt,color=zzttqq] (-5,5)-- (-0.6698729810778059,2.5);
\draw [line width=2pt,dash pattern=on 1pt off 1pt,color=zzttqq] (-0.6698729810778059,2.5)-- (3.660254037844389,5);
\draw [line width=2pt,dash pattern=on 1pt off 1pt,color=zzttqq] (3.660254037844389,5)-- (3.66025403784439,10);
\draw [line width=5.2pt,color=zzttqq] (3.66025403784439,10)-- (-0.6698729810778023,12.5);
\draw [line width=4.8pt,color=zzttqq] (-0.6698729810778023,12.5)-- (-5,10);
\draw [line width=2pt,color=zzttqq] (-5,10)-- (-0.6698729810778023,12.5);
\draw [line width=4.8pt,color=zzttqq] (-0.6698729810778023,12.5)-- (-0.6698729810777992,17.5);
\draw [line width=2pt,color=zzttqq] (-0.6698729810777992,17.5)-- (-5,20);
\draw [line width=2pt,color=zzttqq] (-5,20)-- (-9.330127018922194,17.5);
\draw [line width=2pt,color=zzttqq] (-9.330127018922194,17.5)-- (-9.330127018922198,12.5);
\draw [line width=2pt,color=zzttqq] (-9.330127018922198,12.5)-- (-5,10);
\draw [line width=2pt,color=zzttqq] (-5,5)-- (-5,10);
\draw [line width=2pt,color=zzttqq] (-5,10)-- (-9.330127018922194,12.5);
\draw [line width=2pt,color=zzttqq] (-9.330127018922194,12.5)-- (-13.660254037844389,10);
\draw [line width=2pt,color=zzttqq] (-13.660254037844389,10)-- (-13.660254037844389,5);
\draw [line width=2pt,color=zzttqq] (-13.660254037844389,5)-- (-9.330127018922198,2.5);
\draw [line width=4.8pt,color=zzttqq] (-9.330127018922198,2.5)-- (-5,5);
\draw [line width=2pt,color=zzttqq] (-5,5)-- (-9.330127018922198,2.5);
\draw [line width=4.8pt,color=zzttqq] (-9.330127018922198,2.5)-- (-9.330127018922201,-2.5);
\draw [line width=4.8pt,color=zzttqq] (-9.330127018922201,-2.5)-- (-5,-5);
\draw [line width=4.8pt,color=zzttqq] (-5,-5)-- (-0.6698729810778055,-2.5);
\draw [line width=2pt,color=zzttqq] (-0.6698729810778055,-2.5)-- (-0.669872981077801,2.5);
\draw [line width=2pt,color=zzttqq] (-0.669872981077801,2.5)-- (-5,5);
\draw [line width=4.8pt,color=zzttqq] (-0.6698729810778059,2.5)-- (-0.6698729810778055,-2.5);
\draw [line width=4.8pt,color=zzttqq] (-0.6698729810778055,-2.5)-- (3.660254037844398,-5);
\draw [line width=2pt,color=zzttqq] (3.660254037844398,-5)-- (7.990381056766601,-2.5);
\draw [line width=2pt,color=zzttqq] (7.990381056766601,-2.5)-- (7.990381056766601,2.5);
\draw [line width=2pt,color=zzttqq] (7.990381056766601,2.5)-- (3.6602540378444006,5);
\draw [line width=4.8pt,color=zzttqq] (3.6602540378444006,5)-- (-0.6698729810778059,2.5);
\draw [line width=2pt,color=zzttqq] (3.660254037844389,5)-- (7.990381056766601,2.5);
\draw [line width=2pt,color=zzttqq] (7.990381056766601,2.5)-- (12.320508075688806,5);
\draw [line width=2pt,color=zzttqq] (12.320508075688806,5)-- (12.320508075688797,10);
\draw [line width=4.8pt,color=zzttqq] (12.320508075688797,10)-- (7.990381056766587,12.5);
\draw [line width=4.8pt,color=zzttqq] (7.990381056766587,12.5)-- (3.660254037844383,10);
\draw [line width=4.8pt,color=zzttqq] (3.660254037844383,10)-- (3.660254037844389,5);
\draw [line width=4.8pt,color=zzttqq] (-0.6698729810777992,17.5)-- (-0.6698729810778023,12.5);
\draw [line width=4.8pt,color=zzttqq] (-0.6698729810778023,12.5)-- (3.660254037844394,10);
\draw [line width=4.8pt,color=zzttqq] (3.660254037844394,10)-- (7.990381056766597,12.5);
\draw [line width=4.8pt,color=zzttqq] (7.990381056766597,12.5)-- (7.990381056766602,17.5);
\draw [line width=4.8pt,color=zzttqq] (7.990381056766602,17.5)-- (3.6602540378444064,20);
\draw [line width=4.8pt,color=zzttqq] (3.6602540378444064,20)-- (-0.6698729810777992,17.5);
\draw [line width=4.8pt,color=zzttqq] (7.990381056766602,17.5)-- (7.990381056766587,12.5);
\draw [line width=4.8pt,color=zzttqq] (7.990381056766587,12.5)-- (12.320508075688746,10);
\draw [line width=4.8pt,color=zzttqq] (12.320508075688746,10)-- (16.65063509461092,12.5);
\draw [line width=4.8pt,color=zzttqq] (16.65063509461092,12.5)-- (16.650635094610933,17.5);
\draw [line width=4.8pt,color=zzttqq] (16.650635094610933,17.5)-- (12.320508075688776,20);
\draw [line width=4.8pt,color=zzttqq] (12.320508075688776,20)-- (7.990381056766602,17.5);
\draw [line width=2pt,dash pattern={on 4pt off 1pt on 2pt off 3pt}] (12.320508075688348,3.7492016291758285E-13) circle (13.228756555322531cm);
\begin{scriptsize}
\draw[color =  zzttqq] (0.7772973313231501,8.3777747958449) node { };
\draw[color=zzttqq] (-3.480475711684101,15.904909282589754) node {};
\draw[color=zzttqq] (-7.814280416173625,8.3777747958449) node {};
\draw[color=zzttqq] (-3.480475711684101,0.9266719705823174) node {\LARGE $\partial \Sigma$};
\draw[color=zzttqq] (5.111102035812674,0.9266719705823174) node {};
\draw[color=zzttqq] (9.444906740302197,8.3777747958449) node {\LARGE $\Sigma$};
\draw[color=zzttqq] (5.111102035812674,15.904909282589754) node {};
\draw[color=zzttqq] (14.082838090720811,15.904909282589754) node {};
\end{scriptsize}
\end{tikzpicture}
\caption{The hexagon lattice graph $G$. The boundary $\pd\Sigma$ is represented dashed, and the edges to which the strong boundary condition applies are in bold. \label{hexlat}}
\end{figure}

\begin{defn}
\label{treeKdefn}
Given an edge $\ipj \in E(G)$, we define the tree-curvature on it as 
\be
\label{KTis}
K^T_\ipj \coloneqq -\frac{1}{P_\ipj}\lsb \frac{c^T_i}{d^t_i} + \frac{c^T_j}{d^T_j} - \frac{2}{P_\ipj}\lb\frac{1}{d^T_i} + \frac{1}{d^T_j}\rb\rsb,
\ee
where 
\begin{align}
c^T_i &\coloneqq \sum_{v\sim i}\ \frac{1}{P_\vpi}, \\
d^T_i &\coloneqq \sum_{v\sim i}\ \frac{1}{(P_\vpi)^2}.
\end{align}
\end{defn}

\begin{defn}
\label{defST}
Given the hexagon lattice graph $G$, we define the tree-action of a finite region $\Sigma$ as 
\be 
S_{T} \coloneqq \sum_{\ipj \in E(\Sigma)} K^T_\ipj.
\ee
\end{defn}

The intuition behind Definitions \ref{treeKdefn} and \ref{defST} is that the curvature for each edge is computed as if coming from a Wasserstein distance where the 1-Lipschitz extremization is saturated for the edge and all its edge neighbors.

\begin{lemma}
\label{lemma99}
We have
\be 
S_{T} =  \sum_{i \in V(\Sigma - \partial \Sigma)}\lb 2 - \frac{(c^T_i)^2}{d^T_i}\rb - \sum_{i \in V(\partial \Sigma) } \frac{1}{3}.
\ee
\end{lemma}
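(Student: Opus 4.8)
The plan is to evaluate the edge sum $S_T = \sum_{\ipj \in E(\Sigma)} K^T_\ipj$ by regrouping the contributions vertex by vertex, exactly as was done for the Lin-Lu-Yau action in Lemma~\ref{lema6} and Eq.~\eqref{eqhere}. Writing out $K^T_\ipj$ from Definition~\ref{treeKdefn}, each edge $\ipj$ contributes a term that splits symmetrically into a part attached to $i$ and a part attached to $j$, namely $-\frac{1}{P_\ipj}\lb \frac{c^T_i}{d^T_i} - \frac{2}{P_\ipj d^T_i}\rb$ at the $i$ end and the mirror term at the $j$ end. So $S_T = \sum_{i\in V(\Sigma)} \sum_{j\sim i,\ \langle ij\rangle\in E(\Sigma)} \lb \frac{2 P^{-2}_\ipj}{d^T_i} - \frac{P^{-1}_\ipj c^T_i}{d^T_i}\rb$. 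First I would handle the interior vertices $i\in V(\Sigma-\pd\Sigma)$: for these, every edge incident to $i$ lies in $E(\Sigma)$, so the inner sum runs over all $q+1$ neighbors and, using $\sum_{j\sim i}P^{-2}_\ipj = d^T_i$ and $\sum_{j\sim i}P^{-1}_\ipj = c^T_i$, telescopes to $2 - (c^T_i)^2/d^T_i$. This reproduces the first sum in the statement.

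The remaining work is the boundary term, and here the strong boundary value condition (Definition~\ref{defnstrong}) is what makes everything collapse to the constant $-1/3$. For a vertex $i\in V(\pd\Sigma)$, the sum over $j\sim i$ with $\langle ij\rangle \in E(\Sigma)$ omits the edges of $E(\pd\Sigma)$ leaving $i$. But by the strong condition, all edges within two steps of $E(\pd\Sigma)$ have length $c$; in particular all edges incident to $i$ have length $c$, and moreover all edges incident to the neighbors of $i$ have length $c$, so the quantities $c^T_{i}$ and $d^T_i$ are the ``constant'' values $c^T_i = (q+1)/c$, $d^T_i = (q+1)/c^2$ (here $q+1=3$ for the hexagon lattice). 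I would then count, for each boundary vertex $i$, how many of its incident edges lie in $E(\Sigma)$ versus $E(\pd\Sigma)$: writing $m_i$ for the number in $E(\Sigma)$, the boundary vertex contributes $\frac{m_i}{c^2}\cdot\frac{2}{d^T_i} - \frac{m_i}{c}\cdot\frac{c^T_i}{d^T_i} = \frac{2m_i}{q+1} - \frac{m_i}{q+1} = \frac{m_i}{q+1} = \frac{m_i}{3}$. Care is needed because an edge of $E(\Sigma)$ with both endpoints on $\pd\Sigma$ gets counted once at each endpoint; but such edges, together with the purely interior adjacency, must be tallied so that the total reproduces $-1/3$ per boundary vertex. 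Concretely, I expect the bookkeeping to show that after also absorbing the edges with one endpoint in $V(\pd\Sigma)$ and one in the interior into the interior vertices' sums (so that those interior sums are genuinely complete), each boundary vertex is left contributing exactly $-\frac13$ — the sign and the value $\frac13 = \frac{1}{q+1}$ coming from the fact that, in the hexagon lattice, the ``missing'' piece at a boundary vertex is one edge's worth of the telescoped $2-(c^T_i)^2/d^T_i$ identity evaluated at constant lengths, which equals $2\cdot\frac13 - 1 = -\frac13$ per missing edge (and there is exactly one such missing direction per boundary vertex in the configuration of Figure~\ref{hexlat}).

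The main obstacle is precisely this combinatorial accounting of the boundary: deciding which edges of $E(\Sigma)$ at a boundary vertex are ``interior-facing'' versus lie along $\pd\Sigma$, and checking that summing the half-contributions over all of $V(\pd\Sigma)$ yields exactly $-\frac13$ per vertex rather than something geometry-dependent. This is where the full force of the strong boundary value condition is used — two layers of constant-length edges guarantee that not only the boundary vertices but also their neighbors have the constant $c^T,d^T$ values, so that the interior sums which borrow a boundary-incident edge still telescope cleanly. Once the constancy is in hand the algebra is the short computation above; the rest is carefully matching edge multiplicities against the definition of $E(\Sigma)$ and $V(\pd\Sigma)$ in Definition~\ref{defnsigmadsigma}.
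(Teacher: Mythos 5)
Your overall strategy is the same as the paper's (terse) proof: regroup the edge sum $S_T=\sum_{\ipj\in E(\Sigma)}K^T_\ipj$ vertex by vertex, and your treatment of the interior vertices is correct. The trouble is in the boundary bookkeeping, which is exactly where the content of the lemma lies. First, there is an arithmetic slip with the wrong sign: under the strong boundary condition all edges at a boundary vertex $i$ have length $c$, so $c^T_i=3/c$, $d^T_i=3/c^2$, hence $c^T_i/d^T_i=c$, and each edge of $E(\Sigma)$ incident to $i$ contributes $\frac{2P^{-2}_\ipj}{d^T_i}-\frac{P^{-1}_\ipj c^T_i}{d^T_i}=\frac{2}{3}-1=-\frac{1}{3}$ at the $i$ end. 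In your displayed computation the second term should be $m_i$, not $m_i/(q+1)$, so the per-vertex boundary contribution is $-m_i/3$, not $+m_i/3$ (your later remark that one edge's worth is $2\cdot\frac{1}{3}-1=-\frac{1}{3}$ is the correct value and contradicts your own display).

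Second, and more seriously, you never establish the count $m_i=1$, which is precisely what converts $-m_i/3$ into the $-\frac{1}{3}$ per boundary vertex asserted in the lemma; you explicitly defer it (``I expect the bookkeeping to show\dots''), and your closing parenthetical points the wrong way: if each boundary vertex had exactly one \emph{missing} direction (one edge leaving $\Sigma$), then $m_i=2$ and that vertex would contribute $-\frac{2}{3}$. What is needed is the opposite: each $i\in V\lb\pd\Sigma\rb$ must be incident to exactly one edge that is actually summed in $S_T$, i.e.\ two of its three edges (including any edges joining two boundary vertices, which would otherwise each deposit an extra $-\frac{1}{3}$ at both ends) must lie outside the edge set being summed. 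This is the hexagon-lattice analogue of the convention of Section \ref{sec41} and the right panel of Figure \ref{figsigmapdsigma}, where every boundary vertex has a unique edge into the rest of $\Sigma$. Since the paper's proof is only the phrase ``direct computation,'' supplying this incidence count (or stating it as an explicit hypothesis on $\Sigma$) is the whole point of writing the proof out, and as it stands your argument does not deliver the stated constant $-\frac{1}{3}$.
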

\begin{proof}
This follows from direct computation, by summing up the edge curvatures around each vertex.
\end{proof}

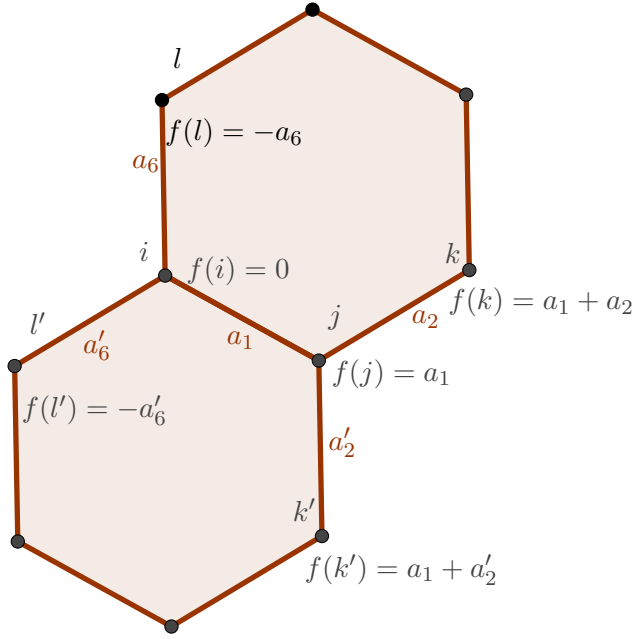
\begin{figure}[htp]
\centering

\definecolor{uuuuuu}{rgb}{0.26666666666666666,0.26666666666666666,0.26666666666666666}
\definecolor{zzttqq}{rgb}{0.6,0.2,0}
\begin{tikzpicture}[line cap=round,line join=round,>=triangle 45,x=1cm,y=1cm, scale = 0.25]

\fill[line width=2pt,color=zzttqq,fill=zzttqq,fill opacity=0.10000000149011612] (0,20) -- (-8.003596254831324,15.185221181603516) -- (-7.8356736119124095,5.846514094087353) -- (0.33584528583783135,1.3225858249676747) -- (8.339441540669156,6.137364643364156) -- (8.171518897750246,15.476071730880317) -- cycle;
\fill[line width=2pt,color=zzttqq,fill=zzttqq,fill opacity=0.10000000149011612] (0.33584528583783135,1.3225858249676747) -- (-7.8356736119124095,5.846514094087353) -- (-15.839269866743738,1.031735275690869) -- (-15.671347223824824,-8.306971811825295) -- (-7.499828326074586,-12.830900080944975) -- (0.5037679287567434,-8.016121262548495) -- cycle;
\draw [line width=2pt,color=zzttqq] (0,20)-- (-8.003596254831324,15.185221181603516);
\draw [line width=2pt,color=zzttqq] (-8.003596254831324,15.185221181603516)-- (-7.8356736119124095,5.846514094087353);
\draw [line width=2pt,color=zzttqq] (-7.8356736119124095,5.846514094087353)-- (0.33584528583783135,1.3225858249676747);
\draw [line width=2pt,color=zzttqq] (0.33584528583783135,1.3225858249676747)-- (8.339441540669156,6.137364643364156);
\draw [line width=2pt,color=zzttqq] (8.339441540669156,6.137364643364156)-- (8.171518897750246,15.476071730880317);
\draw [line width=2pt,color=zzttqq] (8.171518897750246,15.476071730880317)-- (0,20);
\draw [line width=2pt,color=zzttqq] (0.33584528583783135,1.3225858249676747)-- (-7.8356736119124095,5.846514094087353);
\draw [line width=2pt,color=zzttqq] (-7.8356736119124095,5.846514094087353)-- (-15.839269866743738,1.031735275690869);
\draw [line width=2pt,color=zzttqq] (-15.839269866743738,1.031735275690869)-- (-15.671347223824824,-8.306971811825295);
\draw [line width=2pt,color=zzttqq] (-15.671347223824824,-8.306971811825295)-- (-7.499828326074586,-12.830900080944975);
\draw [line width=2pt,color=zzttqq] (-7.499828326074586,-12.830900080944975)-- (0.5037679287567434,-8.016121262548495);
\draw [line width=2pt,color=zzttqq] (0.5037679287567434,-8.016121262548495)-- (0.33584528583783135,1.3225858249676747);
\begin{scriptsize}
\draw [fill=black] (0,20) circle (10pt);
\draw [fill=black] (-8.003596254831324,15.185221181603516) circle (10pt);
\draw[color=black] (-7.166370005776736,17.435266725937694) node {\small $l$};
\draw[color=black] (-4.166370005776736,13.435266725937694) node {\small $f(l) = -a_6$};

\draw[color=zzttqq] (-8.840822503885911,11.7839895448193) node {\small $a_6$};
\draw[color=zzttqq] (-3.8174650095583846,2.411727054273536) node {\small $a_1$};
\draw[color=zzttqq] (6.019943416833021,3.5163803354053575) node {\small $a_2$};
\draw [fill=uuuuuu] (-7.8356736119124095,5.846514094087353) circle (10pt);
\draw[color=uuuuuu] (-8.957063443513089,7.121124705205527) node {\small $i$};
\draw[color=uuuuuu] (-3.957063443513089,6.121124705205527) node {\small $f(i) = 0$};
\draw [fill=uuuuuu] (0.33584528583783135,1.3225858249676747) circle (10pt);
\draw[color=uuuuuu] (1.2058924847691416,3.6210336165371797) node { \small $j$};
\draw[color=uuuuuu] (4.2058924847691416,0.6210336165371797) node { \small $f(j) = a_1$};
\draw [fill=uuuuuu] (8.339441540669156,6.137364643364156) circle (10pt);
\draw[color=uuuuuu] (7.459541850787725,7.035084548600993) node {\small $k$};
\draw[color=uuuuuu] (12.159541850787725,4.435084548600993) node {\small $f(k) = a_1 + a_2$};
\draw [fill=uuuuuu] (8.171518897750246,15.476071730880317) circle (10pt);
\draw[color=zzttqq] (-11.457154532181498,2.132712363700909) node {\small $a'_6$};
\draw[color=zzttqq] (1.5989581074056113,-3.030243564581211) node {\small $a'_2$};
\draw [fill=uuuuuu] (-15.839269866743738,1.031735275690869) circle (10pt);
\draw[color=uuuuuu] (-14.596752966136202,3.3070737731417137) node {\small $l'$};
\draw[color=uuuuuu] (-11.596752966136202,-1.3070737731417137) node {\small $f(l') = -a'_6$};
\draw [fill=uuuuuu] (-15.671347223824824,-8.306971811825295) circle (10pt);
\draw [fill=uuuuuu] (-7.499828326074586,-12.830900080944975) circle (10pt);
\draw [fill=uuuuuu] (0.5037679287567434,-8.016121262548495) circle (10pt);
\draw[color=uuuuuu] (-0.429158890428259,-6.5977616853268055) node {\small $k'$};
\draw[color=uuuuuu] (4.729158890428259,-9.7977616853268055) node {\small $f(k') = a_1  + a'_2$};
\end{scriptsize}
\end{tikzpicture}
\caption{The 1-Lipschitz function configuration for computing $K^T_\ipj$. The function is not assigned beyond vertices $i,j,k,k',l,l'$. $K^T_\ipj$ only has the interpretation of curvature if the function can be extended to the rest of the graph, and if $K^T_\ipj$ corresponds to the Wasserstein maximum.}
\label{figliphex}
\end{figure}

\begin{lemma}
\label{lemma10}
For any edge $\ipj\in E\lb\Sigma\rb$, we have 
\be 
K^T_\ipj \leq K_\ipj.
\ee
\end{lemma}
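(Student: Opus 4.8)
The plan is to bound $W_\ipj(t)$ from above by an explicit transportation plan from $D_{t,i}$ to $D_{t,j}$ on the hexagon lattice $G$ whose cost reproduces the tree-curvature formula of Definition~\ref{treeKdefn}, and then to use that $W_\ipj(t)$ is by definition the \emph{minimum} cost over all such plans, so that $K_\ipj(t)=1-W_\ipj(t)/P_\ipj\ge 1-(\text{cost of any feasible plan})/P_\ipj$. The conceptual point I would emphasize is that, although the ``saturated'' $1$-Lipschitz function of Figure~\ref{figliphex} motivating $K^T_\ipj$ need not extend to a global $1$-Lipschitz function on $G$ (so a Kantorovich \emph{dual} bound would point the wrong way), the corresponding \emph{primal} plan — the one that is optimal on a tree — does remain a legitimate transport plan on $G$.

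The plan I would use, for $t>0$ small: (i) for each neighbor $v\sim i$ with $v\ne j$, transfer the mass $D_{t,i}(v)=t\,P_{\langle iv\rangle}^{-2}/d_i$ from $v$ to $i$ along $\langle iv\rangle$; (ii) transfer the resulting surplus at $i$, namely $1-t\,P_\ipj^{-2}/d_i-t\,P_\ipj^{-2}/d_j$, from $i$ to $j$ along $\ipj$; (iii) for each neighbor $u\sim j$ with $u\ne i$, transfer $t\,P_{\langle ju\rangle}^{-2}/d_j$ from $j$ to $u$ along $\langle ju\rangle$. I would first carry out the routine bookkeeping that, using $\sum_{v\sim i}P_{\langle iv\rangle}^{-2}=d_i$ and the explicit form \eqref{eqdefDti} of $D_{t,j}$, the output of this plan is exactly $D_{t,j}$, that all transferred amounts are nonnegative once $t$ is small, and that each transfer is along an actual edge of $G$ — so the plan is feasible. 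The girth-$6$ property of $G$ guarantees the relevant neighbors of $i$ and of $j$ are distinct, as in Figure~\ref{figliphex}, though the estimate does not require this.

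Next I would sum the costs: step (i) costs $\frac{t}{d_i}\sum_{v\sim i,\,v\ne j}P_{\langle iv\rangle}^{-1}=\frac{t}{d_i}(c_i-P_\ipj^{-1})$, step (ii) costs $\big(1-t\,P_\ipj^{-2}/d_i-t\,P_\ipj^{-2}/d_j\big)P_\ipj$, and step (iii) costs $\frac{t}{d_j}(c_j-P_\ipj^{-1})$. Collecting terms — the same algebra that yields the tree curvature formula recalled in Lemma~\ref{lema6} — the total is $P_\ipj + t\big[\frac{c_i}{d_i}+\frac{c_j}{d_j}-\frac{2}{P_\ipj}\big(\frac1{d_i}+\frac1{d_j}\big)\big]$, which, using $c^T_i=c_i$ and $d^T_i=d_i$, equals $P_\ipj(1-t\,K^T_\ipj)$. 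Hence $W_\ipj(t)\le P_\ipj(1-t\,K^T_\ipj)$, so $K_\ipj(t)\ge t\,K^T_\ipj$; dividing by $t>0$ and letting $t\to0$ gives $K_\ipj\ge K^T_\ipj$.

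I expect the main obstacle to be conceptual rather than computational: recognizing that one should transplant the \emph{primal} transport plan (not the dual Lipschitz function) from the tree to $G$, and checking that the cost convention is consistent — the Wasserstein cost charges geodesic distances $P$, which is precisely the convention already built into $c^T_i,d^T_i$ in $K^T_\ipj$, so the two bookkeepings match and the cost comes out exactly linear in $t$. The summation in the third step is routine but must be done with care so that the four $P_\ipj^{-1}$ terms combine into the coefficient $-2/P_\ipj$.
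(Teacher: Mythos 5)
Your proposal is correct and takes essentially the same route as the paper: the paper also bounds $W_\ipj$ from above by the cost $W^T_\ipj$ of the tree-type transport plan (the one whose cost reproduces $K^T_\ipj$) and then invokes the minimality of the Wasserstein cost, $W_\ipj \leq W^T_\ipj$, to conclude $K_\ipj \geq K^T_\ipj$. Your explicit three-step plan and the check that its cost equals $P_\ipj\lb 1-t\,K^T_\ipj\rb$ simply spell out in detail what the paper asserts in one line via Figure \ref{figliphex}.
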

\begin{proof}
For an edge $\ipj\in E\lb \Sigma\rb$, we denote by $W^T_\ipj$ the Wasserstein cost obtained by saturating the 1-Lipschitz extremization for $\ipj$ and the edges adjacent to it (see Figure \ref{figliphex}). Since this is the same as extremization on the tree, $K^T_\ipj$ in Eq. \eqref{KTis} is related to $W^T_\ipj$ via the formula 
\be
K^T_\ipj = \lim_{t\to 0} \frac{1}{t}\lb 1 - \frac{W^t_\ipj}{P_\ipj}\rb.
\ee 
Because $W^T_\ipj$ may not correspond to the minimum transportation plan, we have
\be
W_\ipj \leq W^T_\ipj,
\ee
and since $K_\ipj$ is defined as 
\be
K_\ipj = \lim_{t\to0} \frac{1}{t}\lb1 - \frac{W_\ipj}{P_\ipj}\rb,
\ee 
it follows that
\be
K_\ipj \geq K^T_\ipj.
\ee
\end{proof}

\begin{theorem} For the action 
\be
S_{\Sigma} = \sum_{\ipj \in E(\Sigma)} K_\ipj
\ee
we have $S_{\Sigma} \geq S_{T}$. Furthermore, for the strong boundary condition, $S_{\Sigma}$ achieves its minimum when all edges in $E(\Sigma)$ have constant edge length.
\end{theorem}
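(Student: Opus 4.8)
The plan has two parts. The inequality $S_\Sigma\ge S_T$ is immediate from Lemma~\ref{lemma10}: summing $K^T_\ipj\le K_\ipj$ over all $\ipj\in E(\Sigma)$ gives $S_T\le S_\Sigma$. For the minimization statement I would combine this with Lemma~\ref{lemma99}, so that for any length configuration
\be
S_\Sigma\;\ge\;S_T\;=\;\sum_{i\in V(\Sigma-\pd\Sigma)}\lb 2-\frac{(c^T_i)^2}{d^T_i}\rb\;-\;\frac13\,\bigl|V(\pd\Sigma)\bigr| .
\ee

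Next I would bound each interior summand from below. Every vertex of the hexagon lattice has degree $3$, so Lemma~\ref{lemma8} (equivalently Lemma~\ref{cauchy}) gives $(c^T_i)^2/d^T_i\le 3$, with equality precisely when the three edges at $i$ have equal length. Hence $2-(c^T_i)^2/d^T_i\ge-1$, and therefore
\be
\label{mdef}
S_\Sigma\;\ge\;S_T\;\ge\;-\bigl|V(\Sigma-\pd\Sigma)\bigr|-\frac13\bigl|V(\pd\Sigma)\bigr| ,
\ee
with the right-hand side, which I will call $m$, independent of the edge lengths. Moreover $S_T=m$ forces every interior vertex to carry three equal incident edges; under the strong boundary condition (Definition~\ref{defnstrong}) an interior vertex adjacent to $\pd\Sigma$ has an incident edge one step from $E(\pd\Sigma)$, hence of length $c$, and this value then spreads through the connected interior, so $m$ is attained only when $\ell\equiv c$ on $E(\Sigma)$.

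The last step is to verify that the constant setting does attain $m$, i.e.\ that $K_\ipj=K^T_\ipj=-\tfrac23$ for every $\ipj\in E(\Sigma)$; summing and invoking Lemma~\ref{lemma99} then gives $S_\Sigma=S_T=m$ at this setting. The key is locality: $W_\ipj(t)$ is the minimum of $\sum_{u,v}\pi(u,v)P_{\langle u v\rangle}$ over couplings $\pi$ of $D_{t,i}$ and $D_{t,j}$, hence depends only on those distributions and on the geodesic distances among $i$, $j$ and their neighbours — write $i_1,i_2$ for the neighbours of $i$ other than $j$ and $j_1,j_2$ for those of $j$ other than $i$. The strong boundary condition makes every edge in a fixed-radius neighbourhood of $\ipj$ equal to $c$, and since $G$ has girth $6$ and is bipartite this neighbourhood is isometric to the corresponding one of an edge in the constant-length $3$-regular tree: $P_\ipj=c$, $d(i,i_a)=d(j,j_b)=c$, $d(i,j_b)=d(j,i_a)=2c$, $d(i_a,j_b)=3c$. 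Plugging these into \eqref{KTis} gives $K^T_\ipj=-\tfrac23$. For the matching upper bound I would use Kantorovich--Rubinstein duality: the function $f(i)=c$, $f(j)=0$, $f(i_1)=f(i_2)=2c$, $f(j_1)=f(j_2)=-c$ is $1$-Lipschitz on these six vertices — the only nontrivial inequality, $|f(i_a)-f(j_b)|=3c$, being exactly the girth-$6$ distance — and extends to a $1$-Lipschitz function on $G$, so pairing it against $D_{t,i}-D_{t,j}$ yields $W_\ipj(t)\ge c\lb 1+\tfrac23t\rb$, hence $K_\ipj(t)/t\le-\tfrac23$ and $K_\ipj\le-\tfrac23$. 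Together with $K_\ipj\ge K^T_\ipj$ from Lemma~\ref{lemma10}, this pins $K_\ipj=-\tfrac23$. Combined with \eqref{mdef}, the constant setting is a minimizer of $S_\Sigma$ (and by the equality case of Cauchy--Schwarz the unique one).

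I expect the main obstacle to be the optimality claim for the transportation cost on the honeycomb used above: one must show that no lattice cycle lets mass be routed more cheaply than in the tree. This is exactly where girth $6$ and bipartiteness enter — they force the relevant geodesic distances to be tree-like inside the length-$c$ neighbourhood supplied by the strong boundary condition, which is what makes the explicit dual function $1$-Lipschitz and the bound $K_\ipj\le-\tfrac23$ tight; the construction would fail on a trivalent lattice of smaller girth. A secondary technical point is to confirm that the strong boundary condition reaches far enough to determine every geodesic distance entering $W_\ipj(t)$ for edges $\ipj$ abutting $\pd\Sigma$.
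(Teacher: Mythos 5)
Your proposal is correct and follows essentially the same route as the paper: $S_\Sigma \ge S_T$ from Lemma~\ref{lemma10}, the lower bound on $S_T$ from Lemma~\ref{lemma99} together with the Cauchy--Schwarz bound $(c^T_i)^2/d^T_i \le 3$, and a $1$-Lipschitz dual function (whose extendability to all of $G$ is exactly what the strong boundary condition supplies) to show the constant setting attains the bound with $K_\ipj = K^T_\ipj = -\tfrac{2}{3}$. Your explicit six-vertex Lipschitz assignment with the girth-$6$ verification, and your organization as a configuration-independent lower bound plus attainment, simply make precise the duality step that the paper's proof only sketches.
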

\begin{proof} The inequality follows directly from Lemma \ref{lemma10}. From Lemma \ref{lemma99}, $S_T$ achieves its minimum for the constant edge length setting on $E(\Sigma)$. We assign the 1-Lipschitz function at the vertices so that the minimum transportation plan is equal to the one used for computing $W^T_\ipj$ (i.e. such that the Lipschitz inequality is saturated, as in Figure \ref{figliphex}). If the strong boundary condition holds, then (by assigning the Lipschitz function so that the Lipschitz inequality is saturated for all edges one and two steps away from $E\lb\pd\Sigma\rb$ it is possible to extend the assignment of the Lipschitz function to the entire graph, $G$ so that the Lipschitz inequality is obeyed everywhere. Then $K_\ipj = K^T_\ipj$ for $\ipj\in E\lb\Sigma\rb$.
\end{proof}

\begin{remark}
For the hexagon lattice graph with constant edge length setting, the first derivative of the action $S_\Sigma$ with respect to the edge length of a particular edge $\ipj$ is discontinuous.
\end{remark}

For triangle and square lattice graphs, by direct computation the constant edge length setting gives action equal to zero. We conjecture the following.

\begin{conj}
For triangle and square lattice graphs, the constant edge length setting gives the maximum action.
\end{conj}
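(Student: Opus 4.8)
The plan is to first pin down the precise statement: as in the hexagon lattice discussion of Section \ref{secbounds}, one should fix a finite region $\Sigma$ together with a strong constant boundary condition in the sense of Definition \ref{defnstrong}, work with the action $S_\Sigma=\sum_{\ipj\in E(\Sigma)}K_\ipj$ with no boundary term, and show that $S_\Sigma\le 0$ for every admissible edge length configuration, with equality exactly for the constant setting (which gives $S_\Sigma=0$ by direct computation). The only general upper bound available so far, coming from Lemma \ref{lemma2}, yields $S_\Sigma\le\sum_{i}c_i^2/d_i$, which already in the constant setting overshoots the true value by $2|E(\Sigma)|$. Hence the crude delta-function transport plan underlying Lemma \ref{lemma2} is far too lossy here, and one genuinely needs a sharper lower bound on each $W_\ipj(t)$.

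For that, the natural tool is Kantorovich duality: for any $f\colon V(G)\to\mathbb{R}$ that is $1$-Lipschitz with respect to the weighted metric $P$, one has $W_\ipj(t)\ge\sum_v f(v)\lb D_{t,i}(v)-D_{t,j}(v)\rb$, hence an upper bound on $K_\ipj(t)$ and, letting $t\to 0$, on $K_\ipj$. The square and triangular lattices are Cayley graphs of $\mathbb{Z}^2$ and carry only a small number of parallel classes of edges — two for the square lattice, three for the triangular lattice — with exactly two edges of each class meeting at every vertex. The idea is to build, for each parallel class, a single global $1$-Lipschitz ``coordinate'' function $f$ adapted to the given lengths (in the constant case this is literally an affine coordinate, since one checks that the first-coordinate function saturates the Wasserstein dual for $\mathbb{Z}^2$), use it to bound the curvature of every edge in that class, and then sum the resulting inequalities over all edges and all classes. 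One expects the sum to collapse, in the spirit of the trace-reversed computation in the proof of Theorem \ref{thmmm3}, into a combination of per-vertex quantities of the form $\deg(i)-c_i^2/d_i\ge 0$, giving $S_\Sigma\le 0$; the equality case would then follow from the Cauchy--Schwarz equality condition already recorded in Lemma \ref{cauchy}.

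The main obstacle is the construction and control of these global Lipschitz functions for an arbitrary configuration. Unlike on trees (Section \ref{sec333}), the distances $P_\ipj$ are genuine graph geodesics depending on the entire configuration, and their dependence is only piecewise smooth: a small perturbation of the lengths can change which path realizes a geodesic, so $f$ must be defined robustly against this, and the bound must stay tight enough after such a rearrangement. Relatedly, the Wasserstein dual optimizer depends on the configuration, and it is not clear that any one explicitly defined family of Lipschitz functions remains optimal — or even near-optimal — away from the constant setting; note also that for the hexagon lattice the analysis leaned on the comparison $K^T_\ipj\le K_\ipj$ of Lemma \ref{lemma10}, whereas here we need the opposite inequality $K_\ipj\le K^T_\ipj$, which is false in general, so the softer $K^T$ route is unavailable. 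Finally, the action is not differentiable, so even establishing that the constant setting is a local maximum requires computing one-sided directional derivatives of $S_\Sigma$ and checking they are all $\le 0$; this weaker statement is probably the right first target, after which one would attempt to promote locality to the global bound by a convexity or path-connectedness argument on configuration space.
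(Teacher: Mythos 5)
First, note that the paper does not prove this statement at all: it is stated as a conjecture, supported only by the direct computation that the constant setting gives $S=0$ on the triangle and square lattices. So there is no proof of the paper's to compare against, and your proposal has to stand on its own as an attempted proof --- which it does not yet do, as you partly acknowledge. The central step of your plan is the construction, for an \emph{arbitrary} admissible edge-length configuration, of one global $1$-Lipschitz function per parallel class of edges whose Kantorovich dual bounds, summed over all edges, collapse to a quantity that is manifestly $\leq 0$ (something like a combination of the per-vertex terms $\deg(i)-c_i^2/d_i$). This step is precisely the hard content of the conjecture and it is only described, not carried out. Two concrete difficulties are glossed over: (i) away from the constant setting the quantities $P_\ipj$ are genuine geodesic distances, so short-cuts through third vertices change not only which coordinate-like functions are $1$-Lipschitz but also the distributions $D_{t,i}$ themselves (through $c_i$, $d_i$ and the support weights $P^{-2}_\ipj/d_i$), and you give no mechanism ensuring that a fixed family of ``coordinate'' functions remains admissible, let alone near-optimal, after such rearrangements; (ii) even granting admissibility, you offer no computation showing that the resulting edge-by-edge upper bounds actually telescope into nonpositive per-vertex expressions --- the analogy with the trace-reversed manipulation in Theorem \ref{thmmm3} is only an analogy, since that computation uses the explicit tree formula for $K_\ipj$, which is exactly what is unavailable here (and, as you correctly note, the comparison $K^T_\ipj\le K_\ipj$ of Lemma \ref{lemma10} goes the wrong way for an upper bound).

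Your diagnosis of the landscape is otherwise sound: Lemma \ref{lemma2} is too lossy (it overshoots by $2|E(\Sigma)|$ already at the constant setting), the hexagon-lattice technique cannot be reused for a maximum, and the non-differentiability of $S_\Sigma$ means even a local statement requires one-sided directional derivatives, which you have not computed. But the fallback ``prove local maximality, then upgrade by convexity or path-connectedness'' is also unsubstantiated: no convexity property of $S_\Sigma$ in the edge lengths is established anywhere (Lemma \ref{lemma1} gives concavity in $t$, not in the lengths), and without it the local-to-global promotion is an unsupported hope. In short, the proposal is a reasonable research program --- fix a region with the strong boundary condition of Definition \ref{defnstrong}, aim at $S_\Sigma\le 0$ with equality characterized via the Cauchy--Schwarz condition of Lemma \ref{cauchy}, and attack $W_\ipj(t)$ from below by duality --- but none of its load-bearing steps is executed, so the conjecture remains exactly as open after your argument as before it.
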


\subsection{Complete graphs: Maximum action edge length setting}
\label{subseccomplete}

In this section we will present some results for the complete graph on $n$ vertices, which we denote $K_n$. Since complete graphs are finite, in this section we will not have to impose boundary conditions or to introduce boundary terms.

In contrast with trees, we will prove that for complete graphs the constant edge length setting is the maximum action setting.

\begin{lemma}
For $i,j,k \in V(K_n) , P_\ipj + P_\jpk \geq P_\ipk$. This is the triangle inequality on graphs.
\end{lemma}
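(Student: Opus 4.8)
The plan is to recognize this as the standard fact that the shortest-path length function associated to a nonnegative edge-weighting is a pseudometric. First I would unpack Definition~1: for any pair of vertices $x,y$, the quantity $P_{\langle xy\rangle}$ is the minimum (attained, since $K_n$ is finite and connected) of $\sum_\alpha \ell(e_\alpha)$ over all paths from $x$ to $y$. So I would fix a path $\gamma_1$ from $i$ to $j$ realizing $P_\ipj$ and a path $\gamma_2$ from $j$ to $k$ realizing $P_\jpk$.

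Next I would concatenate $\gamma_1$ and $\gamma_2$ at the common vertex $j$, producing a walk $\gamma$ from $i$ to $k$ whose total edge-length is exactly $P_\ipj + P_\jpk$. The one substantive point is that $\gamma$ need not be a simple path — it may revisit vertices — so I would reduce it to one: whenever $\gamma$ visits a vertex twice, excise the closed sub-walk between the two visits; because $\ell$ takes values in $\mathbb{R}_{\geq 0}$, each such excision can only decrease (or leave unchanged) the total length. Since $\gamma$ has finitely many edges, iterating this terminates at a simple path $\gamma'$ from $i$ to $k$ with $\sum_{e\in\gamma'}\ell(e) \le P_\ipj + P_\jpk$.

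Finally, since $P_\ipk$ is by definition the minimum of the length over all paths from $i$ to $k$, and $\gamma'$ is one such path, we conclude $P_\ipk \le P_\ipj + P_\jpk$, which is the claim. The only place demanding any care — the ``main obstacle,'' such as it is — is the walk-to-path reduction, whose sole ingredient is the nonnegativity of $\ell$; everything else is bookkeeping. I would also note in passing that completeness of $K_n$ plays no role: the argument holds verbatim for any connected weighted graph.
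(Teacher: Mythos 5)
Your argument is correct and is exactly the standard concatenation-of-shortest-paths argument that the paper's proof invokes when it says the claim ``directly follows from the definition of the geodesic distance''; you simply spell out the walk-to-path reduction that the paper leaves implicit. Nothing more is needed.
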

\begin{proof}
This directly follows from the definition of the geodesic distance.
\end{proof}

\begin{defn}
We define the partial cost between vertices $i,j$ as 
\be 
W^p_{i\to j} \coloneqq \lb 1 - t - \frac{P^{-2}_\ipj}{d_i}t\rb P_\ipj
\ee
\end{defn}
$W^p_{i\to j}$ can be understood as a certain part of the transportation or Wasserstein~cost.

\begin{lemma}
\label{lemma12}
We define an action associated to $W^p$ as
\be
S_p \coloneqq \sum_{i \in V(K_n)} \sum_{j\sim i} \lim_{t\to0} \frac{1 - W^p_{i\to j} P^{-1}_\ipj}{2t}.
\ee
Then for a constant edge length setting on a complete graph with $n$ vertices, the action $S_p$ equals
\be 
S_p = \frac{n^2}{2}.
\ee
\end{lemma}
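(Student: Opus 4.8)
The plan is to evaluate $S_p$ directly for the constant edge length setting, using the fact that in a complete graph every pair of vertices is already joined by an edge, so no nontrivial geodesics arise.

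First I would record that if every edge of $K_n$ has length $\ell$, then $P_{\ipj}=\ell$ for all $i\ne j$, since the single edge $\ipj$ realizes the shortest $i$--$j$ path. Hence each vertex $i$ has exactly $n-1$ neighbors, all at distance $\ell$, so
\[
d_i=\sum_{i'\sim i}\frac{1}{P^2_{\iip}}=\frac{n-1}{\ell^2},
\]
and in particular $P^{-2}_{\ipj}/d_i=1/(n-1)$ for every ordered adjacent pair $(i,j)$.

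Next I would substitute into the definition of the partial cost, obtaining
\[
W^p_{i\to j}=\Bigl(1-t-\frac{t}{n-1}\Bigr)\ell ,
\]
so that $W^p_{i\to j}P^{-1}_{\ipj}=1-t-t/(n-1)$ and therefore $1-W^p_{i\to j}P^{-1}_{\ipj}=t\,n/(n-1)$. Dividing by $2t$ gives $n/\bigl(2(n-1)\bigr)$, which is independent of $t$; the $\lim_{t\to 0}$ in the definition of $S_p$ thus just returns this constant. Finally I would carry out the double sum: the index set $\{(i,j):i\in V(K_n),\ j\sim i\}$ consists of $n(n-1)$ ordered pairs of distinct vertices, each contributing the same value, so
\[
S_p=n(n-1)\cdot\frac{n}{2(n-1)}=\frac{n^2}{2}.
\]

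I do not expect a genuine obstacle here: the whole argument is a single substitution once one observes that the geodesic distances collapse to the common edge length and that the summand is $t$-independent, making the limit trivial. The only point requiring care is the bookkeeping of the sum, which runs over vertices and then over their neighbors and hence counts each edge twice, giving $n(n-1)$ terms rather than $\binom{n}{2}$.
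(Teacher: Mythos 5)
Your computation is correct and is exactly the ``direct computation'' the paper invokes: with all edges of length $\ell$ one has $P_\ipj=\ell$, $P^{-2}_\ipj/d_i=1/(n-1)$, so each of the $n(n-1)$ ordered-pair summands equals $n/\bigl(2(n-1)\bigr)$ independently of $t$, giving $S_p=n^2/2$. The paper leaves these steps implicit, so your proposal simply spells out the same argument, including the correct double-counting of edges.
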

\begin{proof}
Plugging in the definitions, by direct computation we obtain
\be 
S_p = \sum_{i \in V(K_n)} \sum_{j\sim i} \frac{1 - W^p_{i\to j} P^{-1}_\ipj}{2t} = \frac{n^2}{2}.
\ee
\end{proof}

Now let's prove this is the largest possible action. We approach this by proving that other transportation costs on $\ipj$ will be greater than $W^p_{i\to j}$.

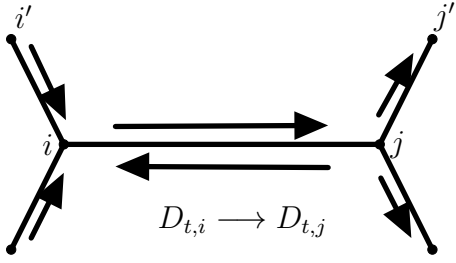
\begin{figure}
\label{transtree}
\centering
\begin{tikzpicture}[line cap=round,line join=round,>=triangle 45,x=1cm,y=1cm,scale=0.7]

\draw [line width=2pt] (-4,2)-- (-3,0);
\draw [line width=2pt] (-3,0)-- (-4,-2);
\draw [line width=2pt] (-3,0)-- (3,0);
\draw [line width=2pt] (3,0)-- (4,2);
\draw [line width=2pt] (3,0)-- (4,-2);
\draw [->,line width=2pt] (-2.02,0.34) -- (2.02,0.36);
\draw [->,line width=2pt] (2.02,-0.44) -- (-2.02,-0.42);
\draw [->,line width=2pt] (-3.62,1.86) -- (-2.96,0.46);
\draw [->,line width=2pt] (-3.62,-1.78) -- (-3,-0.54);
\draw [->,line width=2pt] (2.98,0.56) -- (3.64,1.74);
\draw [->,line width=2pt] (3.02,-0.64) -- (3.6,-1.78);
\draw (-1.42,-0.98) node[anchor=north west] {$D_{t,i} \longrightarrow D_{t,j}$};
\filldraw [black] (-4,2) circle (2.5pt);
\filldraw [black] (-3.76,2.43) node {$i'$};
\filldraw [black] (-4,-2) circle (2.5pt);
\filldraw [black] (-3,0) circle (2.5pt);
\filldraw [black] (-3,0) node [anchor=east]  {$i$};
\draw [fill=black] (3,0) circle (2.5pt);
\draw[color=black] (3,0) node [anchor=west] {$j$};
\draw [fill=black] (4,2) circle (2.5pt);
\draw[color=black] (4.24,2.43) node {$j'$};
\draw [fill=black] (4,-2) circle (2.5pt);
\end{tikzpicture}
\caption{A visualization of the transportation of $D_{t,i}$ to $D_{t,j}$ along edge $\ipj$.}
\label{DiDjtransportvis}
\end{figure}

\begin{lemma}[Local transportation cost lower bound]
\label{lemma13}
For any transportation cost  $W_\ipj$ on $\ipj$, we have
\be 
2W_\ipj \geq W^p_{i\to j} + W^p_{j\to i}.
\ee
\end{lemma}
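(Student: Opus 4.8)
The plan is to obtain the lower bound on $W_\ipj$ from Kantorovich--Rubinstein duality. Since a unit of probability moved between adjacent vertices costs the edge length and these costs compose along paths, $W_\ipj$ equals the geodesic Wasserstein-$1$ distance between $D_{t,i}$ and $D_{t,j}$, hence
\be
W_\ipj \;=\; \sup_{f}\ \sum_{k\in V(K_n)} f(k)\lb D_{t,i}(k)-D_{t,j}(k)\rb ,
\ee
the supremum running over all $f$ with $|f(k)-f(l)|\le P_{\langle kl\rangle}$ for every pair $k,l$. It is therefore enough to produce one admissible $f$ whose functional value is at least $\tfrac12\lb W^p_{i\to j}+W^p_{j\to i}\rb$, and then double the inequality.

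First I would take the signed distance $f_{0}(k)\coloneqq\tfrac12\lb P_\jpk - P_\ipk\rb$. It is admissible: by the triangle inequality for the geodesic distance, $|f_{0}(k)-f_{0}(l)|\le\tfrac12\lb |P_\jpk-P_{\langle jl\rangle}|+|P_\ipk-P_{\langle il\rangle}|\rb\le P_{\langle kl\rangle}$. Substituting $f_{0}$ and splitting off the two vertices $i,j$, a direct computation using $D_{t,i}(i)=D_{t,j}(j)=1-t$, $D_{t,j}(i)=P^{-2}_\ipj t/d_j$ and $D_{t,i}(j)=P^{-2}_\ipj t/d_i$ shows that the $i$- and $j$-contributions sum to exactly $\tfrac12\lb W^p_{i\to j}+W^p_{j\to i}\rb$, while the remaining vertices contribute $\tfrac{t}{2}R$ with
\be
R \;\coloneqq\; \sum_{k\ne i,j}\lb P_\jpk - P_\ipk\rb\lb \frac{P^{-2}_\ipk}{d_i}-\frac{P^{-2}_\jpk}{d_j}\rb .
\ee
Consequently $2W_\ipj \ge W^p_{i\to j}+W^p_{j\to i}+tR$, and the lemma is reduced to showing $R\ge 0$. (One reaches the same reduction by instead averaging the two one-sided dual bounds coming from $f(k)=P_\ipk$ and $f(k)=P_\jpk$ together with their negatives, so the inequality $R\ge0$ is genuinely the heart of the matter.)

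To prove $R\ge 0$ I would clear denominators and establish the equivalent statement $A\,d_j\ge B\,d_i$, where $A\coloneqq\sum_{k\ne i,j}(P_\jpk-P_\ipk)P^{-2}_\ipk$ and $B\coloneqq\sum_{k\ne i,j}(P_\jpk-P_\ipk)P^{-2}_\jpk$. One immediately has $A-B=\sum_{k\ne i,j}(P_\jpk-P_\ipk)(P^{-2}_\ipk-P^{-2}_\jpk)\ge0$ termwise, each summand being a product of two factors of equal sign. Writing $d_i=P^{-2}_\ipj+\sum_{l\ne i,j}P^{-2}_{\langle il\rangle}$ and similarly for $d_j$, expanding $A d_j-B d_i$, noting that the diagonal $l=k$ terms cancel, and symmetrizing the remaining double sum in the two bulk indices reduces $R\ge0$ to an inequality in the geodesic distances whose only inputs are the triangle bounds $|P_\ipk-P_\jpk|\le P_\ipj$; this is checked directly when $n=3$, where $R$ is a single summand that factors as $(P_\jpk-P_\ipk)\,P^{-2}_\ipj\,(P^{-2}_\ipk-P^{-2}_\jpk)/(d_id_j)\ge0$. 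The general case should follow because the weight $P^{-2}_\ipk/d_i$ is concentrated precisely on the vertices $k$ for which $P_\jpk-P_\ipk$ is largest (small $P_\ipk$ forces $P_\jpk$ close to $P_\ipj$), so the two weighted averages defining $A/d_i$ and $B/d_j$ are ordered correctly. This last step, though elementary, is the one I expect to require the most care. As a consistency check, the constant edge length setting gives $R=0$ and $2W_\ipj=W^p_{i\to j}+W^p_{j\to i}$ exactly, which is the expected equality case.
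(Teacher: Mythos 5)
Your reduction is sound as far as it goes: weak Kantorovich duality does give $W_\ipj \ge \sum_k f(k)\lb D_{t,i}(k)-D_{t,j}(k)\rb$ for any $f$ that is $1$-Lipschitz with respect to the geodesic metric, your $f_0(k)=\tfrac12\lb P_\jpk-P_\ipk\rb$ is admissible, and the bookkeeping showing that the $i,j$ terms give exactly $\tfrac12\lb W^p_{i\to j}+W^p_{j\to i}\rb$ while the remaining vertices give $\tfrac t2 R$ is correct. The fatal problem is that the inequality $R\ge 0$, on which the whole argument then rests, is \emph{false} in general, so the proof cannot be completed along these lines. The heuristic fails because the two weights $P^{-2}_\ipk/d_i$ and $P^{-2}_\jpk/d_j$ carry different normalizations: a single vertex very close to $i$ inflates $d_i$, so that $P^{-2}_\jpk/d_j$ can exceed $P^{-2}_\ipk/d_i$ for many far-away vertices $k$ even though $P_\jpk>P_\ipk$, and each such vertex then contributes negatively to $R$. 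Concretely, take $P_\ipj=1$, one vertex $k_0$ with $P_{\langle ik_0\rangle}=0.5$, $P_{\langle jk_0\rangle}=0.528$, and $N=100$ vertices $k_m$ with $P_{\langle ik_m\rangle}=25$, $P_{\langle jk_m\rangle}=25.528$ (setting, e.g., $P_{\langle k_0k_m\rangle}=25.2$ and $P_{\langle k_ak_b\rangle}=25$, all triangle inequalities hold, so this is a valid length assignment on $K_{103}$). Then $d_i=5.16$, $d_j\approx 4.742$, the far vertices contribute about $-7.1\times10^{-4}$ to $R$ while $k_0$ contributes only about $+5.1\times10^{-4}$, so $R\approx-2\times10^{-4}<0$. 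Hence $f_0$ certifies only $2W_\ipj\ge W^p_{i\to j}+W^p_{j\to i}+tR$, which here is strictly weaker than the claim. Your proposed termwise route to $R\ge0$ also cannot work: after symmetrizing $Ad_j-Bd_i$, the pair terms require that an ordering of differences of distances imply an ordering of their ratios, which the triangle inequality does not give (e.g.\ $P_\ipk=9$, $P_\jpk=10$, $P_{\langle il\rangle}=1.5$, $P_{\langle jl\rangle}=2$).

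None of this disproves the lemma, since the dual supremum runs over all Lipschitz functions and a better, configuration-dependent witness may exist; but a single fixed test function of this symmetric form is not enough. The paper avoids the issue by arguing on the primal side: for any transport plan, a mass of at least $\min(W_i,W_j)$, with $W_i=1-t-P^{-2}_\ipj t/d_j$ and $W_j=1-t-P^{-2}_\ipj t/d_i$, must travel from $i$ to $j$ at unit cost at least $P_\ipj$, while the residual mass $q=|W_i-W_j|$ must be routed through third vertices $k$ at unit cost at least $\tfrac12\lb P_\ipk+P_\jpk\rb\ge\tfrac12 P_\ipj$; adding the two contributions yields $\tfrac12\lb W_i+W_j\rb P_\ipj=\tfrac12\lb W^p_{i\to j}+W^p_{j\to i}\rb$ directly, with no analogue of the sign-indefinite remainder $R$.
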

\begin{proof}
A schematic representation of the Wasserstein transportation distance is in Figure~\ref{DiDjtransportvis}. As in \cite{Gubser:2016htz}, for a small positive parameter $t$, the two probability distributions entering the Wasserstein distance for edge $\ipj$ are defined as
\be
\psi_{t,i}(k) \coloneqq \begin{cases}
1-t \qquad \mrm{if}\ k=i\\
\frac{P_{\ipk}^{-2}}{d_i} t \qquad \mrm{if}\ k\sim i
\end{cases}
\ee
for the probability distribution centered at $i$, and similarly for the one centered at~$j$. Note that in a complete graph any two vertices are neighbors.  When we consider the transportation cost between vertex $i$ and vertex $j$, there must be $W_i\coloneqq 1 - t - P^{-2}_\ipj t/d_j$ amounts of distribution transported out of $i$ and also $j$ should receive $W_j\coloneqq 1 - t - P^{-2}_\ipj t/d_i$. We denote the absolute value of the difference of these two costs by $q$, i.e.
\ba
q&\coloneqq& |W_i - W_j|\\
&=& \left| \frac{1}{d_i} - \frac{1}{d_j} \right| P^{-2}_\ipj  t.
\ea
From the triangle inequality, to achieve the minimum contribution to the Wasserstein distance, the minimum between $W_i$ and $W_j$ must be transported from $i$ to $j$ along the path of geodesic distance $P_\ipj$, giving a contribution to the Wasserstein distance of
\be
\Delta W_1 \coloneqq \lb 1 - t\rb P_\ipj^{-1}  - \max \lb \frac{1}{d_i} , \frac{1}{d_j} \rb P_\ipj^{-1} t.
\ee
Assume wlog that $d_i\leq d_j$, such that $\Delta W_1$ becomes 
\be
\Delta W_1 = \lb 1 - t\rb P_\ipj^{-1} - \frac{P_\ipj^{-1} t}{d_i},
\ee
and $q$ becomes
\be
q = \lb \frac{1}{d_i} - \frac{1}{d_j} \rb P^{-2}_\ipj  t.
\ee
Consider now the probability amount $q$. This amount cannot be moved along the path $P_\ipj$, so it must be moved along the paths $P_\ipk$ or $P_\jpk$ connecting $i$ and $j$ to neighbors $k$ other than $j$ and $i$. Divide $q$ into $q_k$, where $k$ stands for vertices in the graph other than $i$ and $j$. Since $d_i<d_j$, we have $-1/d_i<-1/d_j$ and so $W_j<W_i$, and the probability $q$ must be moved only along the edges out of $i$. This probability goes to the vertices $k$ for which 
\be
\label{eq445}
\frac{P^{-2}_\ipk}{d_i} \leq \frac{P^{-2}_\jpk}{d_j},
\ee
since this probability transfer must decrease the probability distribution centered at $i$ and increase the one centered at $j$. Denote the set of such vertex $k$ by $K$. Then since $d_i<d_j$, from Eq. \eqref{eq445} for these vertices $k\in K$ we must have $P^{-2}_\ipk \leq P^{-2}_\jpk$, that is
\be
\label{eq446}
P_\ipk \geq P_\jpk.
\ee
We thus have that the contribution to the Wasserstein distance of the transfer along edges $\ipk$ with $k\in K$ is
\be
\Delta W_2 \coloneqq \sum_{k \in K} q_k P_\ipk.
\ee
and from Eq. \eqref{eq446} we have
\be
\sum_{k \in K} q_k P_\ipk \geq \sum_{k \in K} q_k P_\jpk.
\ee
We have thus obtained 
\ba
\Delta W_2 &\geq& \sum_{k \in K} \frac{1}{2}q_k \lb P_\ipk + P_\jpk\rb \\
&\geq&  \frac{1}{2}\sum_{k\in K} q_k P_\ipj \\
&=& \frac{1}{2} q P_{\ipj},
\ea
where the last inequality is the triangle inequality. Therefore we have shown
\ba
W_\ipj &\geq& \Delta W_1 + \Delta W_2 \\
&=& \lb 1 - t\rb P_\ipj^{-1} - \lb \frac{1}{d_i} + \frac{1}{d_j} \rb \frac{P_\ipj^{-1} t}{2}.
\ea
This completes the proof.

\end{proof}

We have arrived at the following result.

\begin{theorem}
\label{theorem8}
For the complete graph $K_n$, the constant edge length setting achieves the maximum action, which is equal to $n^2/2$.
\end{theorem}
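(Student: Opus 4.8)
The plan is to upgrade the local transportation-cost inequality of Lemma \ref{lemma13} into an edge-wise upper bound on the curvature $K_\ipj$, sum it over all edges of $K_n$, recognize the resulting quantity as the auxiliary action $S_p$ of Lemma \ref{lemma12}, and finally check that the constant edge length setting saturates every inequality used. Almost all of the analytic work has already been done in Lemma \ref{lemma13}, so the proof should be short.

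First I would rewrite Lemma \ref{lemma13}. Dividing $2W_\ipj \geq W^p_{i\to j} + W^p_{j\to i}$ by $P_\ipj$ and rearranging gives
\be
1 - W_\ipj P^{-1}_\ipj \leq \frac{1}{2}\lsb \lb 1 - W^p_{i\to j}P^{-1}_\ipj \rb + \lb 1 - W^p_{j\to i}P^{-1}_\ipj \rb \rsb.
\ee
Dividing by $t$ and letting $t\to0$ — the limit on the right exists because the definition of $W^p$ gives the explicit, affine-in-$t$ expression $1 - W^p_{i\to j}P^{-1}_\ipj = t\lb 1 + P^{-2}_\ipj/d_i\rb$ — yields
\be
K_\ipj \leq \lim_{t\to0}\frac{1}{2t}\lsb \lb 1 - W^p_{i\to j}P^{-1}_\ipj \rb + \lb 1 - W^p_{j\to i}P^{-1}_\ipj \rb \rsb.
\ee

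Next I would sum this over all edges $\ipj \in E(K_n)$; since the sum is finite the limit passes through it. Each ordered pair of adjacent vertices occurs exactly once, so $\sum_{\ipj\in E(K_n)}\lsb f(i\to j) + f(j\to i)\rsb = \sum_{i\in V(K_n)}\sum_{j\sim i} f(i\to j)$, and the right-hand side becomes precisely $S_p$ as defined in Lemma \ref{lemma12}. Hence $S = \sum_{\ipj\in E(K_n)}K_\ipj \leq S_p$. I would then note that the direct computation behind Lemma \ref{lemma12} never uses constancy of the lengths: it only uses $\sum_{j\sim i}P^{-2}_\ipj/d_i = 1$ and $\deg(i) = n-1$, so $S_p = n^2/2$ for \emph{every} length setting on $K_n$. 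This already gives the unconditional bound $S \leq n^2/2$.

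It remains to check that the constant edge length setting attains this value. With common length $a$, every geodesic distance $P_\ipj$ equals $a$, every $d_i$ equals $(n-1)/a^2$, and $D_{t,i}$, $D_{t,j}$ assign the same mass $t/(n-1)$ to each common neighbor; so the optimal plan simply moves mass $1 - t - t/(n-1)$ straight along $\ipj$, giving $W_\ipj = W^p_{i\to j} = W^p_{j\to i}$. Thus Lemma \ref{lemma13} holds with equality on every edge, and tracing through the argument above gives $S = S_p = n^2/2$ (equivalently $K_\ipj = n/(n-1)$ and $S = \binom{n}{2}\cdot n/(n-1)$). I expect the only delicate points to be the bookkeeping in the double-sum reorganization and making explicit that the computation in Lemma \ref{lemma12} is insensitive to the choice of lengths; there is no genuine obstacle, as the hard estimate is the already-proven Lemma \ref{lemma13}.
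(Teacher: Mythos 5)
Your proposal is correct and follows essentially the same route as the paper's proof: both apply Lemma \ref{lemma13} edge-wise, reorganize the sum over edges into the double sum defining $S_p$ from Lemma \ref{lemma12} to get $S \leq n^2/2$, and then verify saturation for the constant setting via $W_\ipj = W^p_{i\to j} = W^p_{j\to i}$. Your explicit observation that the evaluation $S_p = n^2/2$ uses only $\sum_{j\sim i} P^{-2}_\ipj/d_i = 1$ and $\deg(i)=n-1$, hence holds for every length setting, is a useful clarification of a point the paper leaves implicit, but it is not a genuinely different argument.
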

\begin{proof}
By Lemmas \ref{lemma12} and \ref{lemma13} we have
\ba
S &=& \sum_{\ipj\in E\lb K_n \rb} K_\ipj = \sum_{\ipj\in E\lb K_n \rb} \lim_{t\to 0} \frac{1-W_\ipj P_\ipj^{-1}}{t} \\
&\leq& \sum_{\ipj\in E\lb K_n \rb} \lim_{t\to 0} \frac{1-\lb W^p_{i\to j} + W^p_{j\to i} \rb P_\ipj^{-1}/2}{t}\\
&=& \sum_{i\in V\lb K_n \rb} \sum_{j\sim i} \lim_{t\to 0} \frac{1-W^p_{i\to j } P^{-1}_\ipj }{2t} \\
&=& \frac{n^2}{2}.
\ea
If all edge lengths are equal, then  $d_i=d_j$ for all edges $\ipj\in K_n$, and from the proof of Lemma \ref{lemma13} we have $W_\ipj=2\lb W_{i\to j} + W_{j\to i} \rb$ in this case, so the inequality is saturated.
\end{proof}

\begin{conj}
By direct computation, for complete graphs the action for a perfect matching setting equals $n$, the number of vertices in the graph.  We conjecture that the minimum action for complete graphs $K_n$ with even number of vertices is achieved by the perfect matching setting, in analogy to the tree case.
\end{conj}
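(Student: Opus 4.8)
The plan is to prove the conjecture in two halves: that the perfect matching setting has action exactly $n$, and that $n$ is a lower bound for the action of any length configuration on $K_n$ with $n$ even. For the first half I would compute the action directly in the perfect matching limit. Fix a perfect matching $M$ and consider the limit $\epsilon\to 0$ in which the lengths of the $M$-edges shrink to zero while the complement lengths remain fixed and positive. In $K_n$ the geodesic distance along an $M$-edge $\langle ii'\rangle$ equals its own length $\epsilon$ once $\epsilon$ is small, while for a complement edge $\langle ij\rangle$ (with $i$ matched to $i'$ and $j$ to $j'$) the four distances $P_{\langle ij\rangle},P_{\langle ij'\rangle},P_{\langle i'j\rangle},P_{\langle i'j'\rangle}$ all converge to one common value, the minimum of the corresponding four complement lengths. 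On an $M$-edge the distributions $D_{t,i}$ and $D_{t,i'}$ concentrate on $\{i,i'\}$ with masses $(1-t,t)$ and $(t,1-t)$, so the optimal transport cost is $(1-2t)P_{\langle ii'\rangle}$ up to corrections of order $\epsilon^2 t$, whence $K_{\langle ii'\rangle}\to 2$. On a complement edge, $D_{t,i}$ concentrates on $\{i,i'\}$ and $D_{t,j}$ on $\{j,j'\}$, the optimal coupling of these two-point distributions has a single free parameter, and expanding to first order in $t$ shows that the linear term of $W_{\langle ij\rangle}(t)$ is built from the differences of the four distances above, all of which vanish as $\epsilon\to 0$; hence $K_{\langle ij\rangle}\to 0$. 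Summing, the $n/2$ matching edges contribute $n$ and the finitely many complement edges contribute $0$, so the action is $n$.

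For the lower bound $S\geq n$ I would first use scale invariance of $S$ to normalize $\max_{\langle ij\rangle}P_{\langle ij\rangle}=1$ and work on the resulting compactified configuration space, on which $S$ extends continuously; the infimum is then attained either at an interior point or on the degenerate boundary where some distances vanish. For the bound itself, the idea is to construct for each edge $\langle ij\rangle$ an explicit transport plan from $D_{t,i}$ to $D_{t,j}$ — ship the bulk mass $1-t$ straight along $\langle ij\rangle$, leave in place the common parts $\min(D_{t,i}(k),D_{t,j}(k))$ at every other vertex $k$, and repair the residual neighbor discrepancies using whichever of the direct edges $\langle ik\rangle,\langle jk\rangle$ or the two-step routes through $i$ or $j$ is cheapest (the $K_n$ analogue of the manipulation in the proof of Lemma \ref{lemma13}, run in the opposite direction). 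Any such plan upper-bounds $W_{\langle ij\rangle}$ and hence lower-bounds $K_{\langle ij\rangle}$ by an explicit quantity $\widetilde K_{\langle ij\rangle}$; the goal is then to prove $\sum_{\langle ij\rangle}\widetilde K_{\langle ij\rangle}\geq n$ by reorganizing the sum vertex by vertex into $n$ terms that are each at least $1$, mirroring the way Lemma \ref{lemma8} forces $c_i^2/d_i>1$. Degenerations on the boundary would be handled by contracting a zero-length edge, inducting on $n$, and checking that any action-minimizing degeneration is in fact a perfect matching.

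The hard part is the lower bound. The upper bound of Theorem \ref{theorem8} rests on the sharp per-edge inequality of Lemma \ref{lemma13}, but there is no comparably tight per-edge lower bound on $K_{\langle ij\rangle}$ available: the tree-formula comparison $K_{\langle ij\rangle}\geq K^T_{\langle ij\rangle}$ of Lemma \ref{lemma10} is far too weak on $K_n$ since it discards exactly the alternative transport routes that make $K_{\langle ij\rangle}$ large — indeed $\sum_{\langle ij\rangle}K^T_{\langle ij\rangle}=2n-\sum_i c_i^2/d_i$, which by Lemma \ref{lemma8} is at most $n$ and can be arbitrarily negative. So one genuinely needs either a global variational argument over the whole configuration space or a sufficiently careful family of transport plans whose total cost, summed over all edges, closes up to exactly the constant $n$; making the bookkeeping of the neighbor-repair steps telescope to the sharp value is the delicate point. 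A secondary difficulty is that $n$ is only an infimum, a degenerate limit rather than an attained minimum, so the reduction to the boundary of configuration space and the induction on $n$ must be set up carefully to rule out that some other family of shrinking edges produces a strictly smaller limit.
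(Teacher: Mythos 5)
The statement you are trying to prove is left as a conjecture in the paper: the only part the authors claim is the ``direct computation'' that the perfect matching setting on $K_n$ has action $n$, and your first half establishes exactly that, correctly. On a matching edge $\langle ii'\rangle$ of length $\epsilon$ both distributions concentrate on $\{i,i'\}$ with masses $(1-t,t)$ and $(t,1-t)$, giving $K_{\langle ii'\rangle}\to 2$, and on a complement edge $\langle ij\rangle$ one has $c_i/d_i=O(\epsilon)$, so Lemma \ref{lemma2} already forces $K_{\langle ij\rangle}(t)\leq \frac{t}{P_{\langle ij\rangle}}\lb\frac{c_i}{d_i}+\frac{c_j}{d_j}\rb=O(\epsilon)\,t$, which together with your coupling estimate in the other direction gives $K_{\langle ij\rangle}\to 0$; summing yields $n$. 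This is a sound verification of the computational assertion in the statement.

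The actual content of the conjecture --- that $n$ is the \emph{minimum} (more precisely, the infimum) of the action over all edge length settings on $K_n$ with $n$ even --- is not established by your proposal, and you acknowledge this yourself: the decisive inequality $\sum_{\langle ij\rangle}\widetilde K_{\langle ij\rangle}\geq n$ is never exhibited, only desired. Your diagnosis of why the paper's tools do not suffice is accurate: Lemma \ref{lemma13} bounds the transport cost from \emph{below}, i.e. the curvature from \emph{above}, so it feeds the maximality result of Theorem \ref{theorem8} and is useless here, while the comparison $K_{\langle ij\rangle}\geq K^T_{\langle ij\rangle}$ of Lemma \ref{lemma10} sums on $K_n$ to $2n-\sum_i c_i^2/d_i$, which by Lemma \ref{lemma8} is strictly less than $n$, so that route can never close the bound. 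Two further steps of your plan are shakier than presented: contracting a zero-length edge of $K_n$ does not produce $K_{n-1}$ (one obtains a quotient with parallel edges, and the Lin--Lu--Yau curvature is not obviously continuous under such a contraction), and since the conjectured optimum is attained only in a degenerate limit, the compactification-plus-induction scheme must both phrase the claim as an infimum and exclude other shrinking families, none of which is carried out. So the proposal confirms the value $n$ for the perfect matching setting but leaves the minimality claim open --- which is precisely the status it has in the paper.
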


We have also conducted an analysis of attainable bounds for some simple finite graphs. The results are encapsulated in the two remarks below.

\begin{remark}
For the finite graph formed by a loop with three vertices (i.e. a triangle with edges $a$, $b$, $c$), the minimum action is $18/5$, given by the edge length setting $l(a) : l(b) : l(c) = 1 : 1 : 2$. The maximum action is $9/2$, obtained by the setting $l(a) : l(b) : l(c) = 1 : 1 : 1$.
\end{remark}

\begin{remark}
  For the finite graph formed by a loop with four vertices (i.e. a square with edges $a$, $b$, $c$, $d$), a numerical analysis suggests that the maximum action is $5$, given by the edge length setting $l(a) = l(b) + 1$, $l(d) = 1$, $l(c) = 0$, and $l(b)$ going to infinity (extremely large compared to $l(d)$ and $l(c)$), and the minimum action is $6 - 2\sqrt{2}$, obtained by $l(a) = 1 + \sqrt{2}$, $l(b) = 1 + \sqrt{2}$, $l(c) = 1$ and $l(d) = 1$.
\end{remark}

For both the triangle and the square the minimum action is given by the degenerate edge length setting, i.e. the edge length setting for which the vertices and edges can be embedded into a line.

\section{Solutions to the equations of motion without boundary condition}
\label{sec555}

In this section we will present solutions to the tree equations of motion (tEoM) \eqref{ac_EOM_tree} for $T_q$, without imposing boundary conditions. 
\begin{defn}
	For a given edge setting $M_q$ for $T_q$, we denote by $M_q/\lambda$ the setting that is obtained by dividing every edge length in $M_q$ by a constant $\lambda\in \mathbb{R}$.
\end{defn}

The following lemma was also given in \cite{Gubser:2016htz}.

\begin{lemma}
	\label{ex_scale}
	For a tree $T_q$, if a setting $M_q$ is a solution to the tEOM, then for any $\lambda\in \mathbb{R}$ and $\lambda \neq 0$, $M_q/\lambda$ is also a solution to the tEOM.
\end{lemma}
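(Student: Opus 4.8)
The plan is to exploit the scaling behaviour of the quantities entering the tEoM \eqref{ac_EOM_tree}. First I would record how $c_i$, $d_i$ and the geodesic lengths $P_\ipj$ transform under the replacement $M_q\mapsto M_q/\lambda$. On a tree every geodesic length is the sum of the edge lengths along the unique path connecting the two endpoints, so dividing all edge lengths by $\lambda$ divides every $P_\ipj$ by $\lambda$ as well. Consequently $c_i=\sum_{i'\sim i}P^{-1}_\iip\mapsto \lambda\,c_i$ and $d_i=\sum_{i'\sim i}P^{-2}_\iip\mapsto \lambda^2 d_i$ for every vertex $i$, while $P^{-1}_\ipj\mapsto \lambda\,P^{-1}_\ipj$.

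Next I would substitute these weights into $\mrm{tEoM}_\ipj = P^{-1}_\ipj\!\lb c_i^2/d_i^2 + c_j^2/d_j^2\rb - c_i/d_i - c_j/d_j$. The first term acquires a factor $\lambda\cdot\lambda^{-2}=\lambda^{-1}$, and each of the terms $c_i/d_i$, $c_j/d_j$ acquires a factor $\lambda\cdot\lambda^{-2}=\lambda^{-1}$ as well. Hence $\mrm{tEoM}_\ipj$ is homogeneous of degree $-1$ in the edge lengths: edge by edge, $\mrm{tEoM}_\ipj(M_q/\lambda) = \lambda^{-1}\,\mrm{tEoM}_\ipj(M_q)$. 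Since $\lambda\neq 0$, the right-hand side vanishes for every edge precisely when $\mrm{tEoM}_\ipj(M_q)=0$ for every edge, so $M_q/\lambda$ is a solution whenever $M_q$ is, which is the assertion of Lemma~\ref{ex_scale}.

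There is essentially no obstacle here, so I expect the actual write-up to be a two-line computation once the scaling weights of $c_i$ and $d_i$ are pointed out. The only remark worth making is admissibility: for $M_q/\lambda$ to again be a genuine length function $E(T_q)\to\mathbb{R}_{\geq 0}$ one wants $\lambda>0$, in which case the argument applies verbatim; for $\lambda<0$ the identity $\mrm{tEoM}_\ipj(M_q/\lambda)=\lambda^{-1}\mrm{tEoM}_\ipj(M_q)$ still holds formally, which is all the statement needs.
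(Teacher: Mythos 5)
Your argument is correct and is essentially the paper's proof, which is simply stated as ``immediate by direct computation'': the whole content is the scaling identity $\mrm{tEoM}_\ipj(M_q/\lambda)=\lambda^{-1}\,\mrm{tEoM}_\ipj(M_q)$, which you derive correctly from $c_i\mapsto\lambda c_i$, $d_i\mapsto\lambda^2 d_i$, $P^{-1}_\ipj\mapsto\lambda P^{-1}_\ipj$. (One wording quibble: that identity says the tEoM is homogeneous of degree $+1$ in the edge lengths, not $-1$; the factor you compute is nonetheless right, and your admissibility remark about $\lambda>0$ is a sensible addition.)
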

\begin{proof}
Immediate by direct computation.

\end{proof}

\subsection{Solutions for $T_1$}
\label{sec_t1}
Even though $q$ should be prime in tree $T_q$, a tree with $q=1$ still offers significant insights. Graphically, $T_1$ is just a line with each vertex being connected to two edges. The tEoM \eqref{ac_EOM_tree} on $\ipj \in T_1$ can be written as 
\begin{equation}
\label{ex_EOM_ij}
P_\ipj^{-1} \lb \frac{(P_\iip^{-1}+P_\ipj^{-1})^2}{(P_\iip^{-2}+P_\ipj^{-2})^2} + \frac{(P_\ipj^{-1}+P_\jjp^{-1})^2}{(P_\ipj^{-2}+P_\jjp^{-2})^2} \rb - \frac{P_\iip^{-1}+P_\ipj^{-1}}{P_\iip^{-2}+P_\ipj^{-2}} - \frac{P_\ipj^{-1}+P_\jjp^{-1}}{P_\ipj^{-2}+P_\jjp^{-2}} = 0.
\end{equation}

Theorem \ref{ex_t1_const} below characterizes the solutions to the tEOM when $q=1$. A version of this theorem may be true even for $q>1$; see Conjectures \ref{ex_conj1} and \ref{ex_conj2}.

\begin{theorem}
	\label{ex_t1_const}
	For any setting of $T_1$ as a solution to the tEOM, if there exists a vertex $i \in T_1$ such that both edges connected to $i$ have the same length, then the setting is a constant solution. 
\end{theorem}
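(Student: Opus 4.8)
The plan is to proceed by induction along the line $T_1$, propagating the constant-length condition outward from the vertex $i$ in both directions. Let the edge to the left of $i$ be $\langle i' i\rangle$ and the edge to the right be $\langle i j\rangle$, and suppose $P_{\langle i'i\rangle}=P_{\langle ij\rangle}=a$. The key is to examine the tEoM \eqref{ex_EOM_ij} at the neighbor $j$: its edges are $\langle ij\rangle$ (length $a$) and $\langle jj'\rangle$ (length to be determined, call it $b$). I would show that the tEoM at $j$, viewed as an equation in $b$ with the left length fixed at $a$, forces $b=a$. Iterating this to the right, and the mirror-image argument to the left, gives that every edge in $T_1$ has length $a$, i.e. the setting is the constant solution.

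The core computation is therefore the following one-variable claim: if $x:=P_{\langle ij\rangle}^{-1}$ is fixed and $y:=P_{\langle jj'\rangle}^{-1}$ varies, then
\be
\label{eqonevar}
f(x,y)\coloneqq x\lb \frac{(x+x)^2}{(x^2+x^2)^2} + \frac{(x+y)^2}{(x^2+y^2)^2} \rb - \frac{x+x}{x^2+x^2} - \frac{x+y}{x^2+y^2}
\ee
— wait, more carefully, at vertex $j$ the relevant equation is the tEoM for the edge $\langle ij\rangle$ written from $j$'s side together with $j$'s other edge; the cleanest route is to instead use Lemma \ref{ac_maxmin}. Indeed, Lemma \ref{ac_maxmin} says that if an edge's length and those of its neighbors are not all equal, that edge cannot be the max or min among itself and its neighbors. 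Apply this to the edge $\langle ij\rangle$: its neighbors are $\langle i'i\rangle$ (length $a$) and $\langle jj'\rangle$ (length $b$). If $b\neq a$ then among $\{a,a,b\}$ the value $a$ is simultaneously the max and the min of the triple $\{a,b\}$ only if... rather, $\langle ij\rangle$ has length $a$, which is the max of $\{a,a,b\}$ when $b\le a$ and the min when $b\ge a$; since not all three are equal, Lemma \ref{ac_maxmin} is contradicted unless $b=a$. Hence $b=a$.

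Having forced the edge $\langle jj'\rangle$ to have length $a$, vertex $j$ now plays the role that $i$ did: it is a vertex both of whose edges have length $a$. So the argument repeats verbatim at $j$, then at $j'$, and so on, covering the entire half-line to the right of $i$; the same reasoning run leftward covers the half-line through $i'$. Therefore every edge of $T_1$ has length $a$, which is the constant solution. The main obstacle — and the only real content — is the invocation of Lemma \ref{ac_maxmin}: one must check that its hypothesis ("edge length and adjacent edge lengths not all equal") is exactly the negation of what we want, so that the contradiction is clean; this is immediate here because $T_1$ is $2$-regular, so "adjacent edges" of $\langle ij\rangle$ are precisely $\langle i'i\rangle$ and $\langle jj'\rangle$, and the trichotomy on $b$ versus $a$ exhausts all cases. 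A cosmetic alternative, avoiding Lemma \ref{ac_maxmin}, is to verify directly that \eqref{ex_EOM_ij} with one of the three reciprocal lengths equal to another forces the third to agree, which reduces to showing a certain rational function of $y$ vanishes only at $y=x$; this is a short but slightly tedious algebra check, and I would relegate it to the induction step only if the referee prefers self-containedness over citing Lemma \ref{ac_maxmin}.
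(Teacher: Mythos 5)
Your proposal is correct and follows essentially the same route as the paper: an induction along the line, using Lemma \ref{ac_maxmin} at each step to force the next edge length to equal the previous two, run separately in both directions from the vertex $i$. The only difference is cosmetic --- the paper phrases the step as propagating $P_{N_n}=P_{N_{n+1}}\Rightarrow P_{N_{n+2}}=P_{N_{n+1}}$ along labeled edges $N_k$, which is exactly your iteration.
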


\begin{proof}
	We prove this by induction. Start from an edge denoted $N_0$. Denote the edges on one side of $N_0$ as $N_1$, $N_2$, $N_3,\dots$ in order, and the edges on the other side as $N_{-1}$, $N_{-2}$, $N_{-3},\dots$, as shown in \Cref{ex_pr}. 
	
	First consider the positive side. For the base step, suppose $P_{N_0}=P_{N_1}$. By \Cref{ac_maxmin}, we have $P_{N_2}=P_{N_0}=P_{N_1}$. Similarly, for any non-negative integer $n$, if we assume $P_{N_n}=P_{N_{n+1}}$, by \Cref{ac_maxmin}, we have $P_{N_{n+2}}=P_{N_n}=P_{N_{n+1}}$, which completes the induction step. 
	
	A similar proof follows for the negative side.
\end{proof}

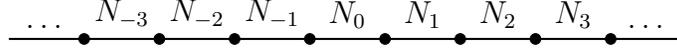
\begin{figure}[htbp!]
	\begin{center}
	\begin{tikzpicture}
	\filldraw[black] (-4,0) circle (2pt) ;
	\filldraw[black] (-3,0) circle (2pt) ;
	\filldraw[black] (-2,0) circle (2pt) ;
	\filldraw[black] (-1,0) circle (2pt) ;
	\filldraw[black] (0,0) circle (2pt) ;
	\filldraw[black] (1,0) circle (2pt) ;
	\filldraw[black] (2,0) circle (2pt) ;
	\filldraw[black] (3,0) circle (2pt) ;
	
	\draw[black,thick] (-5,0) -- (4,0);
	
	\filldraw[black] (-0.5,0) circle (0pt) node[anchor=south] {\normalsize $N_0$};
	\filldraw[black] (-1.5,0) circle (0pt) node[anchor=south] {\normalsize $N_{-1}$};
	\filldraw[black] (-2.5,0) circle (0pt) node[anchor=south] {\normalsize $N_{-2}$};
	\filldraw[black] (-3.5,0) circle (0pt) node[anchor=south] {\normalsize $N_{-3}$};
	\filldraw[black] (-4.5,0) circle (0pt) node[anchor=south] {\normalsize $\dots$};
	\filldraw[black] (0.5,0) circle (0pt) node[anchor=south] {\normalsize $N_{1}$};
	\filldraw[black] (1.5,0) circle (0pt) node[anchor=south] {\normalsize $N_{2}$};
	\filldraw[black] (2.5,0) circle (0pt) node[anchor=south] {\normalsize $N_{3}$};
	\filldraw[black] (3.5,0) circle (0pt) node[anchor=south] {\normalsize $\dots$};
	\end{tikzpicture} 
	\caption{Edge labeling for $q=1$.}
	\label{ex_pr}
	\end{center}
	
\end{figure}

\begin{corollary}
\label{ex_t1_mono}
	For $T_1$, if a setting $M_1$ is a non-constant solution to the tEoM, then its edge lengths are strictly monotonic. 
\end{corollary}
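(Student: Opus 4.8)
The plan is to deduce the corollary directly from Lemma \ref{ac_maxmin} and Theorem \ref{ex_t1_const}, working with the edge labeling $\dots, N_{-1}, N_0, N_1, \dots$ of Figure \ref{ex_pr}. First I would record that a non-constant solution $M_1$ has no two consecutive edges of equal length: if some vertex had both incident edges of the same length, Theorem \ref{ex_t1_const} would force $M_1$ to be the constant solution, contrary to hypothesis. Thus the three lengths $P_{N_{n-1}}, P_{N_n}, P_{N_{n+1}}$ around any edge $N_n$ are pairwise distinct, and in particular not all equal, so Lemma \ref{ac_maxmin} applies to $N_n$ and tells us $N_n$ is neither the longest nor the shortest of this triple. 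Since the triple consists of three distinct numbers, this forces $P_{N_n}$ to be strictly the middle value; equivalently, $P_{N_n}$ lies strictly between $P_{N_{n-1}}$ and $P_{N_{n+1}}$.

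Next I would turn this ``betweenness at every index'' into global monotonicity by a one-step induction on the sign of the increments $\delta_n \coloneqq P_{N_{n+1}} - P_{N_n}$, all of which are nonzero by the previous paragraph. Suppose $\delta_n > 0$, i.e. $P_{N_n} < P_{N_{n+1}}$. Betweenness at index $n+1$ says $P_{N_{n+1}}$ is strictly between $P_{N_n}$ and $P_{N_{n+2}}$; together with $P_{N_n} < P_{N_{n+1}}$ this forces $P_{N_{n+1}} < P_{N_{n+2}}$, i.e. $\delta_{n+1} > 0$. The identical argument with all inequalities reversed handles the case $\delta_n < 0$. Hence all $\delta_n$ share the same sign, so $(P_{N_n})_{n\in\mathbb{Z}}$ is strictly increasing or strictly decreasing, which is exactly the claim. (Alternatively, one could argue by contradiction: a non-monotonic sequence of pairwise-distinct-at-consecutive-indices values must have a strict local maximum or minimum, which is impossible by Lemma \ref{ac_maxmin}; but the incremental version above mirrors the inductive style of Theorem \ref{ex_t1_const}.)

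I do not expect a genuine obstacle here, as the statement is essentially a repackaging of Lemma \ref{ac_maxmin}. The only point requiring care is that the \emph{strictness} in the conclusion really does need the non-constancy hypothesis: it is what licenses the use of Theorem \ref{ex_t1_const} to rule out equal consecutive edges, and hence to treat the three neighboring lengths as distinct when invoking Lemma \ref{ac_maxmin}. In particular, no further computation with the explicit equation \eqref{ex_EOM_ij} is needed.
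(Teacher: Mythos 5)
Your proposal is correct and takes essentially the same route as the paper, whose proof is simply the observation that the corollary follows from \Cref{ex_t1_const} and \Cref{ac_maxmin}; you have merely spelled out the betweenness-at-every-edge step and the sign-propagation induction that the paper leaves implicit. The only slight imprecision is asserting that $P_{N_{n-1}},P_{N_n},P_{N_{n+1}}$ are pairwise distinct before invoking \Cref{ac_maxmin} (at that point you only know consecutive lengths differ), but since the lemma needs only ``not all equal'' and then itself forces $P_{N_n}$ strictly between its neighbors, nothing in the argument is affected.
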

\begin{proof}
	This corollary follows \Cref{ex_t1_const} and \Cref{ac_maxmin}.
\end{proof}

\begin{theorem}
	\label{ex_main_thm1}
	Denote the edges in $T_1$ by $\{N_i\}_{i\in \mathbb{Z}}$, as in \Cref{ex_pr}. For a strictly monotonic setting $M_1$, let $r_k\coloneqq P_{N_k}^{-1}/P_{N_{k-1}}^{-1}$, and suppose wlog $r_0>1$. Then $M_1$ is a solution to the tEoM if and only if for any $k \in \mathbb{Z}$, $P_{N_{k+1}}^{-1}/P_{N_k}^{-1}$ equals either $r_0$ or $(r_0+1)/(r_0-1)$.
\end{theorem}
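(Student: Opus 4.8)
Write $x_k \coloneqq P_{N_k}^{-1}$ and $r_k \coloneqq x_k/x_{k-1}$, so $r_k = P_{N_k}^{-1}/P_{N_{k-1}}^{-1}$ as in the statement. The first step is to rewrite $\mathrm{tEoM}_{N_k}$ purely in terms of the two ratios $r_k,r_{k+1}$. Introduce $\phi(s)\coloneqq \frac{1+s}{1+s^2}$ and $F(s)\coloneqq \phi(s)^2-\phi(s)=\frac{s(1-s^2)}{(1+s^2)^2}$. At the vertex shared by $N_{k-1}$ and $N_k$ one has $c/d=\frac{x_{k-1}+x_k}{x_{k-1}^2+x_k^2}=\frac{1}{x_k}\phi(1/r_k)$, and at the vertex shared by $N_k$ and $N_{k+1}$ one has $c/d=\frac{1}{x_k}\phi(r_{k+1})$. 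Substituting into \eqref{ac_EOM_tree} (equivalently into \eqref{ex_EOM_ij}) and pulling out the common factor $1/x_k$, the equation $\mathrm{tEoM}_{N_k}=0$ becomes $F(1/r_k)+F(r_{k+1})=0$. Since $\phi(1/s)=s\,\phi(s)$ one checks at once that $F(1/s)=-F(s)$, so the tEoM at $N_k$ is precisely the relation $F(r_k)=F(r_{k+1})$.

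Consequently, for a strictly monotonic setting (for which all $r_k>1$ once $r_0>1$), $M_1$ solves the tEoM if and only if $F(r_k)=F(r_{k+1})$ for every $k\in\mathbb{Z}$, and by induction this is equivalent to $F(r_k)=F(r_0)$ for all $k$. It therefore remains only to identify the level set $\{s>1:\ F(s)=F(r_0)\}$.

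For this I would analyze $F$ on $(1,\infty)$. A short computation gives $F'(s)=\frac{s^4-6s^2+1}{(1+s^2)^3}$, whose numerator vanishes on $(1,\infty)$ only at $s=1+\sqrt2$ (the roots of $s^2$ are $3\pm2\sqrt2=(\sqrt2\pm1)^2$). Combined with $F(1)=0$, $F(s)\to 0^-$ as $s\to\infty$, and $F<0$ throughout $(1,\infty)$, this shows $F$ is strictly decreasing on $(1,1+\sqrt2]$ and strictly increasing on $[1+\sqrt2,\infty)$; hence $F(s)=c$ has at most two solutions in $(1,\infty)$. To pin them down, consider the Möbius involution $\sigma(s)\coloneqq\frac{s+1}{s-1}$: one verifies $\sigma\circ\sigma=\mathrm{id}$, that $\sigma$ maps $(1,\infty)$ into itself, and, by direct substitution using $1-\sigma(s)^2=\frac{-4s}{(s-1)^2}$ and $1+\sigma(s)^2=\frac{2(1+s^2)}{(s-1)^2}$, that $F(\sigma(s))=F(s)$. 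Therefore $\{s>1:\ F(s)=F(r_0)\}=\{\,r_0,\ \tfrac{r_0+1}{r_0-1}\,\}$, the two values coinciding exactly when $r_0=1+\sqrt2$ (the fixed point of $\sigma$, which is also the unique critical point of $F$).

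Combining the two halves, $M_1$ is a solution to the tEoM if and only if $r_k\in\{\,r_0,\ \tfrac{r_0+1}{r_0-1}\,\}$ for every $k$; since $P_{N_{k+1}}^{-1}/P_{N_k}^{-1}=r_{k+1}$ runs over exactly $\{r_j\}_{j\in\mathbb{Z}}$ as $k$ runs over $\mathbb{Z}$, this is the asserted characterization. The computations are all elementary; the only genuinely non-mechanical points are (i) noticing that the scale $x_k$ factors out so cleanly that the tEoM collapses to $F(r_k)=F(r_{k+1})$, and (ii) producing the involution $\sigma$ realizing the two-to-one structure of $F$ on $(1,\infty)$ — without it one is reduced to verifying directly that the quartic equation $F(s)=c$ has no further roots in $(1,\infty)$, which is where I expect the main (though still modest) effort to lie.
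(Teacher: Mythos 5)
Your proposal is correct, and its skeleton matches the paper's: both reduce the tEoM on edge $N_k$ to a one-step relation between the consecutive ratios $r_k$ and $r_{k+1}$ and then induct along the line. The difference is in how the one-step relation is analyzed. The paper simply solves the resulting polynomial equation for $r_{k+1}$ in terms of $r_k$, obtaining the four explicit roots $r_k$, $\frac{r_k+1}{r_k-1}$, $\frac{1-r_k}{r_k+1}$, $-\frac{1}{r_k}$, and discards the last two because they give non-positive lengths. You instead repackage the edge equation as the conservation law $F(r_k)=F(r_{k+1})$ with $F(s)=\frac{s(1-s^2)}{(1+s^2)^2}$ (using $F(1/s)=-F(s)$), and then characterize the level sets of $F$ on $(1,\infty)$ via its single interior critical point $s=1+\sqrt2$ together with the Möbius involution $\sigma(s)=\frac{s+1}{s-1}$ satisfying $F\circ\sigma=F$; your identities check out, and note that the discarded roots of the paper are recovered in your framework through $F(-1/s)=F(s)$. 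What your version buys is conceptual clarity: $F$ is an invariant of the recursion, the two admissible branches are exhibited as an orbit of an explicit involution rather than read off from a quartic, and the degenerate case $r_0=1+\sqrt2$ (where the two allowed ratios coincide) becomes visible. What the paper's version buys is brevity and a completely mechanical verification. Both arguments correctly handle the two directions of the equivalence, since any sequence of ratios drawn from $\{r_0,\frac{r_0+1}{r_0-1}\}$ has all ratios greater than $1$ and hence is strictly monotonic and satisfies the edge equations.
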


\begin{proof}
\label{ex_pf_1}
	Since $M_1$ is a solution to the tEoM \eqref{ex_EOM_ij}, for edge $N_1$ we have
\be
P_{N_0}^{-1} \lsb \frac{(P_{N_0}^{-1}+P_{N_1}^{-1})^2}{(P_{N_0}^{-2}+P_{N_1}^{-2})^2} + \frac{(P_{N_0}^{-1}+P_{N_{-1}}^{-1})^2}{(P_{N_0}^{-2}+P_{N_{-1}}^{-2})^2} \rsb - \frac{P_{N_0}^{-1}+P_{N_1}^{-1}}{P_{N_0}^{-2}+P_{N_1}^{-2}} - \frac{P_{N_0}^{-1}+P_{N_{-1}}^{-1}}{P_{N_0}^{-2}+P_{N_{-1}}^{-2}} =0.
\ee
Solving this equation for $r_1$ in terms of $r_0$ we obtain four roots,
\be
r_1 \in \lcb r_0, \frac{r_0+1}{r_0-1}, \frac{1-r_0}{r_0+1},-\frac{1}{r_0} \rcb.
\ee
Since $r_0>1$, only the first two roots give positive edge lengths. Next, solving for $r_2$ in terms of $r_1$ and demanding positivity we obtain
\be
r_2 = r_1\quad \mrm{or} \quad r_2 =\frac{r_1+1}{r_1-1},
\ee
which in terms of $r_0$ is again
\be
r_2 = r_0\quad \mrm{or} \quad r_2 =\frac{r_0+1}{r_0-1}.
\ee
Thus, by induction any $r_k$ must equal $r_0$ or $(r_0+1)/(r_0-1)$, and any sequence of these two ratios gives a monotonic setting.
\end{proof}

Since \Cref{ex_t1_const} and \Cref{ex_main_thm1} cover all cases for any two adjacent edge lengths in a setting, they summarize all types solutions to the tEoM for $T_1$, up to the overall scale factor.

\subsection{Solutions for $T_q$ with odd $q>1$}
This section discuss some solutions to the tEOM on $T_q$ with odd $q>1$. When $q$ is odd, there are an even number of edges connected to each vertex.
\begin{defn}[Half-half setting]
	A half-half setting for $T_q$ with $q$ odd is any setting such that for every vertex $i$ in $T_q$, $(q+1)/2$ edges have the same edge length $\ell_1(i)$, and the other $(q+1)/2$ edges have the same edge length $\ell_2(i)$.
\end{defn}

Note that $\ell_1(i)$ and $\ell_2(i)$ can change from one vertex to another, and by definition a constant solution is a half-half solution.

\begin{theorem}
	\label{ex_HH_const}
	For any half-half setting of $T_q$ that is a solution to the tEoM, if there exists a vertex $i \in V(T_q)$ such that all edges connected to $i$ are equal, then the setting is a constant solution. 
\end{theorem}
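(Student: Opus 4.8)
The plan is a propagation argument driven entirely by Lemma \ref{ac_maxmin}, in the same spirit as the proofs of Theorems \ref{thmmm1} and \ref{ex_t1_const}. Let $a$ denote the common length of the $q+1$ edges incident to the given vertex $i$ (the hypothesis that all these edges are equal is exactly the statement $\ell_1(i)=\ell_2(i)=a$). I will show that every neighbor $j$ of $i$ again has the property that all $q+1$ of its incident edges have length $a$; since $T_q$ is connected, iterating this claim along paths forces every edge of the tree to have length $a$, which is the constant solution.

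For the propagation step, fix a neighbor $j$ of $i$. By the half-half property the edges at $j$ fall into two groups of $(q+1)/2$ edges, of lengths $\ell_1(j)$ and $\ell_2(j)$; since $\ipj$ has length $a$, we may assume $\ell_1(j)=a$, and the goal is to prove $\ell_2(j)=a$. Suppose instead $\ell_2(j)\neq a$. Consider the edges adjacent to $\ipj$: the $q$ remaining edges at $i$ all have length $a$; at $j$, among the $q$ remaining edges, $(q-1)/2$ have length $a$ and $(q+1)/2$ have length $\ell_2(j)$. Because $q$ is odd and $q>1$ we have $(q+1)/2\geq 1$, so at least one adjacent edge has length $\ell_2(j)$, and the adjacent edge lengths are not all equal to $P_\ipj=a$. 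But then $P_\ipj=a$ is $\leq$ every adjacent edge length (if $\ell_2(j)>a$) or $\geq$ every adjacent edge length (if $\ell_2(j)<a$); in either case $\ipj$ attains the minimum or the maximum among its neighbors while not being equal to all of them, contradicting Lemma \ref{ac_maxmin} since the tEoM holds at $\ipj$. Hence $\ell_2(j)=a$, and all edges at $j$ have length $a$, completing the step.

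With the propagation step in hand, the induction is immediate: the set of vertices all of whose incident edges have length $a$ contains $i$ and is closed under passing to neighbors, hence equals $V(T_q)$; in particular every edge of $T_q$ has length $a$, so the setting is the constant solution.

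I do not anticipate a real obstacle, since the whole argument is a bookkeeping exercise on top of Lemma \ref{ac_maxmin}. The one point to handle carefully is the split of the edges at $j$: one must use that $\ipj$ itself lies in the length-$a$ group, so that the other group still consists of $(q+1)/2\geq 1$ edges all of length $\ell_2(j)$ --- this is what guarantees the ``not all equal'' hypothesis of Lemma \ref{ac_maxmin} whenever $\ell_2(j)\neq a$. The degenerate possibility $\ell_1(j)=\ell_2(j)$ is not an issue; it simply means the step's conclusion already holds at $j$.
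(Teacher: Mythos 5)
Your proposal is correct and follows essentially the same route as the paper, which simply invokes Lemma \ref{ac_maxmin} together with induction along the tree (in the spirit of Theorem \ref{ex_t1_const}); you merely spell out the propagation step in more detail. In particular, your careful accounting of the half-half split at the neighbor $j$ (so that $\ipj$ is a weak extremum among its adjacent edges whenever $\ell_2(j)\neq a$) is exactly the way Lemma \ref{ac_maxmin} is meant to be applied here.
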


\begin{proof}
	Similar to Theorem \ref{ex_t1_const}, this theorem follows immediately from \Cref{ac_maxmin} and induction.

\end{proof}

\begin{theorem}
\label{thmmmm13}
	For $T_q$ with odd q, a non-constant half-half setting is a solution to the tEoM if and only if every strictly monotonic path in the setting is a solution to the tEoM as a setting for $T_1$.
\end{theorem}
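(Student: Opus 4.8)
The plan is to reduce, edge by edge, the $T_q$ equation of motion for a half-half setting to the $T_1$ equation of motion \eqref{ex_EOM_ij} along a strictly monotonic path, the point being that the half-half structure makes the ratios $c_i/d_i$ and $c_i^2/d_i^2$ depend only on the two edge-length values at $i$ and not on $q$. Concretely, at a vertex $i$ with $(q+1)/2$ incident edges of length $a=\ell_1(i)$ and $(q+1)/2$ of length $b=\ell_2(i)$ one has $c_i=\tfrac{q+1}{2}(a^{-1}+b^{-1})$, $d_i=\tfrac{q+1}{2}(a^{-2}+b^{-2})$, so the factor $\tfrac{q+1}{2}$ cancels and $c_i/d_i=(a^{-1}+b^{-1})/(a^{-2}+b^{-2})$, which is exactly the value at a degree-$2$ vertex of $T_1$ whose two incident edges have lengths $a,b$. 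Hence, for any edge $\ipj\in E(T_q)$ both of whose endpoints carry two distinct length values, $\mrm{tEoM}_\ipj$ from \eqref{ac_EOM_tree} evaluated in the half-half $T_q$ setting is the same rational expression as $\mrm{tEoM}_\ipj$ from \eqref{ex_EOM_ij} evaluated in the $T_1$ setting given by the strictly monotonic path through $\ipj$ (at each endpoint the two path-edges are precisely the two lengths present there, one of them being $\ipj$ itself). This identity holds regardless of whether either configuration is on shell, and it is the whole content of the theorem; what remains is to identify which edges lie on such paths.

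For the forward implication, assume $M_q$ is a non-constant half-half solution. By \Cref{ex_HH_const} no vertex has all its incident edges equal, so $\ell_1(i)\neq\ell_2(i)$ at every vertex; and, arguing exactly as in the proof of \Cref{ex_t1_const} by means of \Cref{ac_maxmin}, each edge is the shorter edge at one of its endpoints and the longer edge at the other. Consequently every edge extends — increasing on both sides — to a bi-infinite strictly monotonic path, so every edge of $T_q$ sits on one. For any such path $\pi$, the identity above applies at both endpoints of every edge of $\pi$, so the $T_1$ equations along $\pi$ agree term-by-term with the $T_q$ equations of $M_q$, all of which vanish; thus $\pi$ is a $T_1$ solution.

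For the converse, assume every strictly monotonic path in the non-constant half-half setting $M_q$ is a $T_1$ solution. Once one knows that $M_q$ still decomposes into bi-infinite strictly monotonic paths covering every edge (equivalently: $\ell_1(i)\neq\ell_2(i)$ at every vertex, and no edge is a local minimum or maximum of the length function), each edge $\ipj$ lies on such a path $\pi$, and by the identity $\mrm{tEoM}_\ipj$ in $M_q$ equals $\mrm{tEoM}_\ipj$ along the $T_1$ solution $\pi$, which is $0$; as $\ipj$ is arbitrary, $M_q$ solves the tEoM. To extract the decomposition from the hypothesis alone one argues that a would-be local-extremum edge makes the rearranged equation of motion (the form appearing in the proof of \Cref{ac_maxmin}) strictly signed, and that an all-equal vertex is incompatible — via \Cref{ex_main_thm1} and \Cref{ex_t1_mono} applied to the monotonic paths — with all those paths being $T_1$ solutions.

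The main obstacle is precisely this last structural point in the converse: in the forward direction \Cref{ex_HH_const} together with \Cref{ac_maxmin} hand us the short/long dichotomy for free, but in the converse we have only the path hypothesis, and turning it into ``every edge lies on a bi-infinite strictly monotonic path'' — i.e. ruling out local extrema and all-equal vertices without presupposing that $M_q$ is already a solution — is the delicate step. The algebraic reduction of the first paragraph, and the forward implication, are otherwise routine.
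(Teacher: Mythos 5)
Your proposal is correct and follows essentially the same route as the paper: the whole content of the paper's proof is your first-paragraph identity, namely that in a half-half setting the factor $(q+1)/2$ cancels from $c_i$ and $d_i$, so the $T_q$ equation of motion on an edge $\ipj$ coincides with the $T_1$ equation \eqref{ex_EOM_ij} for the triple $P_\iip$, $P_\ipj$, $P_\jjp$. The structural point you flag in the converse (showing every edge lies on a strictly monotonic path without presupposing $M_q$ is on shell) is not resolved in the paper either---it simply invokes Theorem \ref{ex_HH_const} to assert that $P_\iip$, $P_\ipj$, $P_\jjp$ are distinct and strictly monotonic and then reads the identity in both directions---so your write-up is, if anything, more careful than the paper's own one-paragraph argument.
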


We call this type of solution a \textit{half-half solution}, and it can have infinitely many shapes. \Cref{ex_HH} gives an example of a half-half solution for $T_3$. Note that half-half solutions are more general than the solutions presented in Section 5 of \cite{Gubser:2016htz}, since for those solutions the ratio of two distinct edge lengths adjacent to the same vertex is always the same factor $\beta$, whereas the half-half solutions can have multiple values for the ratio.

\begin{proof}
Let's now prove Theorem \ref{thmmmm13}. Suppose $M_q$ is a non-constant half-half setting for $T_q$ and let $n\coloneqq(q+1)/2$. Given any edge $\ij \in E\lb T_q\rb$, $n$ of the edges connected to $i$ have length $P_\ipj$, and denote the length of the other $n$ edges by $P_\iip$. Similarly, we denote the length of the $n$ edges adjacent to $j$ by $P_\jjp$. By Theorem \ref{ex_HH_const}, we have $P_\iip \neq P_\ij$, $P_\jjp \neq P_\ij$ and $P_\iip$, $P_\ij$, $P_\jjp$ are strictly monotonic. The tEOM for $\ipj$ both in $T_q$ and in the $T_1$ given by a path containing $\iip$, $\ipj$, and $\jjp$ is
\be
P_\ipj^{-1} \lb \frac{(P_\iip^{-1}+P_\ipj^{-1})^2}{(P_\iip^{-2}+P_\ipj^{-2})^2} + \frac{(P_\ipj^{-1}+P_\jjp^{-1})^2}{(P_\ipj^{-2}+P_\jjp^{-2})^2} \rb - \frac{P_\iip^{-1}+P_\ipj^{-1}}{P_\iip^{-2}+P_\ipj^{-2}} - \frac{P_\ipj^{-1}+P_\jjp^{-1}}{P_\ipj^{-2}+P_\jjp^{-2}}=0,
\ee
which proves the theorem.
\end{proof}

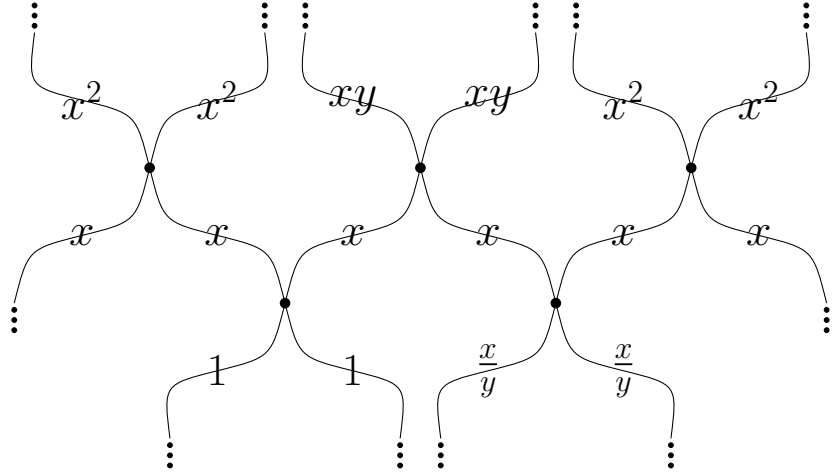
\begin{figure}[htbp!]
	\begin{center}

        \begin{tikzpicture}[scale=0.9]

	\draw (0,1) .. controls (0.2,0.2) ..(1,0) .. controls (1.8,-0.2)  ..(2,-1);
	\draw (0,1) .. controls (-0.2,0.2) ..(-1,0) .. controls (-1.8,-0.2)  ..(-2,-1);

	\draw (4,1) .. controls (4.2,0.2) ..(5,0) .. controls (5.8,-0.2)  ..(6,-1);
	\draw (4,1) .. controls (3.8,0.2) ..(3,0) .. controls (2.2,-0.2)  ..(2,-1);
	
	\draw (-4,1) .. controls (-3.8,0.2) ..(-3,0) .. controls (-2.2,-0.2)  ..(-2,-1);
	\draw (-4,1) .. controls (-4.2,0.2) ..(-5,0)  .. controls (-5.8,-0.2)  ..(-6,-1);
	
	\draw (0,1) .. controls (0.2,1.8) ..(1,2) .. controls (1.8,2.2)  ..(1.7,3);
	\draw (0,1) .. controls (-0.2,1.8) ..(-1,2) .. controls (-1.8,2.2)  ..(-1.7,3);
	
	\draw (4,1) .. controls (4.2,1.8) ..(5,2) .. controls (5.8,2.2)  ..(5.7,3);
	\draw (4,1) .. controls (3.8,1.8) ..(3,2) .. controls (2.2,2.2)  ..(2.3,3);
	
	\draw (-4,1) .. controls (-3.8,1.8) ..(-3,2) .. controls (-2.2,2.2)  ..(-2.3,3);
	\draw (-4,1) .. controls (-4.2,1.8) ..(-5,2)  .. controls (-5.8,2.2)  ..(-5.7,3);
	
	\draw (0.3,-3) .. controls(0.2,-2.2) .. (1,-2) .. controls(1.8,-1.8) .. (2,-1) .. controls(2.2,-1.8) .. (3,-2) .. controls(3.8,-2.2) .. (3.7,-3);
	\draw (-3.7,-3) .. controls(-3.8,-2.2) .. (-3,-2) .. controls(-2.2,-1.8) .. (-2,-1) .. controls(-1.8,-1.8) .. (-1,-2) .. controls(-0.2,-2.2) .. (-0.3,-3);
	
	\filldraw[black] (0,1) circle (2pt) ;
	\filldraw[black] (4,1) circle (2pt);
	\filldraw[black] (-4,1) circle (2pt) ;
	\filldraw[black] (2,-1) circle (2pt) ;
    \filldraw[black] (-2,-1) circle (2pt) ;
    
    \filldraw[black] (1,0) circle (0pt) node[anchor=center] {\Large $x$};
     \filldraw[black] (3,0) circle (0pt) node[anchor=center] {\Large $x$};
     \filldraw[black] (5,0) circle (0pt) node[anchor=center] {\Large $x$};
      \filldraw[black] (-1,0) circle (0pt) node[anchor=center] {\Large $x$};
      \filldraw[black] (-3,0) circle (0pt) node[anchor=center] {\Large $x$};
      \filldraw[black] (-5,0) circle (0pt) node[anchor=center] {\Large $x$};
      
      \filldraw[black] (1,2) circle (0pt) node[anchor=center] {\Large $xy$};
      \filldraw[black] (3,2) circle (0pt) node[anchor=center] {\Large $x^2$};
      \filldraw[black] (5,2) circle (0pt) node[anchor=center] {\Large $x^2$};
      \filldraw[black] (-1,2) circle (0pt) node[anchor=center] {\Large $xy$};
      \filldraw[black] (-3,2) circle (0pt) node[anchor=center] {\Large $x^2$};
      \filldraw[black] (-5,2) circle (0pt) node[anchor=center] {\Large $x^2$};
   
      \filldraw[black] (-1,-2) circle (0pt) node[anchor=center] {\Large $1$};   
      \filldraw[black] (-3,-2) circle (0pt) node[anchor=center] {\Large $1$}; 
      
      \filldraw[black] (1,-2) circle (0pt) node[anchor=center] {\Large $\frac{x}{y}$};   
      \filldraw[black] (3,-2) circle (0pt) node[anchor=center] {\Large $\frac{x}{y}$};   
      
      	\filldraw[black] (6,-1.1) circle (1pt) ;
      	\filldraw[black] (6,-1.25) circle (1pt) ;
      	\filldraw[black] (6,-1.4) circle (1pt) ;
      	
      	\filldraw[black] (-6,-1.1) circle (1pt) ;
      	\filldraw[black] (-6,-1.25) circle (1pt) ;
      	\filldraw[black] (-6,-1.4) circle (1pt) ;
      	
      	\filldraw[black] (-5.7,3.1) circle (1pt) ;
      	\filldraw[black] (-5.7,3.25) circle (1pt) ;
      	\filldraw[black] (-5.7,3.4) circle (1pt) ;
      	
      	\filldraw[black] (-2.3,3.1) circle (1pt) ;
      	\filldraw[black] (-2.3,3.25) circle (1pt) ;
      	\filldraw[black] (-2.3,3.4) circle (1pt) ;
      	
        \filldraw[black] (-1.7,3.1) circle (1pt) ;
      	\filldraw[black] (-1.7,3.25) circle (1pt) ;
      	\filldraw[black] (-1.7,3.4) circle (1pt) ;
      	
      	\filldraw[black] (5.7,3.1) circle (1pt) ;
      	\filldraw[black] (5.7,3.25) circle (1pt) ;
      	\filldraw[black] (5.7,3.4) circle (1pt) ;
      	
      	\filldraw[black] (2.3,3.1) circle (1pt) ;
      	\filldraw[black] (2.3,3.25) circle (1pt) ;
      	\filldraw[black] (2.3,3.4) circle (1pt) ;
      	
      	\filldraw[black] (1.7,3.1) circle (1pt) ;
      	\filldraw[black] (1.7,3.25) circle (1pt) ;
      	\filldraw[black] (1.7,3.4) circle (1pt) ;
      	
      	\filldraw[black] (0.3,-3.1) circle (1pt) ;
      	\filldraw[black] (0.3,-3.25) circle (1pt) ;
      	\filldraw[black] (0.3,- 3.4) circle (1pt) ;
      	
      	\filldraw[black] (-0.3,-3.1) circle (1pt) ;
      	\filldraw[black] (-0.3,-3.25) circle (1pt) ;
      	\filldraw[black] (-0.3,- 3.4) circle (1pt) ;
      	
      	\filldraw[black] (3.7,-3.1) circle (1pt) ;
      	\filldraw[black] (3.7,-3.25) circle (1pt) ;
      	\filldraw[black] (3.7,- 3.4) circle (1pt) ;
      	
      	\filldraw[black] (-3.7,-3.1) circle (1pt) ;
      	\filldraw[black] (-3.7,-3.25) circle (1pt) ;
      	\filldraw[black] (-3.7,- 3.4) circle (1pt) ;
      	
	\end{tikzpicture} 
	\caption{A half-half solution for $T_3$. Note $x \in \mathbb{R}$ and $x>1$, $y=(x+1)/(x-1)$. \label{ex_HH}}
	\end{center}
	
\end{figure}
\begin{defn}
	A geometric half-half solution is a half-half solution in which all strictly monotonic paths are in the same geometric progression. 
\end{defn} 

\begin{remark}
The tree curvature of any geometric half-half solution for $T_q$ is constant and equal to
\be
\label{Kipj5p7}
K_\ipj = \frac{3-q}{1+q}-\frac{2r}{1+r^2},
\ee
where $r$ is the ratio of the progression. The $c_i^2/d_i$ ratio of any vertex in a geometric half-half solution for $T_q$ is constant and equal to
\be
\frac{c_i^2}{d_i} = \frac{(q+1) (1+r)^2}{2 \left(1+r^2\right)}.
\ee
Expression \eqref{Kipj5p7} is always positive for $q=1$. This is consistent with the intuition that the curvature being negative comes from the branching, which does not happen when the graph is a line. Note also that Eq. \eqref{Kipj5p7} does not apply for $q=2$.
\end{remark}

\begin{remark}
\label{thm13}
	There exist solutions to the tEoM that are not half-half solutions. 
\end{remark}
	Such solutions can be constructed by the following algorithm, for tree $T_q$ with $q$~odd:
	\begin{enumerate}
		\item Start from a vertex $i$ such that $m$ edges connected to it have length $1$, another $m$ have length $x$, $s$ edges have length $\alpha$, and another $s$ have length $\alpha y$, where $x$, $y$, $\alpha$ are positive real numbers and $m$, $s$ are non-negative integers such that $2(m+s)=q+1$;
		\item Given a vertex $i'\sim i$, if $P_\iip=1$, let the edges connected to $i'$ have the length of the edges connected to $i$ divided by $x$; if $P_\iip=x$, let the edges connected to $i'$ have the length of the edges connected to $i$ multiplied by $x$; if $P_\iip=\alpha$, let the edges connected to $i'$ have the length of the edges connected to $i$ divided by $y$; if $P_\iip=\alpha y$, let the edges connected to $i'$ have the length of the edges connected to $i$ multiplied by $y$;
		\item Using the rules above, construct the setting iteratively for all the other vertices. 
	\end{enumerate}
	
\begin{figure}[H]
	\begin{center}

        \begin{tikzpicture}
            \draw[black,thick] (-3,0) -- (7,0);
            \draw[black,thick] (-3,2) -- (1,2);
            \draw[black,line width=0.7mm] (-1,4) -- (-1,-2);
            \draw[black,line width=0.7mm] (3,2) -- (3,-2);
            
            \filldraw[black] (-1,0) circle (2pt) ;
            \filldraw[black] (-1,2) circle (2pt) ;
            \filldraw[black] (3,0) circle (2pt) ;
            
            \filldraw[black] (-2,0) circle (0pt) node[anchor=south] {\large $1$};
            \filldraw[black] (1,0) circle (0pt) node[anchor=south] {\large $x$};
            \filldraw[black] (5,0) circle (0pt) node[anchor=south] {\large $x^2$};
            
            \filldraw[black] (-2,2) circle (0pt) node[anchor=south] {\large $y$};
            \filldraw[black] (1,2) circle (0pt) node[anchor=south] {\large $xy$};
            
            \filldraw[black] (-1,-1) circle (0pt) node[anchor=west] {\large $\alpha$};
            \filldraw[black] (-1,1) circle (0pt) node[anchor=west] {\large $\alpha y$};
            \filldraw[black] (-1,3) circle (0pt) node[anchor=west] {\large $\alpha y^2$};
            
            \filldraw[black] (3,-1) circle (0pt) node[anchor=west] {\large $\alpha x$};
            \filldraw[black] (3,1) circle (0pt) node[anchor=west] {\large $\alpha x y$};
            
            \filldraw[black] (-1,4.1) circle (1pt) ;
            \filldraw[black] (-1,4.25) circle (1pt) ;
            \filldraw[black] (-1,4.4) circle (1pt) ;
            
            \filldraw[black] (-1,-2.1) circle (1pt) ;
            \filldraw[black] (-1,-2.25) circle (1pt) ;
            \filldraw[black] (-1,-2.4) circle (1pt) ;
            
            \filldraw[black] (3,2.1) circle (1pt) ;
            \filldraw[black] (3,2.25) circle (1pt) ;
            \filldraw[black] (3,2.4) circle (1pt) ;
            
            \filldraw[black] (3,-2.1) circle (1pt) ;
            \filldraw[black] (3,-2.25) circle (1pt) ;
            \filldraw[black] (3,-2.4) circle (1pt) ;
            
            \filldraw[black] (-3.1,0) circle (1pt) ;
            \filldraw[black] (-3.25,0) circle (1pt) ;
            \filldraw[black] (-3.4,0) circle (1pt) ;
            
            \filldraw[black] (7.1,0) circle (1pt) ;
            \filldraw[black] (7.25,0) circle (1pt) ;
            \filldraw[black] (7.4,0) circle (1pt) ;
            
            \filldraw[black] (-3.1,2) circle (1pt) ;
            \filldraw[black] (-3.25,2) circle (1pt) ;
            \filldraw[black] (-3.4,2) circle (1pt) ;
            
            \filldraw[black] (1.1,2) circle (1pt) ;
            \filldraw[black] (1.25,2) circle (1pt) ;
            \filldraw[black] (1.4,2) circle (1pt) ;
            
            \end{tikzpicture}
	\end{center}
	\caption{\label{ex_general} Building the setting iteratively when $q=3$, according to the algorithm in the proof of Theorem \ref{thm13}. The starting vertex is lower left; moving along the $m$ edges multiplies by $x$ or $x^{-1}$, and along the $s$ edges multiplies by $y$ or $y^{-1}$.}
\end{figure}
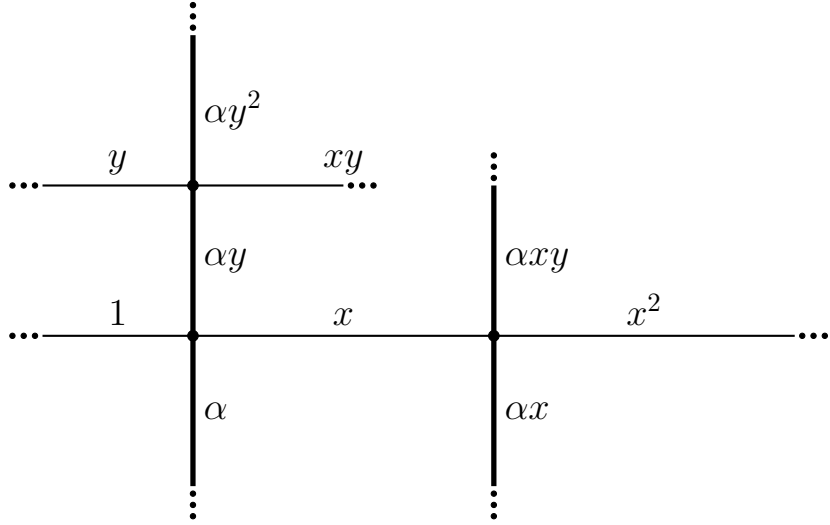
	
	For example, when $m=s=1$, i.e. $q=3$, we obtain the setting shown in \Cref{ex_general}. 
	
	Note that this edge length assignment has $x\leftrightarrow y$, $\alpha\leftrightarrow 1/\alpha$, $s\leftrightarrow m$ symmetry. Because the tEOM are invariant under a constant scaling, a priori there are only two expressions for the equations of motion in terms of the variables $\alpha$, $x$, and $y$, namely on the $m$ edges and on the $s$ edges. However, it turns out for this particular choice of edge lengths that the two equations of motion coincide up to an overall multiplicative factor, and furthermore the factor in the equation of motion which gives nontrivial positive solutions is independent of $m$ and $s$. The equation of motion therefore is equivalent to
	\be
	\alpha  y (y+1)\left(x^2+1\right) = x (x+1)\left(y^2+1\right),
	\ee
	which is solved by
	\be
	\label{eq510}
	x=\frac{y^2+1\pm\sqrt{\lb y^2+1\rb^2-4 \alpha  y (y+1) \lsb \alpha y +(\alpha -1) y^2-1\rsb}}{2 y \lb\alpha +(\alpha-1) y\rb -2}.
	\ee
	Eq. \eqref{eq510} admits positive solutions for $x$ and $y$. For instance, choosing $\alpha=1/4$ and $y=3$ gives
    \be
    x=\frac{1}{7} \left(\sqrt{46}-5\right),
    \ee
    which is positive.

\begin{remark}
In general, the solutions to the tree equations of motions can contain more than two geometric progressions. This adds many possibilities to the forms the solutions can take.
\end{remark}

We conclude by giving two conjectures on the existence of solutions. The intuition behind these conjectures is that if there exists a vertex $i$ such that all edges around it have equal lengths, then evolving the tEOM away from this vertex will lead to edge lengths that are no longer positive after a finite number of steps, unless the edge lengths are all constant.

\begin{conj}
	\label{ex_conj1}
	For $T_q$, if a setting $M_q$ is a solution to the tEoM with positive edge lengths for all edges and there exists a vertex $i \in T_q$ such that all edges connected to $i$ are equal, then $M_q$ is a constant solution.
\end{conj}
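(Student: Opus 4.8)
The plan is to recast the tEoM as an equation for a single scalar field on the vertices, and then to propagate the all-equal condition outward from $i$. For each vertex $v$ set $x_v \coloneqq c_v/d_v$. Since $c_v,d_v$ depend only on the lengths of the edges at $v$, one sees at once that $x_v$ is the $P^{-2}$-weighted mean of those lengths, so $\min_{k\sim v}P_{\langle vk\rangle}\le x_v\le \max_{k\sim v}P_{\langle vk\rangle}$ with equality iff all edges at $v$ are equal. Rearranging \eqref{ac_EOM_tree}, the tEoM at an edge $\ipj$ is exactly
\[
P_\ipj=\frac{x_i^2+x_j^2}{x_i+x_j},
\]
which always lies between $x_i$ and $x_j$, and equals either of them iff $x_i=x_j$. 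Two consequences I will use: (a) if $x_u=x_v$ for adjacent $u,v$ then the joining edge has that exact length; and (b) a discrete maximum principle — $x$ has no strict local extremum, for if $x_v>x_k$ for every neighbour $k$ then every $P_{\langle vk\rangle}<x_v$, contradicting that $x_v$ is an average of those lengths.

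For the base step, the hypothesis gives $x_i=a$ with $a$ the common length at $i$; and for each neighbour $j$ the edge $\langle ij\rangle$ has length $a$, so $a=(a^2+x_j^2)/(a+x_j)$ forces $x_j=a$. I would then attempt a strong induction on the distance from $i$, the assertion at level $D$ being: every edge within distance $D$ has length $a$, and every vertex at distance $\le D$ has $x=a$. The inductive step $D\to D+1$ reduces to showing that a frontier vertex $v$ (distance $D$, already known to satisfy $x_v=a$, joined to an interior neighbour $u_0$ with $x_{u_0}=a$ and hence $P_{\langle vu_0\rangle}=a$ by (a)) is in fact locally constant with all edges $=a$.

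To analyse that step, write $x_v=a$ as the $P^{-2}$-weighted mean and substitute $P_{\langle vk\rangle}=(a^2+x_k^2)/(a+x_k)$ for the neighbours $k\ne u_0$ (all of which lie at distance $D+1$ in the tree); the $k=u_0$ term drops and one obtains the ``straddle identity''
\[
\sum_{\substack{k\sim v\\ k\ne u_0}}\frac{x_k(a+x_k)(x_k-a)}{(a^2+x_k^2)^2}=0 .
\]
Each summand has the sign of $x_k-a$, so either all the remaining neighbours also satisfy $x_k=a$ — whence all edges at $v$ equal $a$ by the displayed formula and the induction advances — or $v$ has a neighbour with $x_k>a$ and one with $x_k<a$. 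In the latter case, from a neighbour $k$ with $x_k>a$ one generates a one-sided infinite path $k\to k'\to\cdots$ along which $x$ is strictly increasing: the edge back to $v$ has length $(a^2+x_k^2)/(a+x_k)<x_k$, so, $x_k$ being the weighted mean of the lengths at $k$, some forward edge exceeds $x_k$, which by the formula forces the next $x$-value to be strictly larger, and one iterates; symmetrically there is a one-sided path along which $x$ strictly decreases. This is precisely the ``evolving the tEoM away from $i$'' picture underlying the conjecture.

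The hard part — and the reason this is only a conjecture — is ruling out the second alternative: on an \emph{infinite} tree, neither monotone path by itself contradicts positivity of the edge lengths (the increasing one is simply unbounded, the decreasing one converges to some $L\ge 0$). Closing the argument seems to require either (i) sharpening the local recursion to show that the branching at the vertices of such a path forces, after finitely many steps, some edge length to become non-positive — the natural template being the $T_1$ classification of Theorem \ref{ex_main_thm1}, where consecutive inverse-length ratios are pinned to one of two explicit values, so that the ratio~$1$ at $i$ admits no nonconstant continuation — or (ii) producing a global identity: summing the trace-reversed equation \eqref{eq39} over balls $B_R(i)$ and combining Lemma \ref{cauchy} with the straddle identities to force $c_v^2/d_v=q+1$, hence local constancy, at every vertex. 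Converting ``$x$ straddles $a$'' into an outright contradiction is the crux I expect to be the main obstacle.
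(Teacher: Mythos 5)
You are attempting to prove a statement that the paper itself leaves open: this is Conjecture \ref{ex_conj1}, with no proof given there, only the stated intuition that ``evolving the tEoM away from the vertex'' should eventually force a non-positive edge length. Your partial reductions are sound: with $x_v \coloneqq c_v/d_v$, the quantity $x_v$ is indeed the $P^{-2}$-weighted mean of the edge lengths at $v$; the tEoM \eqref{ac_EOM_tree} is exactly $P_\ipj=(x_i^2+x_j^2)/(x_i+x_j)$, which lies strictly between $x_i$ and $x_j$ unless they coincide; the base step forcing $x_j=a$ at all neighbours of $i$ is correct; the ``straddle identity'' is a correct computation (it uses only that the tEoM holds at the edges out of the frontier vertex, which is guaranteed); and the dichotomy ``all neighbours have $x_k=a$, or the neighbours straddle $a$ and one can grow a strictly $x$-increasing and a strictly $x$-decreasing one-sided path'' is valid. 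For $q=1$ your framework even recovers Theorem \ref{ex_t1_const}, since at a degree-two frontier vertex the straddle identity has a single summand, which must vanish.

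But, as you say yourself, the crux is not closed, and that crux is the entire content of the conjecture: nothing you prove excludes the straddling alternative on an infinite tree, where both monotone tails can coexist with positive edge lengths as far as your local analysis sees (note also that the increasing tail need not be unbounded --- a strictly increasing sequence may converge --- though this does not affect your logic). Neither of your suggested repairs is straightforward. Route (i) does not transfer from $T_1$: in Theorem \ref{ex_main_thm1} each step is pinned to one of two explicit ratios because the EoM at an edge of $T_1$ involves only three lengths, whereas at a branching vertex the straddle identity is a single scalar constraint on $q$ free neighbour values $x_k$, so the continuation is a positive-dimensional family rather than a finite list, and no finite-step contradiction with positivity is visible. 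Route (ii) also stalls: summing \eqref{eq39} over balls $B_R(i)$ gives, for a solution, that the boundary sum exactly cancels the non-positive bulk sum of Lemma \ref{cauchy}, but on $T_q$ the number of boundary vertices of $B_R(i)$ is comparable to the total number of vertices, so the boundary contribution cannot be made negligible without quantitative control you do not have; one cannot conclude $c_v^2/d_v=q+1$ vertexwise. So what you have is a clean reformulation plus an honest identification of the open step, not a proof --- which is consistent with the paper keeping the statement as a conjecture.
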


\begin{conj}
   \label{ex_conj2}
	There exists no non-constant solution to the tEoM for $T_2$ with all edge lengths positive.
\end{conj}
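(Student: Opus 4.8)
The plan is to rewrite the tree equation of motion \eqref{ac_EOM_tree} in terms of a single scalar attached to each vertex, namely $\rho_i\coloneqq c_i/d_i$. Because $c_i^2/d_i^2=\rho_i^2$, the tEoM on an edge $\ipj$ becomes
\be
P_\ipj=\frac{\rho_i^2+\rho_j^2}{\rho_i+\rho_j},
\ee
so $P_\ipj$ is the weighted mean of $\rho_i$ and $\rho_j$ with weights $\rho_i,\rho_j$, hence lies strictly between $\rho_i$ and $\rho_j$ whenever $\rho_i\neq\rho_j$. On the other hand $\rho_i=\big(\sum_{k\sim i}P_{ik}^{-2}P_{ik}\big)/\big(\sum_{k\sim i}P_{ik}^{-2}\big)$ is the weighted mean of the edge lengths around $i$ with positive weights $P_{ik}^{-2}$, so $\rho_i$ lies strictly between the shortest and the longest edge at $i$ unless all edges at $i$ are equal. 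A positive solution on $T_2$ is therefore the same data as a positive function $\rho$ on the vertices satisfying, at each vertex, one nonlinear ``mean value'' identity; the constant edge length setting is $\rho\equiv\text{const}$, and Conjecture \ref{ex_conj2} becomes a Liouville-type statement for this equation on the $3$-regular tree.

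The second step, which I expect to be routine, is a maximum principle: if $\sup_i\rho_i$ is attained at a vertex $i^\ast$, the solution is constant. Indeed every edge at $i^\ast$ satisfies $P_{i^\ast k}\le\max(\rho_{i^\ast},\rho_k)=\rho_{i^\ast}$, so the weighted mean $\rho_{i^\ast}$ of the $P_{i^\ast k}$ can equal $\rho_{i^\ast}$ only if $P_{i^\ast k}=\rho_{i^\ast}$ for every $k$; then $\rho_{i^\ast}=(\rho_{i^\ast}^2+\rho_k^2)/(\rho_{i^\ast}+\rho_k)$ forces $\rho_k=\rho_{i^\ast}$ for every neighbour, and since $\rho_{i^\ast}$ is the global supremum each such $k$ again attains it; iterating over the connected tree gives $\rho\equiv\rho_{i^\ast}$, hence all edge lengths equal. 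The dual argument handles $\inf_i\rho_i$. (This step works for every $T_q$, consistent with the existence of non-constant half-half solutions, where $\sup\rho$ is approached only at infinity.) Combined with Lemma \ref{ac_maxmin} and the strict-betweenness facts above, one reduces Conjecture \ref{ex_conj2} to the ``transient'' case: a hypothetical non-constant positive solution in which neither $\sup\rho$ nor $\inf\rho$ is attained, and along which one can follow bi-infinite paths where $\rho$ --- and with it the successive edge lengths --- is strictly monotone.

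Excluding the transient case is the heart of the matter, and the place where $q=2$ must really enter. The non-constant solutions for odd $q$ exist because the $q+1$ edges at each vertex split into two equal families aligned with an ``uphill'' and a ``downhill'' direction, which makes the tEoM along a monotone path reduce to the $T_1$ equation classified in Theorem \ref{ex_main_thm1} (this is essentially the content of Theorem \ref{thmmmm13}); for $q=2$, where $q+1=3$ is odd, no such even split exists, and the plan is to show that this makes a monotone path, together with the subtrees hanging off it, over-determined. Concretely, at a path vertex $i$ with neighbours of $\rho$-values $\rho_{\text{lo}}<\rho_i<\rho_{\text{hi}}$ and a third hanging neighbour, the consistency identity at $i$ plus Lemma \ref{ac_maxmin} pin down the hanging edge length, hence the hanging $\rho$-value, in terms of the path data; one then must show no choice of the resulting hanging subtrees closes up to a global positive solution unless everything collapses to the constant. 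As a sanity check this already happens for the geometric ansatz: if every vertex of $T_2$ carries its three edge lengths in geometric progression with a common ratio $r$, a short computation gives $\rho_i=\ell_i\,r^2/(r^2-r+1)$ at the vertex scaled by $\ell_i$, and matching $P_\ipj=(\rho_i^2+\rho_j^2)/(\rho_i+\rho_j)$ across every edge forces $r=1$ --- in sharp contrast with $T_q$ for odd $q$. An alternative route is a blow-up argument: pick $i_n$ with $\rho_{i_n}\to\sup\rho$, use scale invariance (Lemma \ref{ex_scale}) and the local bounds above to extract a limiting positive solution on $T_2$ whose $\rho$-function attains its maximum at the root, and invoke the maximum principle.

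I expect the main obstacle to be exactly this last step. Once the behaviour of local maxima and minima of $\rho$ is understood, there is no further purely local rigidity left, so a genuinely global input tied to $q=2$ is required, and both routes sketched above need a quantitative ingredient absent from the present paper: either a proof that the two-dimensional algebraic recursion governing the hanging subtrees has no positive orbit, or an estimate bounding how fast $\rho$ can change between adjacent vertices, sharp enough to promote ``arbitrarily large nearly constant regions'' to global constancy (and, along the way, to rule out stationary ``plateaus'' of $\rho$). Supplying that estimate is where the real difficulty lies.
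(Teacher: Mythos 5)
The statement you address is stated in the paper only as Conjecture \ref{ex_conj2}: the paper offers no proof, just the heuristic that evolving the tEoM away from a vertex with all incident lengths equal eventually forces a non-positive length. So there is no paper proof to compare against, and the question is whether your argument settles the conjecture. It does not. Your preliminary reductions are correct and genuinely useful: on a solution the tEoM is equivalent to $P_\ipj=(\rho_i^2+\rho_j^2)/(\rho_i+\rho_j)$ with $\rho_i=c_i/d_i$; this quantity lies strictly between $\rho_i$ and $\rho_j$ when they differ; $\rho_i$ is a weighted mean of the incident edge lengths; and the resulting maximum principle (if $\sup_i\rho_i$ or $\inf_i\rho_i$ is attained, the solution is constant) checks out, and is consistent with the half-half solutions for odd $q$, where the extrema are approached only at infinity.

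However, the heart of the conjecture is exactly the case your argument leaves open: a non-constant positive solution on $T_2$ in which neither extremum of $\rho$ is attained. Nothing in the proposal rules this out; both routes you sketch (the algebraic recursion along a monotone path with its hanging subtrees, and the blow-up/compactness argument) are programs rather than proofs, and you say yourself that the required quantitative ingredient is missing. There are also two gaps inside the reduction itself: non-attainment of $\sup\rho$ and $\inf\rho$ does not by itself produce a bi-infinite path along which $\rho$ and the successive edge lengths are strictly monotone (Lemma \ref{ac_maxmin} only forbids an edge from being a strict local extremum among its neighbors, and the max-increasing and min-decreasing paths it yields need not be the same path), and the blow-up route needs a uniform two-sided bound on ratios of adjacent edge lengths to extract a limit solution with positive lengths and an attained maximum --- precisely the estimate you acknowledge is absent. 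The geometric-progression computation is a fine sanity check but excludes only a one-parameter family. In short, what you have is a correct reformulation in terms of $\rho$ plus a rigidity statement for attained extrema; Conjecture \ref{ex_conj2} itself remains open.
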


\end{document}